\documentclass[final,hidelinks,onefignum,onetabnum]{siamart251216}

\usepackage{amsmath,amssymb}
\usepackage{mathtools}
\usepackage{bm}
\usepackage{graphicx}
\usepackage{booktabs}
\usepackage{enumitem}
\usepackage{algorithm}
\usepackage{algpseudocode}
\usepackage{xcolor}

\newcommand{\RR}{\mathbb{R}}
\newcommand{\NN}{\mathbb{N}}
\newcommand{\EE}{\mathbb{E}}

\newcommand{\Var}{\mathrm{Var}}
\newcommand{\Cov}{\mathrm{Cov}}
\newcommand{\DKL}{\mathcal{D}_{\mathrm{KL}}}
\newcommand{\tr}{\operatorname{Tr}}
\newcommand{\diag}{\mathrm{diag}}
\newcommand{\divop}{\operatorname{div}}

\newcommand{\gauss}{\gamma}

\newcommand{\mA}{\bm{A}}
\newcommand{\mb}{\bm{b}}

\newcommand{\mC}{\bm{C}}
\newcommand{\mB}{\bm{B}}
\newcommand{\hmC}{\widehat{\bm{C}}}
\newcommand{\mV}{\bm{V}}
\newcommand{\hmV}{\widehat{\bm{V}}}

\newcommand{\mP}{\bm{P}}
\newcommand{\hmP}{\widehat{\bm{P}}}
\newcommand{\mS}{\bm{S}}

\newcommand{\mQ}{\bm{Q}}

\newcommand{\mI}{\bm{I}}

\newcommand{\mR}{\bm{R}}

\newcommand{\mG}{\bm{G}}

\newcommand{\mPhi}{\bm{\Phi}}

\newcommand{\mM}{\bm{M}}

\newcommand{\St}{\mathrm{St}}
\newcommand{\Gr}{\mathrm{Gr}}
\newcommand{\vtheta}{\bm{\theta}^{d}}

\newcommand{\whvtheta}{\widehat{\bm{\theta}}^{d}}
\newcommand{\vbeta}{\bm{\theta}^{r}}
\newcommand{\hvbeta}{\widehat{\bm{\theta}}^{r}}

\newcommand{\vn}{\bm{n}}

\newcommand{\vx}{\bm{x}}

\newcommand{\vy}{\bm{y}}

\newcommand{\vz}{\bm{z}}
\newcommand{\vZ}{\bm{Z}}
\newcommand{\vu}{\bm{u}}
\newcommand{\vU}{\bm{U}}
\newcommand{\vv}{\bm{v}}
\newcommand{\vw}{\bm{w}}
\newcommand{\vW}{\bm{W}}

\newcommand{\scZ}{\mathcal{Z}}
\newcommand{\scL}{\mathcal{L}}
\newcommand{\scS}{\mathcal{S}}
\newcommand{\scF}{\mathcal{F}}

\newsiamremark{remark}{Remark}
\newsiamremark{assumption}{Assumption}
\crefname{remark}{Remark}{Remarks}
\crefname{assumption}{Assumption}{Assumptions}
\let\origremarkenv\remark
\renewcommand{\remark}{\crefalias{theorem}{remark}\origremarkenv}
\let\origassumptionenv\assumption
\renewcommand{\assumption}{\crefalias{theorem}{assumption}\origassumptionenv}
\crefname{appendix}{Appendix}{Appendices}
\Crefname{appendix}{Appendix}{Appendices}

\headers{Copula Active Subspaces}{J.~Chen and P.~J.~van Leeuwen}

\title{Copula Active Subspaces I: A Score-Covariance Method for Reduced-Order Non-Gaussian Density Estimation\thanks{This is Part~I of a two-part work; the companion paper \emph{Copula Active Subspaces II: Error Decomposition, A Posteriori Estimation, and Sharpness of the Bounds} develops the finite-sample analysis. \funding{This work was supported by the Consortium for Advanced Data Assimilation Research and Education (CADRE), funded by NOAA under grant 2007893.}}}

\author{%
  Joshua Chen\thanks{Department of Atmospheric Science, Colorado State University, Fort Collins, CO 80523 (\email{Joshua.Chen@colostate.edu}).}
  \and
  Peter Jan van Leeuwen\thanks{Department of Atmospheric Science, Colorado State University, Fort Collins, CO 80523 (\email{Peter.vanLeeuwen@colostate.edu}).}
}

\newcommand{\va}{\bm{a}}
\newcommand{\ve}{\bm{e}}

\hypersetup{hypertexnames=false}
\let\CASsection\section

\begin{document}
\maketitle

\begin{abstract}
In Bayesian inference problems with non-Gaussian observation noise, the expected KL divergence of the posterior is bounded by the KL divergence of the noise density estimate, and gradient-based samplers need that density and its gradient evaluable pointwise without an inner solve. We propose Copula Active Subspaces (\textsc{Cas}) to represent this noise density. A componentwise rank transform isolates the noise law's dependence in its copula, and a rank-$r$ reduction keeps only the directions along which that dependence varies. These directions are the leading eigenvectors of the copula score covariance $\mC := \Cov_{\pi_{\vZ}}(\nabla\log c^{Z})$, and their span is the copula active subspace. The matrix $\mC$ vanishes when the coordinates are independent, so these are directions of dependence. The covariance of the data need not identify them. Applied to the copula, the Gaussian-reference KL divergence bound of certified dimension reduction holds with the explicit constant $\tfrac12$ and is minimized over all rank-$r$ reductions by exactly this eigenspace. We give a diagnostic of the remaining truncation error that is computable from the samples alone. Hermite score matching then yields the reduced log-density and its gradient in closed form, with the truncation orders and the Stage-2 regularization constants chosen on validation samples. The reduction replaces a $d$-dimensional density estimation problem by an $r$-dimensional one. On a $d=20$ noise law and a Bayesian inference problem with that noise, \textsc{Cas} lowers noise KL divergence more than fivefold and posterior KL divergence more than sevenfold against Gaussian-copula, product-of-marginals, and PCA-subspace baselines, and lowers noise KL divergence by factors of about $3.5$ and $2.7$ on two further $d=20$ examples.
\end{abstract}

\begin{keywords}
copula, active subspace, score matching, dimension reduction, Hermite polynomials, Bayesian inference problems, non-Gaussian observation error
\end{keywords}

\begin{MSCcodes}
62G07, 62H05, 62H12, 62B10, 60E15, 62F15
\end{MSCcodes}

\section{Introduction}
\label{sec:intro}

In a Bayesian inference problem, a prior on an unknown inference variable $\vx$ is updated by the information that an observation $\vy = \mathcal{G}(\vx) + \vn$ provides about it, where the noise $\vn \in \RR^d$ has law $\pi_{\vn}$. Non-Gaussian noise is common in remote sensing \cite{BormannBauerGeer2011}, data assimilation \cite{JanjicBormannBocquet2018survey, HuVanLeeuwenGeer2024, FowlerVanLeeuwen2013}, and image reconstruction \cite{KaipioSomersaloBook}. A Gaussian likelihood then has the wrong noise law for every choice of its covariance, since the Gaussian family does not contain $\pi_{\vn}$. If $\pi_{\vn}$ in the likelihood is replaced with a candidate $\widehat\pi_{\vn}$ and $\widehat\pi_{X|Y}$ denotes the resulting posterior, the KL divergence chain rule gives
\begin{equation}
  \EE_{\vy\sim \pi_Y}\!\left[\DKL\!\left(\pi_{X|Y}(\cdot\mid\vy) \,\big\|\, \widehat\pi_{X|Y}(\cdot\mid\vy)\right)\right]
  \;\le\; \DKL(\pi_{\vn}\,\|\,\widehat\pi_{\vn}),
  \label{eq:posterior-noise-kl}
\end{equation}
so the average posterior divergence is bounded by that of the noise estimate.\footnote{Apply the KL divergence chain rule to the joint laws $\pi(\vx,\vy) = \pi_X(\vx)\,\pi_{\vn}(\vy - \mathcal G(\vx))$ and $\widehat\pi(\vx,\vy) = \pi_X(\vx)\,\widehat\pi_{\vn}(\vy - \mathcal G(\vx))$ in both orders. In the order $(\vx,\vy)$ it gives $\DKL(\pi\,\|\,\widehat\pi) = \DKL(\pi_{\vn}\,\|\,\widehat\pi_{\vn})$, by translation invariance of the KL divergence; in the order $(\vy,\vx)$ it gives the left side of \eqref{eq:posterior-noise-kl} plus the nonnegative term $\DKL(\pi_Y\,\|\,\widehat\pi_Y)$, and dropping this term gives the inequality. Both noise laws are probability densities and $\widehat\pi_{\vn}$ is positive and bounded, so that $\widehat\pi_Y$ and the conditionals $\widehat\pi_{X|Y}$ are well defined. The estimator of \Cref{sec:estimator} is a positive, bounded probability density.} Gradient-based posterior samplers \cite{RobertsTweedie1996, Neal2011HMC, CotterRobertsStuartWhite2013} require the unnormalized log-density $\log\pi_{\vn}$ and its gradient $\nabla\log\pi_{\vn}$ in closed form, and we seek a tractable estimator of both from $N$ i.i.d.\ samples of the noise in $\RR^d$. Where the noise cannot be observed directly, those samples are approximate samples based on residuals $\vn^{(i)} = \vy^{(i)} - \mathcal G(\vx^\star)$ formed at a nominal variable value $\vx^\star$, such as a maximum-likelihood estimate or an analysis state. We take the sample as given throughout, and \Cref{sec:limitations} discusses the requirements on its construction. The paper therefore estimates $\pi_{\vn}$ itself, and \Cref{sec:exp-bip} checks the transfer from noise divergence to posterior divergence.

In practice the noise model is commonly simplified. Univariate noise modeling treats the coordinates as independent. A Gaussian likelihood keeps second moments only, and a Gaussian copula admits arbitrary marginals with dependence represented by a correlation matrix. Each simplification omits dependence structure, which enlarges the noise divergence on the right of \eqref{eq:posterior-noise-kl}. The $d$ univariate marginals of $\pi_{\vn}$ are classical one-dimensional estimation problems (\Cref{sec:estimator}), and the dependence among the coordinates is the $d$-dimensional problem. A componentwise rank transform separates the two. It maps the samples to coordinates $\vZ$ with standard-Gaussian marginals, so that everything left to estimate is the dependence, described by the copula density $c^{Z} := \pi_{\vZ}/\gauss_d$, where $\gauss_k$ denotes the standard Gaussian density on $\RR^k$. A rank-$r$ dimension reduction then confines that dependence to a subspace. When $c^{Z}$ varies only along $r \ll d$ directions, which is the regime considered here, those directions span an \emph{active subspace} of the copula, and the reduction leaves an $r$-dimensional estimation problem. We identify these directions from samples and quantify the error of the reduction when $c^{Z}$ also varies outside the subspace.

\paragraph{Related work} The rank transform is the standard tool of the semiparametric copula literature \cite{Sklar1959, Nelsen2006book, LiuLaffertyWasserman2009nonparanormal}, which treats marginals and dependence separately. The reduction adapts the active-subspace principle \cite{ConstantineDowWang2014, Constantine2015book, ZahmConstantinePrieurMarzouk2020, BigoniMarzoukPrieurZahm2022}, extraction of low-dimensional structure from a gradient covariance, from functions and log-densities to the copula score. Unlike likelihood-informed subspaces \cite{CuiMartinMarzoukSolonenSpantini2014, ZahmCuiLawSpantiniMarzouk2022, CuiTongZahm2022}, which locate posterior-relevant directions from forward-map gradients, the reduction here uses only residual samples. Score ratio matching \cite{BaptistaBrennanMarzouk2025} learns the score ratio $\nabla\log(\pi/\rho)$ of a density $\pi$ to a reference $\rho$ with a neural network and reduces dimension along the leading eigenvectors of the second moment of the learned ratio. For a standard Gaussian $\rho$ this second moment is the diagnostic matrix of certified dimension reduction \cite{ZahmCuiLawSpantiniMarzouk2022}. Stage~1 of \textsc{Cas} is an instance of score ratio matching. With $\pi = \pi_{\vZ}$ and $\rho = \gauss_d$, the score ratio is the copula score, its second moment is the copula score covariance $\mC$ of \Cref{sec:method}, and the Hermite score-matching objective \eqref{eq:hyv-quadratic} equals, up to an additive constant, the score-ratio-matching objective of \cite[Thm.~3.2]{BaptistaBrennanMarzouk2025}. \textsc{Cas} adds two elements to this setting. The componentwise rank transform comes first, after which $\gauss_d$ matches every marginal and $\mC$ vanishes under independence. The model is a Hermite polynomial class with a Tikhonov penalty, on which the objective is quadratic and is minimized in closed form without training a network. Among non-Gaussian density-reduction methods, transport-based inference and lazy maps \cite{MarzoukMoselhyParnoSpantini2016, MorrisonBaptistaMarzouk2017, BaptistaMarzoukZahm2023representation, BrennanBigoniZahmSpantiniMarzouk2020} are the closest relatives. Ours is a Hermite polynomial counterpart, estimated by regularized linear solves, with a closed-form likelihood.

\paragraph{Contributions}
\begin{enumerate}[leftmargin=*,topsep=2pt,itemsep=2pt]
\item \emph{A two-stage estimator of the copula.} A componentwise rank transform isolates the noise law's dependence in its copula density, which we reduce to a low-dimensional subspace and estimate by Hermite copula score matching, returning the reduced log-density and its gradient in closed form for gradient-based posterior samplers.

\item \emph{Bound-optimal dimension reduction with a closed-form diagnostic.} Applied to the copula, the Gaussian-reference KL divergence bound of certified dimension reduction \cite{ZahmCuiLawSpantiniMarzouk2022} has the explicit constant $\tfrac12$, and its minimizer over rank-$r$ subspaces is the leading eigenspace of the copula score covariance. We give a closed-form, sample-computable diagnostic of the rank-$r$ truncation error.

\item \emph{Empirical validation.} \textsc{Cas} improves noise KL divergence more than fivefold and posterior KL divergence more than sevenfold over Gaussian-copula, product-of-marginals, and PCA-subspace baselines, on a $d=20$ non-Gaussian copula example problem and a Bayesian inference problem with the same noise, and the gains persist across three structurally distinct noise laws.
\end{enumerate}

\Cref{sec:method} constructs the reduction and its bound-optimal subspace, and \Cref{sec:estimator} gives the two-stage estimator. \Cref{sec:theory} analyzes the error at the estimated subspace, including the closed-form diagnostic of the rank-$r$ truncation error. The experiments of \Cref{sec:experiments} use three example problems.

\section{Copula dimension reduction}
\label{sec:method}

This section concerns the true law $\pi_{\vn}$, with no estimation. The copula-score covariance $\mC$ equals zero exactly when the coordinates are independent, and its top-$r$ eigenspace minimizes a KL divergence bound over all rank-$r$ reductions (\Cref{thm:kl-trunc}, \Cref{cor:optimal-subspace}).

Given continuous univariate marginal CDFs $F_1,\ldots,F_d: \RR\to(0,1)$ and i.i.d.\ samples $\{\vn^{(k)}\}_{k=1}^N \subset \RR^d$ from $\pi_{\vn}$, the componentwise rank transform
\begin{equation}
  \scZ: \RR^d \to \RR^d, \qquad
  \scZ(\vx)_i := \Phi^{-1}\bigl(F_i(x_i)\bigr), \quad i=1,\ldots,d,
  \label{eq:rank-transform}
\end{equation}
sends $\pi_{\vn}$ to a measure $\pi_{\vZ}$ with standard-Gaussian marginals and the same copula \cite{Sklar1959, Nelsen2006book}, separating marginal estimation from dependence estimation, and we say the transform \emph{rank-Gaussianizes} the marginals. The ranks of the sample, and therefore the rank-transformed sample, are unchanged under coordinatewise strictly increasing transformations of the residuals \cite{LiuLaffertyWasserman2009nonparanormal}. The rank-$r$ reduction uses the factorization $\gauss_d = \gauss_r\otimes\gauss_{d-r}$ under any splitting of $\RR^d$ into an $r$-dimensional subspace and its orthogonal complement, and the estimator uses the orthonormality of the Hermite polynomial basis of \Cref{sec:method-core} in $L^2(\gauss_d)$.

In $\vZ$-coordinates, writing $c^{Z} := \pi_{\vZ}/\gauss_d$ for the copula density on standard-Gaussian marginals,\footnote{Sklar's theorem writes $\pi_{\vZ}(\vz) = c(\Phi(z_1),\dots,\Phi(z_d))\,\gauss_d(\vz)$ with $c$ a density on $[0,1]^d$, on the uniform arguments $\Phi(z_i)$. We work in the rank-Gaussianized coordinates and write $c^{Z}(\vz) := \pi_{\vZ}(\vz)/\gauss_d(\vz) = c(\Phi(\vz))$, calling $c^{Z}$ the copula, $\scL = \log c^{Z}$ the log-copula, and $\scS = \nabla_{\vz}\scL = \nabla\log\pi_{\vZ} + \vz$ the copula score. The reduced analogue $c^{U}_r(\vu) := \pi_{\vU}(\vu)/\gauss_r(\vu)$ on $\vu = \mV_r^\top\vz$ is likewise a density relative to $\gauss_r$. The marginals of $\pi_{\vU}$ need not be standard Gaussian, so $c^{U}_r$ is a copula density only in this extended sense.} we define the log-copula $\scL: \RR^d\to\RR$, the copula score $\scS: \RR^d\to\RR^d$, and the symmetric positive semidefinite copula-score covariance $\mC\in\RR^{d\times d}$:
\begin{equation}
\begin{gathered}
  \scL(\vz) := \log c^{Z}(\vz),\qquad
  \scS(\vz) := \nabla \scL(\vz),\\[2pt]
  \mC := \Cov_{\pi_{\vZ}}\!\bigl(\scS(\vZ)\bigr) \;=\; \EE_{\pi_{\vZ}}\bigl[\scS(\vZ)\otimes\scS(\vZ)\bigr],
\end{gathered}
  \label{eq:L-T-C-def}
\end{equation}
under the regularity assumption below; the two forms in \eqref{eq:L-T-C-def} coincide because the copula score is centered ($\EE_{\pi_{\vZ}}[\scS] = 0$, as $\nabla\log\pi_{\vZ}$ integrates to zero, \Cref{app:regularity}, and $\pi_{\vZ}$ has zero-mean marginals). The score vanishes a.e.\ under independence ($\scL\equiv 0$), and conversely $\mC = 0$ forces $\scS = 0$ a.e., so $\mC$ is zero exactly under independence. For a unit direction $\vv\in\RR^d$, $\vv^\top\mC\vv = \EE_{\pi_{\vZ}}[(\partial_{\vv}\scL)^2]$ is the mean squared derivative of the log-copula along $\vv$, equal to zero if and only if $\partial_{\vv}\scL = 0$ a.e., that is, if and only if $c^{Z}$ does not vary along $\vv$. $\mC$ is an average gradient outer product, and the same object appears in the diagnostic matrices of gradient- and likelihood-informed dimension reduction \cite{ZahmConstantinePrieurMarzouk2020, BigoniMarzoukPrieurZahm2022, ChenArnaudBaptistaZahm2024} (which use forward-map and log-likelihood gradients) and in the feature-learning mechanism of \cite{RadhakrishnanBeagleholePanditBelkin2024}.

\begin{assumption}[Regularity of the log-copula]
\label{ass:tempered}
The copula density $c$ of $\pi_{\vn}$ is strictly positive on $(0,1)^d$, the log-copula $\scL = \log(c\circ\Phi)$ is continuously differentiable on $\RR^d$, and:
\textup{(T1)} \emph{pointwise sub-Gaussian density bound}, $\pi_{\vZ}(\vz) \le K_0(1+\|\vz\|^{2m})\gauss_d(\vz/\sigma_0)$ for some $K_0\ge 1$, $\sigma_0>1$, $m\ge 0$; \textup{(T2)} \emph{polynomial growth of the score}, $\|\nabla\scL(\vz)\|_2 \le \alpha' + \beta'\|\vz\|^q$ for some $\alpha', \beta' \ge 0$, $q\in\NN$.
\end{assumption}

\noindent (T1)--(T2) give $\scL\in L^2(\gauss_d)$ and $\scS \in L^2(\pi_{\vZ};\RR^d)$, and in particular $\tr(\mC)<\infty$ (\Cref{app:regularity}), as needed for \eqref{eq:L-T-C-def} to be finite. The density bound (T1) excludes heavier-than-Gaussian tails. \Cref{app:regularity} collects the technical consequences of the assumption.

A rank-$r$ dimension reduction approximates $\pi_{\vn}$ within the family whose copula factor depends on only $r$ projected coordinates. It is specified by a matrix $\mV_r$ with $r$ orthonormal columns, $\mV_r^\top\mV_r = \mI_r$, together with a density on the \emph{active subspace} $\mathrm{span}(\mV_r)$; we write $\St(r,\RR^d)$ for the set of such matrices and $\Gr(r,\RR^d)$ for the set of rank-$r$ subspaces of $\RR^d$. The rank $r$ trades approximation accuracy against the statistical error of estimating the reduced density from $N$ samples. At $r=0$ the family is the product of marginals $\prod_i\pi_i$, the univariate noise model common in practice. At $r=d$ it contains the exact joint law, and a full $d$-dimensional density must be estimated. \Cref{sec:reduced-form} finds the best density on a given $\mV_r$, and \Cref{sec:kl-trunc} shows that the top-$r$ eigenspace of $\mC$ minimizes a KL divergence upper bound over $\mV_r$.

\subsection{Reduced model form}
\label{sec:reduced-form}

The \emph{reduced-model space} for a given $\mV_r \in \St(r,\RR^d)$, rank transform $\scZ$, and marginals $\{\pi_i\}$ is
\begin{equation}
  \mathcal M(\mV_r, \scZ, \{\pi_i\}) \;:=\; \Bigl\{\, \pi : \pi(\vx) = c'\bigl(\mV_r^\top \scZ(\vx)\bigr)\textstyle\prod_{i=1}^d \pi_i(x_i)\,\Bigr\},
  \label{eq:reduced-form}
\end{equation}
where $c'$ ranges over probability densities on $\RR^r$ with respect to $\gauss_r$. Here and throughout, $c$ with a superscript denotes a copula-type density relative to the Gaussian reference of matching dimension, the prime marks a generic member of the family \eqref{eq:reduced-form}, and hats mark estimates; the optimal member $c_r^{U}$ is identified below. Set the \emph{active coordinates} $\vU := \mV_r^\top\vZ$ and the \emph{complement coordinates} $\vW := \mV_\perp^\top\vZ$, where $\mV_\perp \in \St(d-r,\RR^d)$ is an orthonormal completion of $\mV_r$, $[\mV_r\ \mV_\perp] \in \mathrm{O}(d)$. Under the Gaussian reference the two blocks are independent, $\gauss_d = \gauss_r\otimes\gauss_{d-r}$ with $\vU\sim\gauss_r$ and $\vW\sim\gauss_{d-r}$; write $\pi_{\vW\mid\vU}$ for the conditional law of $\vW$ given $\vU$ under $\pi_{\vZ}$. In these coordinates a member of $\mathcal M(\mV_r, \scZ, \{\pi_i\})$ has $\vZ$-density $c'(\vu)\,\gauss_d(\vz)$: it places an arbitrary marginal $c'\gauss_r$ on the active coordinates and fixes $\vW\mid\vU \sim \gauss_{d-r}$, independent of $\vU$, on the complement. Splitting the KL divergence from $\pi_{\vZ}$ into the divergence of the $\vU$-marginals plus the $\vU$-averaged divergence of the conditionals (the chain rule), the marginal term equals zero at the choice $c' = c_r^{U} := \pi_{\vU}/\gauss_r$, and the conditional term is the same for every member of the family. The KL divergence projection of $\pi_{\vn}$ onto $\mathcal M(\mV_r, \scZ, \{\pi_i\})$ is therefore
\begin{equation}
  \pi_{\vn}(\vx;\mV_r) := c_r^{U}\!\bigl(\mV_r^\top \scZ(\vx)\bigr) \prod_{i=1}^d \pi_i(x_i),
  \label{eq:pi-star-form}
\end{equation}
with residual exactly the conditional term $\EE_{\pi_{\vU}}[\DKL(\pi_{\vW\mid\vU}\,\|\,\gauss_{d-r})]$, so the reduction is exact precisely when $c^{Z}$ does not vary along the complement. The optimality here is over the reduced factor $c'$ at a fixed $\mV_r$, and \Cref{sec:kl-trunc} chooses $\mV_r$.

\subsection{KL divergence upper bound and optimal subspace}
\label{sec:kl-trunc}

\begin{theorem}[KL divergence upper bound on the reduced model {\cite[Cor.~2.10]{ZahmCuiLawSpantiniMarzouk2022}}]
\label{thm:kl-trunc}
Under Assumption~\ref{ass:tempered}, for every $\mV_r \in \St(r,\RR^d)$,
\begin{equation}
  \DKL\bigl(\pi_{\vn} \,\|\, \pi_{\vn}(\,\cdot\,;\mV_r)\bigr) \;\leq\; \tfrac{1}{2}\,\tr\!\bigl((\mI - \mV_r\mV_r^\top)\,\mC\bigr).
  \label{eq:kl-trunc}
\end{equation}
\end{theorem}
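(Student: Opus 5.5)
Because the rank transform $\scZ$ is a componentwise diffeomorphism, we first reduce the statement to $\vZ$-coordinates and then apply the Gaussian logarithmic Sobolev inequality to the conditional laws of the complement coordinates. Since $\scZ$ is strictly increasing in each coordinate, both $\pi_{\vn}$ and $\pi_{\vn}(\cdot;\mV_r)$ push forward under $\scZ$ to $\pi_{\vZ}$ and to $c_r^{U}(\mV_r^\top\vz)\,\gauss_d(\vz)$, respectively. The Jacobian factor $\prod_i \pi_i(x_i)/\gauss_1(\scZ(\vx)_i)$ is common to both densities and cancels in the log-ratio. KL divergence is invariant under this bijection, so the left side of \eqref{eq:kl-trunc} equals $\DKL(\pi_{\vZ}\,\|\,c_r^{U}(\mV_r^\top\cdot)\gauss_d)$. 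By the chain-rule computation of \Cref{sec:reduced-form}, this equals the residual conditional term
\[
\EE_{\pi_{\vU}}\bigl[\DKL(\pi_{\vW\mid\vU}\,\|\,\gauss_{d-r})\bigr].
\]

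Next we compute the conditional density. In the rotated coordinates $(\vu,\vw)$ we have $\pi_{\vZ} = c^{Z}\gauss_r(\vu)\gauss_{d-r}(\vw)$. Hence
\[
\pi_{\vW\mid\vU}(\vw\mid\vu) = \frac{c^{Z}(\mV_r\vu+\mV_\perp\vw)}{c_r^{U}(\vu)}\,\gauss_{d-r}(\vw),
\]
so the density of this conditional relative to $\gauss_{d-r}$ is $f_{\vu}(\vw) := c^{Z}(\mV_r\vu+\mV_\perp\vw)/c_r^{U}(\vu)$. Its log-gradient in $\vw$ is $\mV_\perp^\top\scS(\mV_r\vu+\mV_\perp\vw)$, because the normalizer does not depend on $\vw$. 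For each fixed $\vu$ we apply the Gaussian log-Sobolev inequality for $\gauss_{d-r}$, which has constant $1$:
\[
\DKL(f_{\vu}\gauss_{d-r}\,\|\,\gauss_{d-r}) \le \tfrac12\,\EE_{f_{\vu}\gauss_{d-r}}\|\nabla_{\vw}\log f_{\vu}\|^2 = \tfrac12\,\EE_{\pi_{\vW\mid\vU=\vu}}\bigl\|\mV_\perp^\top\scS\bigr\|^2 .
\]
Averaging over $\vu\sim\pi_{\vU}$ yields $\tfrac12\EE_{\pi_{\vZ}}\|\mV_\perp^\top\scS\|^2 = \tfrac12\tr(\mV_\perp\mV_\perp^\top\mC)$. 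Using $\mV_\perp\mV_\perp^\top = \mI-\mV_r\mV_r^\top$ together with the second form of $\mC$ in \eqref{eq:L-T-C-def} then gives \eqref{eq:kl-trunc}. This is precisely the mechanism of \cite[Cor.~2.10]{ZahmCuiLawSpantiniMarzouk2022}, specialized to a standard Gaussian reference. The constant $\tfrac12$ appears because the reference here is exactly $\gauss_d$, with no perturbation, as it is in the copula coordinates.

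The main obstacle is justifying the log-Sobolev step for $\pi_{\vU}$-almost every $\vu$. The inequality requires $f_{\vu}$ to be a positive probability density relative to $\gauss_{d-r}$ with $\sqrt{f_{\vu}}$ in the Gaussian Sobolev space $H^1(\gauss_{d-r})$. The standard form holds for smooth, compactly supported or sufficiently decaying functions, and is then extended by density and Fatou arguments. Positivity and $C^1$ regularity come directly from Assumption~\ref{ass:tempered}. To get finiteness of $\EE\|\nabla\log f_{\vu}\|^2$ for almost every $\vu$, we would use Fubini together with $\tr(\mC)<\infty$ from \Cref{app:regularity}. To control $c_r^{U}(\vu)$ and the integrability of $f_{\vu}\log f_{\vu}$, we would use (T1)--(T2). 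If the inequality cannot be applied to $f_{\vu}$ directly, the fallback is to apply it to truncations $f_{\vu}\chi_R$ with a smooth cutoff $\chi_R$ and pass to the limit $R\to\infty$. The conditional KL divergence is handled by lower semicontinuity, and the right side by dominated convergence using the polynomial growth bound (T2) against the sub-Gaussian envelope (T1). If the KL divergence is infinite for some $\vu$, the right side must be infinite there as well, which only occurs on a null set. The averaged bound therefore holds without circularity.
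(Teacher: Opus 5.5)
Your proof is correct and is the paper's argument. The paper cites \cite[Cor.~2.10]{ZahmCuiLawSpantiniMarzouk2022} with reference $\gauss_d$ and $f = c^{Z}$, and checks its subspace log-Sobolev hypothesis with $\kappa = 1$ because the conditionals of $\gauss_d$ given $\mV_r^\top\vz$ are standard Gaussian; your fiberwise Gaussian log-Sobolev inequality for $f_{\vu}$, averaged over $\pi_{\vU}$, is that corollary unpacked, and your reduction to $\vZ$-coordinates, the chain-rule identification with $\EE_{\pi_{\vU}}[\DKL(\pi_{\vW\mid\vU}\,\|\,\gauss_{d-r})]$, and the regularity conditions you flag ($f>0$, $C^1$, square-integrable $\nabla\log f$, as in \Cref{app:regularity}) all match the paper.
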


\emph{Proof.} This is \cite[Cor.~2.10]{ZahmCuiLawSpantiniMarzouk2022} with reference measure $\mu = \gauss_d$, target $\nu = \pi_{\vZ}$ and $d\nu/d\mu = f = c^{Z}$. The standard Gaussian satisfies the subspace logarithmic Sobolev inequality of \cite[Thm.~2.9]{ZahmCuiLawSpantiniMarzouk2022} with $\kappa = 1$ and $\Gamma = \mI$, since its conditional laws given $\mV_r^\top\vZ$ are standard Gaussian, whose sharp log-Sobolev constant is $1$ \cite{Gross1975}, \cite[Prop.~5.5.1]{BakryGentilLedoux2014}. The matrix $\int\nabla\log f\,\nabla\log f^\top d\nu$ of that corollary is $\mC$. Its approximation $\nu_r^\ast$, with $d\nu_r^\ast/d\mu = \EE_{\mu}(f \mid \mV_r^\top\vz)$, has the $\vZ$-density $c_r^{U}(\mV_r^\top\vz)\,\gauss_d(\vz)$ of \eqref{eq:pi-star-form}, because $\EE_{\gauss_d}[c^{Z}(\vZ) \mid \mV_r^\top\vZ = \vu] = \pi_{\vU}(\vu)/\gauss_r(\vu) = c_r^{U}(\vu)$. The hypotheses of the corollary, $f > 0$ continuously differentiable with $\nabla\log f$ square-integrable, hold under Assumption~\ref{ass:tempered} (\Cref{app:regularity}). The rank transform is invertible almost surely, so the divergence is the same in $\vx$- and $\vZ$-coordinates.

The reduced model \eqref{eq:pi-star-form} keeps the active marginal $\pi_{\vU}$ exact and replaces only the complement law by $\gauss_{d-r}$, so the bounded divergence is the $\vU$-average $\EE_{\pi_{\vU}}[\DKL(\pi_{\vW\mid\vU}\,\|\,\gauss_{d-r})]$. The prefactor $\tfrac12$ is $\kappa/2$ at the sharp constant $\kappa = 1$ of the Gaussian reference. The same bound with a standard Gaussian reference underlies the parameter reduction reviewed in \cite[Prop.~2.2]{BaptistaBrennanMarzouk2025}.

\begin{corollary}[Bound-optimal subspace]
\label{cor:optimal-subspace}
Let $(\lambda_i(\mC), \vv_i(\mC))_{i=1}^d$ be the eigenpairs of $\mC$ ordered $\lambda_1(\mC) \ge \cdots \ge \lambda_d(\mC)$, and $\mV_r^{\mC} := [\vv_1(\mC)\,\cdots\,\vv_r(\mC)]$. The bound of \Cref{thm:kl-trunc} depends only on $\mathrm{span}(\mV_r) \in \Gr(r,\RR^d)$; its minimum is attained on $\mathrm{span}(\mV_r^{\mC})$ with value
\[
  \tfrac{1}{2}\tr\!\bigl((\mI-\mV_r^{\mC}\mV_r^{\mC\top})\mC\bigr) \;=\; \tfrac{1}{2}\sum_{i > r}\lambda_i(\mC),
\]
giving $\DKL\bigl(\pi_{\vn}\,\|\,\pi_{\vn}(\,\cdot\,;\mV_r^{\mC})\bigr) \leq \tfrac{1}{2}\sum_{i>r}\lambda_i(\mC)$.
\end{corollary}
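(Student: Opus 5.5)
The plan is to rewrite the bound of \Cref{thm:kl-trunc} as a trace of $\mC$ against an orthogonal projector and then invoke Ky Fan's maximum principle. Set $\mP := \mV_r\mV_r^\top$, the orthogonal projector onto $\mathrm{span}(\mV_r)$. This projector is unchanged under $\mV_r \mapsto \mV_r\mQ$ for any $\mQ\in\mathrm{O}(r)$, so it depends only on the subspace; hence the bound $\tfrac12\tr((\mI-\mP)\mC)$ depends only on $\mathrm{span}(\mV_r)\in\Gr(r,\RR^d)$. This settles the first claim. Next write
\[
  \tr\bigl((\mI-\mP)\mC\bigr) \;=\; \tr(\mC) - \tr(\mV_r^\top\mC\mV_r).
\]
Here $\tr(\mC)<\infty$ by Assumption~\ref{ass:tempered} (\Cref{app:regularity}). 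Minimizing the bound is therefore equivalent to maximizing $\tr(\mV_r^\top\mC\mV_r)$ over $\St(r,\RR^d)$.

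For the maximization, diagonalize the symmetric positive semidefinite matrix as $\mC = \sum_i \lambda_i \vv_i\vv_i^\top$. Then
\[
  \tr(\mV_r^\top\mC\mV_r) \;=\; \sum_{i=1}^d \lambda_i\, w_i, \qquad w_i := \|\mV_r^\top\vv_i\|^2 .
\]
The weights satisfy $0\le w_i\le 1$, because $\mP$ is a projector and $\vv_i$ is a unit vector. They also satisfy $\sum_i w_i = \tr(\mP) = r$. Maximizing a linear function with nonincreasing coefficients $\lambda_1\ge\cdots\ge\lambda_d$ over this set is a rearrangement argument: moving weight from a larger index to a smaller one cannot decrease the sum. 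The maximum is therefore $\sum_{i\le r}\lambda_i$, attained at $w_1=\cdots=w_r=1$ and all other $w_i=0$. This is Ky Fan's maximum principle, which could simply be cited. The choice $\mV_r^{\mC}$ gives exactly $w_i = 1$ for $i\le r$ and $w_i=0$ otherwise. Consequently
\[
  \tfrac12\tr\bigl((\mI-\mV_r^{\mC}\mV_r^{\mC\top})\mC\bigr) \;=\; \tfrac12\Bigl(\sum_i\lambda_i - \sum_{i\le r}\lambda_i\Bigr) \;=\; \tfrac12\sum_{i>r}\lambda_i ,
\]
and this value is the minimum of the bound.

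The final inequality $\DKL(\pi_{\vn}\,\|\,\pi_{\vn}(\cdot;\mV_r^{\mC}))\le\tfrac12\sum_{i>r}\lambda_i(\mC)$ follows by applying \Cref{thm:kl-trunc} with $\mV_r=\mV_r^{\mC}$. The only delicate step is the weight-constraint argument for Ky Fan. I would also note that when $\lambda_r=\lambda_{r+1}$ the minimizer is not unique, but the minimum value is unaffected. The statement only asserts that the minimum is attained on $\mathrm{span}(\mV_r^{\mC})$, so this causes no difficulty.
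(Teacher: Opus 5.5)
Your proof is correct, but it takes a different route from the paper's. The paper writes $\mC = \mC^{1/2}\mC^{1/2}$, so that $\tr((\mI-\mP)\mC) = \|(\mI-\mP)\mC^{1/2}\|_F^2$ is the Frobenius error of approximating $\mC^{1/2}$ by the rank-at-most-$r$ matrix $\mP\mC^{1/2}$. It then invokes Eckart--Young: the best error over all rank-$r$ matrices, $\sum_{i>r}\lambda_i(\mC)$, is a lower bound, and the top-$r$ eigenprojector attains it. You instead pass to the equivalent maximization of $\tr(\mV_r^\top\mC\mV_r)$ and prove Ky Fan's principle directly, using the weights $w_i = \vv_i^\top\mP\vv_i\in[0,1]$ with $\sum_i w_i = r$. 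The paper's version is a one-line reduction to a classical theorem, and it frames the choice of subspace as a low-rank approximation problem. Yours is self-contained and yields more. With $s := \sum_{i\le r}(1-w_i) = \sum_{i>r}w_i = \|\sin\Theta(\mV_r,\mV_r^{\mC})\|_F^2$, the same weights give $\tr((\mI-\mP)\mC) - \sum_{i>r}\lambda_i \ge (\lambda_r-\lambda_{r+1})\,s$. This quantifies how fast the bound grows away from the optimal subspace, and it shows the minimizer is unique when $\lambda_r>\lambda_{r+1}$, which makes your remark on ties precise. The paper itself uses this weight device, with $a_i := \vv_i^\top\hmP_r\vv_i$, in the proof of \Cref{prop:diagnostic} in \Cref{app:thm-trunc}.
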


\emph{Proof.} As in \cite[\S 2]{ZahmCuiLawSpantiniMarzouk2022}, the bound of \Cref{thm:kl-trunc} depends on $\mV_r$ only through $\mathrm{span}(\mV_r)$, and minimizing $\tr((\mI-\mP)\mC)$ over rank-$r$ orthogonal projectors is the Eckart--Young low-rank approximation problem in the Frobenius norm \cite{EckartYoung1936}: writing $\mC = \mC^{1/2}\mC^{1/2}$, the quantity $\tr((\mI-\mP)\mC) = \|(\mI-\mP)\mC^{1/2}\|_F^2$ is the squared Frobenius distance from $\mC^{1/2}$ to its projection onto a rank-$r$ subspace, minimized by the top-$r$ eigenspace of $\mC$ with minimum $\sum_{i>r}\lambda_i(\mC)$.

We call $\mathrm{span}(\mV_r^{\mC})$ the \emph{copula active subspace}. The reduction is lossless when the tail eigenvalues $\lambda_{r+1}(\mC),\ldots,\lambda_d(\mC)$ vanish: $\pi_{\vn}(\vx) = \pi_{\vn}(\vx;\mV_r^{\mC})$ a.e. The minimizer of \eqref{eq:kl-trunc} gives the subspace in closed form, and the minimum value bounds the KL divergence of the reduction. Both results use the exact score covariance $\mC$. \Cref{sec:theory} applies \Cref{thm:kl-trunc} to an estimated basis $\hmV_r^{\mC}$.

\section{The \textsc{Cas} estimator}
\label{sec:estimator}

The estimator first estimates the marginals. The empirical distribution functions $\{\widehat F_i\}$ define the empirical rank transform $\widehat\scZ$ (\Cref{alg:cas}, step~1), and kernel density estimates $\{\widehat\pi_i\}$ give the marginal densities. Stage~1 (\Cref{sec:method-core}) computes a basis $\hmV_r^{\mC}$ for the estimated Stage-1 subspace from the rank-Gaussianized samples by Hermite copula score matching. Stage~2 (\Cref{sec:reduced-model}) estimates the reduced log-density $\scL_r := \log c_r^{U} = \log(\pi_{\vU}/\gauss_r)$ on $\RR^r$ from the projected coordinates $\vU = \hmV_r^{\mC\top}\vZ$, with the regularizer of \Cref{sec:ridge}. The full estimator is
\begin{equation}
\begin{gathered}
  \log\widehat\pi_{\vn}(\vx;\hmV_r^{\mC}) = \widehat{\scL}_r(\hmV_r^{\mC\top}\widehat\scZ(\vx)) + \sum_i\log\widehat\pi_i(x_i) - \log\widehat Z_{\vn},\\[2pt]
  \widehat Z_{\vn} := \int_{\RR^d} e^{\widehat{\scL}_r(\hmV_r^{\mC\top}\widehat\scZ(\vx))}\prod_i\widehat\pi_i(x_i)\,d\vx.
\end{gathered}
  \label{eq:cas-estimator}
\end{equation}
The normalizer $\widehat Z_{\vn}$ is a constant, so it shifts the log-density and does not change its gradient. It is finite because the empirical transform $\widehat\scZ$ takes values in a bounded box, on which the polynomial $\widehat\scL_r$ is bounded. The transform $\widehat\scZ$ and the kernel estimates $\widehat\pi_i$ are estimated separately, so the normalization of the family \eqref{eq:reduced-form} does not carry over to their composition, and \eqref{eq:cas-estimator} normalizes the composition directly. When $\widehat Z_r := \int_{\RR^r} e^{\widehat\scL_r}\,d\gauss_r$ is finite, normalizing the reduced factor $e^{\widehat\scL_r}/\widehat Z_r$ first gives the same density, since that constant cancels. The estimate is thus a positive probability density, bounded because each kernel density estimate $\widehat\pi_i$ is bounded, as \eqref{eq:posterior-noise-kl} requires.

\paragraph{Marginals} Estimating the $d$ marginals is the classical part of the composition, with rates available for kernel and log-spline density estimation \cite{Hall1987, Stone1990}. Each $\widehat\pi_i$ is a Gaussian-kernel density estimator with a robust Silverman bandwidth $h\asymp N^{-1/5}$ \cite[\S 3.4]{Silverman1986}, and $\widehat F_i$ is the empirical distribution function of \Cref{alg:cas}, step~1 (\Cref{app:protocol}). We do not analyze the contribution of these estimates to the KL divergence (\Cref{sec:limitations}).

\subsection{Stage 1: subspace identification}
\label{sec:method-core}

Stage 1 returns $\hmV_r^{\mC}$, the matrix of top-$r$ eigenvectors of the empirical score second-moment matrix $\hmC$. Its column space is the estimated Stage-1 subspace analyzed in \Cref{sec:theory}, and \Cref{fig:pedagogical} illustrates it in two dimensions. To compute $\hmC$, Stage 1 expands the log-copula in a Hermite polynomial basis and recovers the coefficients by score matching from a single linear system. The resulting analytical scores give $\hmC$ in closed form.

\begin{figure}[t]
\centering
\includegraphics[width=\textwidth]{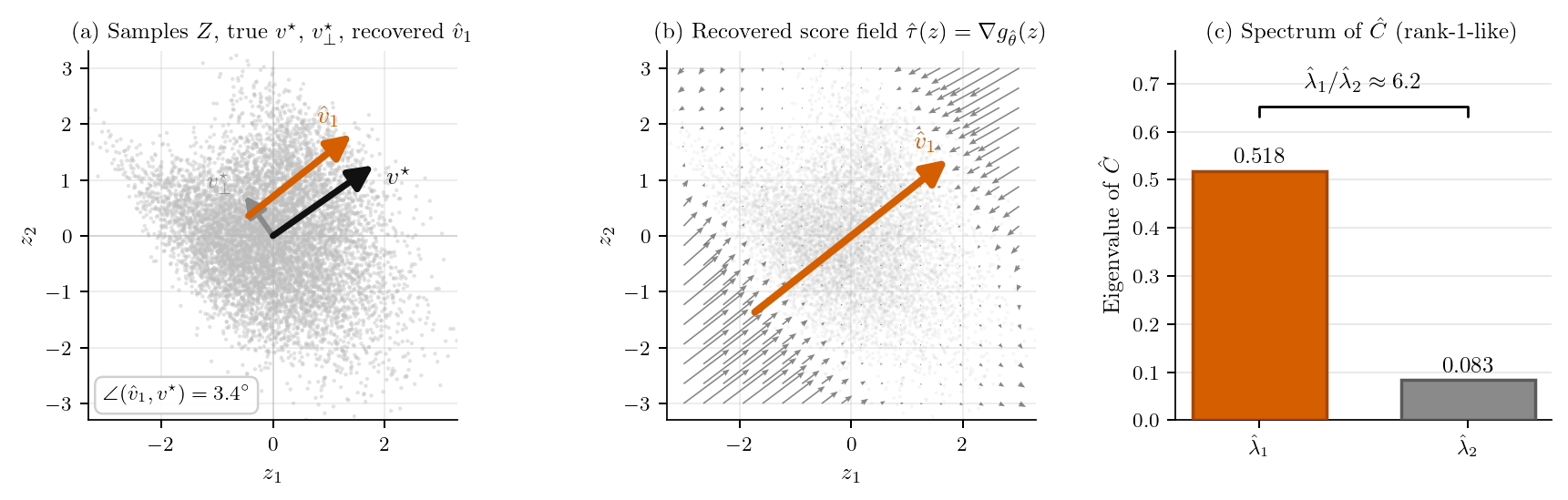}
\caption{\textsc{Cas} is illustrated in two dimensions on a non-Gaussian law whose dependence is concentrated along one non-axis-aligned direction $\vv^\star$, with quadratic conditional-mean dependence. Panel~(a) shows the samples and the top eigenvector $\widehat{\vv}_1$ of $\hmC$, panel~(b) the estimated score field, and panel~(c) the spectrum of $\hmC$.}
\label{fig:pedagogical}
\end{figure}

\paragraph{Hermite polynomial expansion of the log-copula}
Let $h_n$ be the normalized univariate Hermite polynomials ($\EE_{\gauss_1}[h_m h_n] = \delta_{mn}$), $H_\alpha(\vz) := \prod_i h_{\alpha_i}(z_i)$ the tensor basis (orthonormal in $L^2(\gauss_d)$ \cite{Szego1939, Janson1997gaussian}), and
\begin{equation}
  \Lambda_{K,q} \,:=\, \{ \alpha \in \NN_0^d : 1 \leq |\alpha|_1 \leq K,\ |\alpha|_0 \leq q,\ \alpha \neq e_i \text{ for all } i\}
  \label{eq:dict}
\end{equation}
the multi-index set of multivariate Hermite polynomials, whose elements $\alpha$ index the log-copula coefficients $\theta_\alpha$, with maximum total degree $K$ and maximum interaction order $q \leq K$. For Stage 1 we use $\Lambda := \Lambda_{K_1, q_1}$.\footnote{Linear modes $e_i$ are omitted. A pure linear tilt only re-parametrizes the reference, $e^{\delta z_i}\gauss_d(\vz) = e^{\delta^2/2}\gauss_d(\vz - \delta e_i)$ exactly, and within a nonlinear expansion the omission restricts the dictionary.} The truncation orders $(K_1, q_1)$ and the analogous Stage-2 parameters $(K_2, q_2)$ are chosen for each estimate by cross-validation \cite[\S 7.10]{HastieTibshiraniFriedman2009}, minimizing the score-matching objective on validation samples. The same criterion chooses the regularizer constants (\Cref{sec:ridge}).\footnote{Supplement~\ref{sm:adaptivity} describes an optional greedy choice of the Stage-1 degree and interaction order, with reuse of the score-matching quadratic system.}

Writing $\scL(\vz;\vtheta) := \sum_{\alpha \in \Lambda} \theta^{d}_\alpha H_\alpha(\vz)$ for the parametric log-copula factor and $\scS(\vz;\vtheta) := \nabla\scL(\vz;\vtheta)$ for its gradient, the associated unnormalized Lebesgue density is $\pi_{\vZ}(\vz;\vtheta) := e^{\scL(\vz;\vtheta)}\gauss_d(\vz)$, with \emph{$\vz$-density score}
\begin{equation}
  \nabla\log\pi_{\vZ}(\vz;\vtheta) \;=\; \scS(\vz;\vtheta) - \vz,
  \label{eq:hermite-expansion}
\end{equation}
where the $-\vz$ is the Gaussian-reference score $\nabla\log\gauss_d$.

\paragraph{Hermite copula score matching}
Score matching \cite{Hyvarinen2005} determines $\vtheta$ by minimizing the Fisher divergence between the model $\vz$-density score $\nabla\log\pi_{\vZ}(\cdot;\vtheta)$ and the true score $\nabla\log\pi_{\vZ}$. Because $\scL(\cdot;\vtheta)$ is linear in $\vtheta$, the Hyvärinen identity reduces this to a single linear system $\widehat\mA\,\whvtheta = \widehat\mb$ whose entries are sample averages of Hermite-polynomial derivatives over $\{\vZ^{(n)}\}_{n=1}^N$, assembled by evaluating the expansion and its first two derivatives at the samples with no model simulation and no inner optimization loop. We regularize the inversion with a Tikhonov term \cite{EnglHankeNeubauer1996} and solve
\begin{equation}
  \whvtheta \;=\; (\widehat\mA + \mR_{H^2})^{-1}\widehat\mb, \qquad \mR_{H^2} = \kappa\,\|\widehat\mA\|_{\rm op}\,\diag(|\alpha|_1^2)_{\alpha\in\Lambda}, \quad \kappa = 10^{-4}.
  \label{eq:tikhonov}
\end{equation}
\Cref{app:stage2-foundations} derives the closed forms of $\widehat\mA,\widehat\mb$ from the score-matching identity. We call the resulting procedure Hermite copula score matching (HCSM, \Cref{alg:hsm}). It estimates the Hermite expansion of a copula log-density, and hence its score, from $\vz$-coordinate samples. Stage 2 (\Cref{sec:reduced-model}) applies HCSM to the projected samples.

\paragraph{\textsc{Cas} subspace estimator}
From the coefficients $\whvtheta$, the analytical copula score is $\widehat\scS(\vz) := \nabla\scL(\vz;\whvtheta)$; the score matrix is $\mS \in \RR^{N \times d}$, $\mS_{n,:} = \widehat\scS(\vZ^{(n)})^\top$; and the empirical score second-moment matrix is
\begin{equation}
  \hmC \;:=\; \tfrac{1}{N}\,\mS^\top\mS \;=\; \tfrac{1}{N}\sum_{n=1}^N \widehat\scS(\vZ^{(n)})\,\widehat\scS(\vZ^{(n)})^\top,
  \label{eq:Chat-def}
\end{equation}
with eigenpairs $(\widehat\lambda_i, \widehat\vv_i)_{i=1}^d$ ordered $\widehat\lambda_1 \geq \cdots \geq \widehat\lambda_d$. The top-$r$ eigenvectors form $\hmV_r^{\mC} := [\widehat\vv_1 \;\cdots\; \widehat\vv_r] \in \St(r,\RR^d)$, whose column space is the estimated Stage-1 subspace. \Cref{alg:cas} summarizes the procedure.

\begin{algorithm}[h]
\caption{\textsc{Cas} Stage 1 (subspace identification).}
\label{alg:cas}
\small
\begin{algorithmic}[1]
\Require samples $\{\vn^{(k)}\}_{k=1}^N \subset \RR^d$; maximum total degree $K_1$; maximum interaction order $q_1 \leq K_1$; rank $r$; Tikhonov scale $\kappa \geq 0$; oversampling $p$.
\Ensure eigenbasis $\hmV_r^{\mC} \in \St(r,\RR^d)$ spanning the rank-$r$ subspace, top-$r$ eigenvalues $\widehat\lambda_{1:r}$, stored marginal CDFs $\{\widehat F_i\}_{i=1}^d$, coefficient estimate $\whvtheta$, and tail sum $\widehat E_r$.

\State \textbf{Empirical ranks} $\{\widehat F_i\}_{i=1}^d$ from marginals of $\{\vn^{(k)}\}$; rank-Gaussianize $Z_i^{(k)} = \Phi^{-1}(\widehat F_i(\vn_i^{(k)}))$ per \eqref{eq:rank-transform}, using $\widehat F_i(\vn_i^{(k)}) = R_i^{(k)}/(N+1)$ to avoid the tail blow-up of $\Phi^{-1}$. $\triangleright\ O(Nd\log N)$.

\State \textbf{Enumerate} multi-index set $\Lambda \subset \NN_0^d$ by maximum-degree and interaction-order limits $(K_1, q_1)$ per \eqref{eq:dict}.

\State \textbf{Run HCSM} (\Cref{alg:hsm}) on $\{\vZ^{(k)}\}_{k=1}^N$ with multi-index set $\Lambda$ and regularizer $\mR_{H^2}$, $\to \whvtheta \in \RR^{|\Lambda|}$.

\State \textbf{Assemble score matrix} $\mS \in \RR^{N \times d}$: $\mS_{ki} = \scS_{\whvtheta}(\vZ^{(k)})_i$.

\State \textbf{Extract eigenbasis} $(\hmV_r^{\mC}, \widehat\lambda_{1:r}) \gets $ top-$r$ eigenpairs of $\hmC := \mS^\top\mS / N \in \RR^{d\times d}$ via randomized range-finder \cite{HalkoMartinssonTropp2011} applied to $\mS^\top$ with oversampling $p$; $\hmC$ is never assembled. $\triangleright\ O(Nd(r+p))$.

\State \textbf{Tail sum} $\widehat E_r = \tr(\hmC) - \sum_{i\le r}\widehat\lambda_i = \|\mS\|_F^2/N - \sum_{i\le r}\widehat\lambda_i$ as a finite-sample diagnostic, computable in $O(Nd)$ from $\mS$ without forming further eigenvectors. The optional dictionary search of Supplement~\ref{sm:adaptivity} reuses quadratic blocks and a Cholesky factorization under a penalty fixed during the search.
\end{algorithmic}
\end{algorithm}

\subsection{Stage 2: score matching for the reduced density}
\label{sec:reduced-model}

The basis $\hmV_r^{\mC}$ of Stage~1 does not depend on how $c_r^{U}$ is estimated afterwards. On the same subspace, the Hermite estimator below can therefore be replaced by another estimator of $c_r^{U}$ or of its score, such as neural-network score ratio matching \cite{BaptistaBrennanMarzouk2025} or a normalizing flow \cite{PapamakariosEtAl2021}, when an adaptive polynomial approximation in $r$ dimensions is unavailable or inaccurate. Stage~1 still identifies the subspace by a linear score-matching solve and an eigendecomposition. We use Hermite score matching in both stages, which gives $\widehat\scL_r$ and its gradient in closed form.

Stage 2 estimates the reduced factor $c_r^{U}$ on the projected coordinates $\vU = \hmV_r^{\mC\top} \vZ$ by Hermite copula score matching on $\RR^r$. The reduced score $\nabla\scL_r(\vu) = \nabla\log c_r^{U}(\vu)$ relates to the density score by $\partial_j\scL_r(\vu) = \partial_j\log\pi_{\vU}(\vu) + u_j$ since $\pi_{\vU} = c_r^{U}\,\gauss_r$. We use the multi-index set $\Lambda_r := \Lambda_{K_2, q_2}\subset \NN_0^r$ from \eqref{eq:dict}, again excluding constant and linear modes (the reference-mean reparametrization of \Cref{sec:method-core}),\footnote{The second-order modes $H_{2e_j} \in \Lambda_r$ are retained, and represent variance mismatch between $\pi_{\vU}$ and $\gauss_r$.} and parametrize $\log c_r^{U}$ as $\scL_r(\vu;\vbeta) := \sum_{\alpha \in \Lambda_r}\theta^{r}_\alpha H_\alpha(\vu)$, with $\vbeta = (\theta^r_\alpha)_{\alpha\in\Lambda_r}$.

Stage 2 minimizes the Fisher divergence
\begin{equation}
  J(\vbeta)
  \;:=\;
  \tfrac{1}{2}\sum_{j=1}^r\,\EE_{\pi_{\vU}}\!\left[\bigl(\partial_j\scL_r(\vU;\vbeta)-\partial_j\scL_r(\vU)\bigr)^2\right]
  \label{eq:JW-pop}
\end{equation}
between the model and the true reduced scores. By the Hyvärinen identity \cite[Thm.~1]{Hyvarinen2005}, this is again a quadratic in $\vbeta$ \cite[\S4]{Hyvarinen2007}, so Stage 2 is HCSM (\Cref{alg:hsm}) applied to the projected samples $\vu^{(n)} := \hmV_r^{\mC\top}\vZ^{(n)}$ with $d \to r$ and $\Lambda \to \Lambda_r$, assembling the Gram matrix and right-hand side $\widehat\mA_r, \widehat\mb_r$ of \eqref{eq:A-b-formulas}. \Cref{lem:hyv-sm} states the identity and verifies its integration-by-parts hypotheses under Assumption~\ref{ass:tempered}. \Cref{sec:ridge} adds a regularizer and two constraints to this quadratic and gives the resulting estimator.

\subsection{Stage-2 regularization and constraints}
\label{sec:ridge}

The unregularized normal equations $\widehat\mA_r\hvbeta = \widehat\mb_r$ are ill-conditioned when the smallest eigenvalues of $\widehat\mA_r$ fall below their sampling error, which is of order $N^{-1/2}$. The penalty restores well-posedness, in the sense of the penalized score-matching estimator of \cite{Sriperumbudur2017}, and its two terms control that estimator's two error sources, the variance from the small eigenvalues and the truncation residual propagated through $\widehat\mA_r^{-1}$.

\subsubsection{The regularizer}
\label{sec:ridge-reg}

\paragraph{Form}
The regularizer is the diagonal, data-driven matrix
\begin{equation}
  \mR(\widehat\mA_r) \;:=\; \underbrace{C_{\rm samp}\,\diag_k\!\bigl(\widehat\sigma_k^2\,\mI_{p_k}\bigr)}_{\mR_{\rm samp}} \;+\; \underbrace{C_{\rm curv}\,\|\widehat\mA_r\|_{\rm op}\,\diag(|\alpha|_1^2)}_{\mR_{\rm curv}},
  \label{eq:R-combined}
\end{equation}
summed over the blocks $\Lambda^{(k)} := \{\alpha\in\Lambda_r:|\alpha|_1 = k\}$ of multi-indices with total degree $k$, with $p_k := |\Lambda^{(k)}|$ and per-degree energy scale $\widehat\sigma_k^2 := \tfrac{1}{N}\tr(\widehat\mA_r^{(k,k)})$, equal to $k\,p_k/N$ under the Gaussian reference. The $|\alpha|_1^2$ weighting is the squared spectral weight of the Ornstein--Uhlenbeck generator on degree-$k$ Hermite polynomials, a Gaussian Sobolev (curvature) penalty \cite{EnglHankeNeubauer1996}.

The two terms damp the directions of the smallest eigenvalues in complementary ways. $\mR_{\rm samp}$ scales with $\widehat\sigma_k^2$, of order $1/N$, so it decreases as $N$ grows. $\mR_{\rm curv}$ scales with $\|\widehat\mA_r\|_{\rm op}$ and the squared degree $|\alpha|_1^2$. It is asymptotically $N$-independent and largest on the high-degree coefficients. Together they keep the solve well-posed at every sample size studied, $N \in [10^2, 10^5]$.

\paragraph{Conditioning} On each block $\Lambda^{(k)}$, $\mR_{\rm curv}$ adds a positive multiple of the identity, which bounds the condition number of that block of $\widehat\mA_r + \mR$ by $1 + 1/(C_{\rm curv}k^2)$ and bounds the whole matrix away from singularity, $\widehat\mA_r + \mR \succeq \min_k(C_{\rm samp}\widehat\sigma_k^2 + C_{\rm curv}\|\widehat\mA_r\|_{\rm op}k^2)\,\mI$.

\paragraph{Choice of the two constants} The two constants $(C_{\rm samp}, C_{\rm curv})$ are chosen for each problem by cross-validation, minimizing the score-matching objective on validation samples, which in fixed coordinates estimates the reduced Fisher divergence \eqref{eq:JW-pop} up to a constant and requires no reference density and no noise model. The choice involves two global scalars, one per term, and both terms are active at every $N \in [10^2, 10^5]$. Supplement~\ref{sm:s2reg-empirical} specifies the procedure.

\subsubsection{The two constraints}
\label{sec:ridge-constraints}

\paragraph{Mean-zero score} Alongside the regularizer, we impose the empirical score-mean-zero identity
\begin{equation}
  \widehat\mB\hvbeta = 0,
  \qquad
  \widehat B_{j\alpha} := \tfrac{1}{N}\sum_n \partial_j H_\alpha(\vu^{(n)}),
  \label{eq:centering-mat}
\end{equation}
the empirical form of the identity $\EE_{\pi_{\vU}}[\nabla\log c_r^{U}(\vU)] = 0$ (\Cref{app:regularity}), as a linear constraint. \Cref{prop:centering}, stated after the closed form in \Cref{sec:ridge-solve}, records its effect.

\paragraph{Normalizability} The second constraint makes the reduced factor a probability density with respect to $\gauss_r$. $\scL_r(\cdot;\vbeta)$ is a polynomial, in general unbounded above on $\RR^r$, and without a further constraint $e^{\scL_r}$ need not be integrable against $\gauss_r$. We therefore impose two linear inequality families on $\vbeta$, a level bound $\scL_r(\vu;\vbeta) \le \tau$ on a ball containing the data, for a level $\tau$ computed from the data in \Cref{sec:ridge-solve}, and a strict negativity margin on the leading form of $\scL_r$ over the unit sphere (\Cref{app:solver}). Together they bound $\scL_r$ above on all of $\RR^r$, so $e^{\scL_r}$ is integrable against $\gauss_r$ and $e^{\widehat\scL_r}/\widehat Z_r$ is a probability density with respect to $\gauss_r$.

\subsubsection{The estimator and its solve}
\label{sec:ridge-solve}

The estimator solves
\begin{equation}
\begin{gathered}
  \hvbeta = \arg\min_{\vbeta\,\in\,\mathcal C}\;\tfrac{1}{2}{\vbeta}^\top\bigl(\widehat\mA_r + \mR(\widehat\mA_r)\bigr)\vbeta - {\vbeta}^\top\widehat\mb_r,\\[2pt]
  \mathcal C := \bigl\{\vbeta : \widehat\mB\vbeta = 0,\ \scL_r(\vu;\vbeta) \le \tau \text{ for } \|\vu\| \le r_{\max},\ F[\vbeta](\vv) \le -\varepsilon \text{ for } \|\vv\| = 1\bigr\},
\end{gathered}
  \label{eq:stage2-est}
\end{equation}
with $F[\vbeta]$ the leading form of $\scL_r(\cdot;\vbeta)$ and the radius $r_{\max}$ and margin $\varepsilon$ given in \Cref{app:solver,app:protocol}. To compute the level $\tau$, the program is first solved with the centering constraint alone and the estimated $\scL_r$ is evaluated at the $N$ samples. The level $\tau$ is the $99.9$th percentile of those $N$ values. With the centering constraint alone the solution has the closed form
\begin{equation}
  \hvbeta = \widehat\mM^{-1}\widehat\mb_r - \widehat\mM^{-1}\widehat\mB^\top \bigl(\widehat\mB\widehat\mM^{-1}\widehat\mB^\top\bigr)^{-1}\widehat\mB\widehat\mM^{-1}\widehat\mb_r,
  \quad
  \widehat\mM := \widehat\mA_r + \mR(\widehat\mA_r).
  \label{eq:stage2-closedform}
\end{equation}
The solve requires one Cholesky factorization \cite{GolubVanLoan2013} of the $|\Lambda_r| \times |\Lambda_r|$ matrix $\widehat\mM$ and one $r \times r$ linear solve. The sizes of both matrices depend on the reduced dimension $r$ and not on $d$. With $\hvbeta$ in hand, the gradient of \eqref{eq:cas-estimator} follows by the chain rule through $\widehat\scZ$ wherever the transform is differentiable (\Cref{app:protocol}).

Each inequality in $\mathcal C$ is linear in $\vbeta$ because $\scL_r(\vu;\cdot)$ and $F[\cdot](\vv)$ are linear in the coefficients, so $\mathcal C$ is an intersection of half-spaces, hence convex, indexed by the points of the ball and the sphere. The solver imposes the two families at a finite set of points, enlarged iteratively by the points at which a search over the ball and the sphere finds the largest violation, until the search finds none beyond its tolerance or a round limit is reached (\Cref{app:solver}). Each re-solve is a convex quadratic program with the added rows as inequality constraints, and when the solution of \eqref{eq:stage2-closedform} already satisfies both families at every searched point, no points are added and the estimator equals \eqref{eq:stage2-closedform}.

\Cref{prop:centering} describes the effect of the centering constraint on the bias and the variance of the solve.

\begin{proposition}[Properties of the centering constraint]
\label{prop:centering}
Fix the true rank transform, a deterministic basis, the multi-index set $\Lambda_r$, and the regularizer $\mR$, so that the projected sample is i.i.d. Let $\hvbeta$ be the centered Stage-2 estimator \eqref{eq:stage2-closedform} and $\hvbeta_{\rm unc} := \widehat\mM^{-1}\widehat\mb_r$ the unconstrained solve. Let $\mA^\star, \mB^\star, \bm b^\star$ be the $N\to\infty$ limits of $\widehat\mA_r, \widehat\mB, \widehat\mb_r$, with $\mB^\star$ of full row rank, set $\bm M := \mA^\star + \mR$, let $\Pi^\star$ be the $\bm M$-orthogonal projector onto $\mathrm{range}(\bm M^{-1}\mB^{\star\top})$, let $\bm\theta^{r,\star}$ be the truncated minimizer $\mA^\star\bm\theta^{r,\star} = \bm b^\star$, and let $\bar{\bm\theta} := (\mI-\Pi^\star)\bm M^{-1}\bm b^\star$ be the centered solve of the limiting system. Then:

\emph{(i) The constraint acts by projection.} $\hvbeta = (\mI - \Pi_{\widehat\mB})\hvbeta_{\rm unc}$, where $\Pi_{\widehat\mB} := \widehat\mM^{-1}\widehat\mB^\top(\widehat\mB\widehat\mM^{-1}\widehat\mB^\top)^{-1}\widehat\mB$ is the $\widehat\mM$-orthogonal projector onto $\mathrm{range}(\widehat\mM^{-1}\widehat\mB^\top)$, defined whenever $\widehat\mB$ has full row rank.

\emph{(ii) It replaces one bias by another.} The centered solve of the limiting system satisfies the exact identity
\begin{equation}
  \bar{\bm\theta} - \bm\theta^{r,\star}
  = \underbrace{-(\mI - \Pi^\star)\,\bm M^{-1}\mR\,\bm\theta^{r,\star}}_{\text{regularization bias, contracted}}
  \;\underbrace{-\; \Pi^\star\bm\theta^{r,\star}}_{\text{truncation bias, added}},
  \label{eq:centering-bias}
\end{equation}
whose first term is the unconstrained regularization bias $-\bm M^{-1}\mR\bm\theta^{r,\star}$ after an $\bm M$-orthogonal contraction that cannot enlarge it ($\|\mI-\Pi^\star\|_{\bm M} \le 1$), and whose second term is $\Pi^\star\bm\theta^{r,\star} = -\bm M^{-1}\mB^{\star\top}(\mB^\star\bm M^{-1}\mB^{\star\top})^{-1}\EE_{\pi_{\vU}}[\bm\rho]$, fixed by $\mB^\star\bm\theta^{r,\star} = -\EE_{\pi_{\vU}}[\bm\rho]$ for the Fisher truncation residual $\bm\rho := \nabla\scL_r - \Pi_{\Lambda_r}\nabla\scL_r$ ($\Pi_{\Lambda_r}$ the $L^2(\pi_{\vU})$ projection onto $\mathrm{span}\{\nabla H_\alpha : \alpha\in\Lambda_r\}$), and equal to zero exactly when $\EE_{\pi_{\vU}}[\bm\rho] = 0$.

\emph{(iii) The limiting projection is non-expansive.} The $\bm M$-orthogonal projection does not increase the $\bm M$-weighted variance: for every random $\bm\xi$ with finite second moment,
\begin{equation}
  \tr\!\bigl(\mA^\star\Cov((\mI-\Pi^\star)\bm\xi)\bigr) \;\le\; \tr\!\bigl(\bm M\,\Cov(\bm\xi)\bigr).
  \label{eq:centering-var}
\end{equation}

\emph{(iv) The estimator reaches $\bar{\bm\theta}$ at the Monte Carlo rate.} Under \Cref{ass:tempered} the sample averages $(\widehat\mB, \widehat\mM, \widehat\mb_r)$ obey a central limit theorem, so $\|\hvbeta - \bar{\bm\theta}\| = O_p(N^{-1/2})$.
\end{proposition}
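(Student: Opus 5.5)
Parts (i)--(iii) are linear algebra on the closed form \eqref{eq:stage2-closedform}; (iv) is a delta-method argument. For (i), I would factor $\widehat\mM^{-1}$ on the right in \eqref{eq:stage2-closedform}, giving $\hvbeta = (\mI - \Pi_{\widehat\mB})\widehat\mM^{-1}\widehat\mb_r$. Then I would check that $\Pi_{\widehat\mB}$ is idempotent, has range $\mathrm{range}(\widehat\mM^{-1}\widehat\mB^\top)$, and is self-adjoint in $\langle\bm x,\bm y\rangle_{\widehat\mM} = \bm x^\top\widehat\mM\bm y$. The last point holds because $\widehat\mM\Pi_{\widehat\mB} = \widehat\mB^\top(\widehat\mB\widehat\mM^{-1}\widehat\mB^\top)^{-1}\widehat\mB$ is symmetric. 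These checks need $\widehat\mM \succ 0$, which the Conditioning paragraph guarantees, and full row rank of $\widehat\mB$, so that $\widehat\mB\widehat\mM^{-1}\widehat\mB^\top$ is invertible.

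For (ii), I would write $\bm b^\star = \mA^\star\bm\theta^{r,\star} = (\bm M - \mR)\bm\theta^{r,\star}$, so that $\bm M^{-1}\bm b^\star = \bm\theta^{r,\star} - \bm M^{-1}\mR\bm\theta^{r,\star}$. Applying $\mI-\Pi^\star$ and subtracting $\bm\theta^{r,\star}$ gives \eqref{eq:centering-bias} directly. The contraction bound $\|\mI-\Pi^\star\|_{\bm M}\le 1$ holds because $\mI-\Pi^\star$ is an $\bm M$-orthogonal projector. For the second term I would use $\Pi^\star\bm\theta^{r,\star} = \bm M^{-1}\mB^{\star\top}(\mB^\star\bm M^{-1}\mB^{\star\top})^{-1}\mB^\star\bm\theta^{r,\star}$, so it remains to identify $\mB^\star\bm\theta^{r,\star}$. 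By the normal equations, $\Pi_{\Lambda_r}\nabla\scL_r = \sum_\alpha\theta^{r,\star}_\alpha\nabla H_\alpha$. Hence $(\mB^\star\bm\theta^{r,\star})_j = \EE_{\pi_{\vU}}[\partial_j\scL_r - \rho_j]$, and the exact score-mean-zero identity $\EE_{\pi_{\vU}}[\nabla\scL_r]=0$ (\Cref{app:regularity}) gives $-\EE[\bm\rho]$. This term vanishes iff $\EE[\bm\rho]=0$, because $\bm M^{-1}\mB^{\star\top}(\cdot)^{-1}$ is injective under full row rank. One point needs care: $\mA^\star$ is the population Gram matrix $\EE[\sum_j\partial_jH_\alpha\partial_jH_{\alpha'}]$ and $\bm b^\star$ comes from the Hyvärinen identity (\Cref{lem:hyv-sm}). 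The identification of $\mA^\star\bm\theta^{r,\star}=\bm b^\star$ with the $L^2(\pi_{\vU})$ projection relies on that lemma, and this is where I would spend the most care.

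For (iii), I would use $\mA^\star \preceq \bm M$, since $\mR\succeq 0$, to get $\tr(\mA^\star\Cov((\mI-\Pi^\star)\bm\xi)) \le \tr(\bm M(\mI-\Pi^\star)\Cov(\bm\xi)(\mI-\Pi^\star)^\top)$. Writing $\bm Q = \bm M^{1/2}(\mI-\Pi^\star)\bm M^{-1/2}$, which is an ordinary orthogonal projector by $\bm M$-self-adjointness, the right side equals $\tr(\bm Q\,\bm M^{1/2}\Cov(\bm\xi)\bm M^{1/2}\bm Q) \le \tr(\bm M^{1/2}\Cov(\bm\xi)\bm M^{1/2})$.

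For (iv), \Cref{ass:tempered}, through the polynomial growth and sub-Gaussian bounds, gives finite second moments of the polynomial entries. The multivariate CLT then yields $O_p(N^{-1/2})$ fluctuations of $(\widehat\mA_r,\widehat\mB,\widehat\mb_r)$. The regularizer depends continuously on $\widehat\mA_r$ through $\|\widehat\mA_r\|_{\rm op}$ and the block traces, and it is Lipschitz in the entries, so $\widehat\mM$ is $O_p(N^{-1/2})$-close to $\bm M$ (here $\mR$ is read as its limit). The map $(\mB,\mM,\bm b)\mapsto$ \eqref{eq:stage2-closedform} is smooth near the limit, where $\bm M\succ 0$ and $\mB^\star$ has full row rank. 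The delta method then gives the rate. The main obstacle is the non-smoothness of $\|\cdot\|_{\rm op}$ when the top eigenvalue is repeated. I would handle it by noting that the operator norm is Lipschitz, so the perturbation stays $O_p(N^{-1/2})$ and the delta method is not needed in differentiable form.
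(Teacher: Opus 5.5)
Your proposal is correct and follows the paper's proof in Appendix D essentially step for step. For (i) you use the same rearrangement of the closed form and the same $\widehat\mM$-self-adjointness check. For (ii) you use the same substitution $\bm b^\star = (\bm M-\mR)\bm\theta^{r,\star}$ and the same identification $\mB^\star\bm\theta^{r,\star} = -\EE_{\pi_{\vU}}[\bm\rho]$ through the score-mean-zero identity. For (iii) you use the same $\bm M^{1/2}$-conjugated orthogonal projector, and for (iv) the same CLT plus smoothness of the closed-form map. The only divergence is in (iv), where you also allow a data-dependent $\mR(\widehat\mA_r)$ and rely on Lipschitz continuity of the operator norm. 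That extra generality is harmless but unnecessary, because the statement fixes $\mR$, so the closed-form map is smooth and the delta method applies directly, as in the paper.
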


\emph{Proof.} See \Cref{app:centering}.

The regularization bias of \eqref{eq:centering-bias} carries the factor $\mR$. Its sampling-noise part decreases with $N$ through the term $\mR_{\rm samp} \asymp 1/N$ of \eqref{eq:R-combined}, and its curvature part need not decrease. The truncation bias is $N$-independent and equals zero when the reduced score is representable in $\Lambda_r$. The sign of the net effect therefore depends on the problem and on $N$, and Supplement~\ref{sec:sm2-centering} measures it on Examples~1--3.

\paragraph{Independence of the basis} Stage 1 identifies a subspace, and any orthonormal basis of that subspace could be handed to Stage 2, so the Stage-2 estimate should not depend on the choice of basis. When the specified maximum interaction order excludes no multi-index ($q_2 \ge \min(K_2, r)$, in which case the condition $|\alpha|_0 \le q_2$ removes nothing from $\Lambda_r$), the estimator with exact optimization and integration is invariant. A basis change $\hmV_r^{\mC} \mapsto \hmV_r^{\mC}\mQ$ with $\mQ \in \mathrm{O}(r)$ rotates the coordinates $\vu$, and a rotation maps the span of the Hermite polynomials of each total degree $k$ onto itself. The span of the model polynomials, the per-degree scales $\widehat\sigma_k^2$ and $\|\widehat\mA_r\|_{\rm op}$ are unchanged. The regularizer $\mR$ multiplies all coefficients of the same total degree by one scalar and therefore commutes with the rotation, and the centering constraint set $\{\vbeta : \widehat\mB\vbeta = 0\}$ maps onto itself. The estimated density $\widehat\pi_{\vn}$ is therefore unchanged. When the maximum interaction order excludes multi-indices ($q_2 < K_2$ and $q_2 < r$), the estimate depends on the basis through the excluded multi-indices. The numerical search and quadrature of \Cref{app:solver,app:protocol} are not rotated with the basis, so in computation the invariance holds up to their tolerances.

\paragraph{Performance of the regularizer}
The comparisons below hold the constants at $(C_{\rm samp}, C_{\rm curv}) = (3, 10^{-5})$, the center of the selection grid of \Cref{app:protocol}, at every $N$. At this pair the estimator stays within $0.18$ nats in held-out log-score of the best alternative regularizer on all three examples of \Cref{sec:experiments} at every $N \in [10^2, 10^5]$, while the unregularized solve raises that score by $2.8$--$3.6$ nats at $N = 100$. On the inference problem of \Cref{sec:exp-bip}, the estimator at this pair is within $0.04$ nats in posterior KL divergence of the Tikhonov regularizer $\mR_{H^2}$ of \eqref{eq:tikhonov} applied to $\widehat\mA_r$ at the best $\kappa$ chosen separately for each $N$. That best $\kappa$ shifts from $3\times10^{-2}$ at $N = 10^2$ to $10^{-5}$ at $N = 5\times10^4$, so no single $\kappa$ on the grid matches it. Choosing $C_{\rm samp}$ on held-out samples for each realization, with $C_{\rm curv}$ fixed, improves on the fixed pair by more than $0.01$ nats only for $N \le 1000$. Supplement~\ref{sm:s2reg-empirical} reports the comparison in full.

\section{Error at the estimated Stage-1 subspace}
\label{sec:theory}

\Cref{thm:kl-trunc} holds for every $\mV_r \in \St(r,\RR^d)$, in particular at the estimated Stage-1 subspace $\hmV_r^{\mC}$ of \Cref{sec:method-core}: $\DKL\bigl(\pi_{\vn}\,\|\,\pi_{\vn}(\,\cdot\,;\hmV_r^{\mC})\bigr) \le \tfrac12\tr\bigl((\mI-\hmP_r)\mC\bigr)$, where $\hmP_r := \hmV_r^{\mC}\hmV_r^{\mC\top}$. The trace is not computable from the samples, because $\mC$ is the covariance of the unknown copula score. \Cref{prop:diagnostic} replaces $\mC$ by the matrix $\mC_\Lambda$ of Stage 1, defined below. At the estimated subspace the trace of $\mC_\Lambda$ exceeds its minimum by at most a quadratic in the recovery angle, and the tail sum $\widehat E_r$ of \Cref{alg:cas} estimates that minimum with a bound on the error of the estimate.

The \emph{parameterized score} $\scS_\Lambda$ is the $L^2(\pi_{\vZ})$ orthogonal projection of the true copula score $\scS$ onto the Hermite score field $\scF_\Lambda := \{\sum_{\alpha\in\Lambda}\theta_\alpha\nabla H_\alpha\}$ over the Stage-1 multi-index set $\Lambda$ of \eqref{eq:dict}, $\mC_\Lambda := \EE_{\pi_{\vZ}}[\scS_\Lambda(\vZ)\,\scS_\Lambda(\vZ)^\top]$ is its second-moment matrix, and the \emph{Hermite truncation error} $\|\scS - \scS_\Lambda\|_{L^2(\pi_{\vZ})}$ does not increase as $\Lambda$ grows. The \emph{projection subspace} $\mV_{r,\Lambda}$ is the leading rank-$r$ eigenspace of $\mC_\Lambda$, and the \emph{estimated Stage-1 subspace} $\hmV_r^{\mC}$ is the output of Stage 1 on $N$ samples. The Stage-1 solve carries the fixed penalty of \eqref{eq:tikhonov}, so the large-$N$ limit of $\hmV_r^{\mC}$ can differ from $\mV_{r,\Lambda}$, and $\mV_{r,\Lambda}$ serves as a benchmark. Neither is the bound-optimal $\mV_r^{\mC}$ of \Cref{cor:optimal-subspace}, the leading rank-$r$ eigenspace of the true score covariance $\mC$. The angle $\|\sin\Theta(\mV_{r,\Lambda}, \mV_r^{\mC})\|_F$ between $\mV_{r,\Lambda}$ and $\mV_r^{\mC}$ is induced by the Hermite truncation alone. Write $\mP_{r,\Lambda} := \mV_{r,\Lambda}\mV_{r,\Lambda}^\top$, and $E_r(\mC_\Lambda) := \sum_{i>r}\lambda_i(\mC_\Lambda) = \tr((\mI-\mP_{r,\Lambda})\mC_\Lambda)$, the sum of the eigenvalues of $\mC_\Lambda$ below the retained rank, for the \emph{rank-$r$ truncation error of $\mC_\Lambda$}.

\begin{proposition}[Rank-$r$ truncation error and its diagnostic]
\label{prop:diagnostic}
For $1 \le r < d$, let $\|\sin\Theta\|_F := \|\sin\Theta(\hmV_r^{\mC}, \mV_{r,\Lambda})\|_F$ be the principal-angle distance between the estimated Stage-1 subspace and the projection subspace. Over rank-$r$ orthogonal projectors $\mP$, the trace $\tr((\mI-\mP)\mC_\Lambda)$ is smallest at $\mP_{r,\Lambda}$, where it equals $E_r(\mC_\Lambda)$; at the estimated Stage-1 subspace it exceeds this minimum by $\tr((\mI-\hmP_r)\mC_\Lambda) - E_r(\mC_\Lambda) = \tr((\mP_{r,\Lambda}-\hmP_r)\mC_\Lambda) \ge 0$, bounded above and below by quadratics in $\|\sin\Theta\|_F$,
\begin{equation}
  (\lambda_r-\lambda_{r+1})(\mC_\Lambda)\,\|\sin\Theta\|_F^2
  \;\le\;
  \tr\bigl((\mP_{r,\Lambda}-\hmP_r)\,\mC_\Lambda\bigr)
  \;\le\;
  (\lambda_1-\lambda_d)(\mC_\Lambda)\,\|\sin\Theta\|_F^2,
  \label{eq:misalign-quadratic}
\end{equation}
and the sample tail sum $\widehat E_r := \sum_{i>r}\widehat\lambda_i$ of the eigenvalues of $\hmC$ estimates the rank-$r$ truncation error of $\mC_\Lambda$ with
\begin{equation}
  \bigl|\widehat E_r - E_r(\mC_\Lambda)\bigr| \;\le\; \sum_{j=1}^{d-r}\sigma_j\bigl(\hmC - \mC_\Lambda\bigr) \;\le\; (d-r)\,\|\hmC - \mC_\Lambda\|_{\rm op},
  \label{eq:Ehat-weyl}
\end{equation}
where $\sigma_1(\cdot) \ge \cdots \ge \sigma_d(\cdot)$ denote singular values in decreasing order.
\end{proposition}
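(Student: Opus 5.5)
The proof has three parts: the minimization, the quadratic sandwich (the main obstacle), and the Weyl-type perturbation bound.

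\emph{Minimization.} The first claim repeats the Eckart--Young argument used in the proof of \Cref{cor:optimal-subspace}, now applied to $\mC_\Lambda$. This matrix is symmetric positive semidefinite, being a second-moment matrix, and it is finite because $\scS_\Lambda$ is an $L^2(\pi_{\vZ})$ projection of $\scS\in L^2(\pi_{\vZ};\RR^d)$. Hence $\tr((\mI-\mP)\mC_\Lambda)$ is minimized over rank-$r$ projectors at $\mP_{r,\Lambda}$, with value $E_r(\mC_\Lambda)$. The excess identity is immediate by subtraction: $\tr((\mI-\hmP_r)\mC_\Lambda)-\tr((\mI-\mP_{r,\Lambda})\mC_\Lambda)=\tr((\mP_{r,\Lambda}-\hmP_r)\mC_\Lambda)$. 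Its nonnegativity follows from the minimization.

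\emph{Quadratic sandwich.} Write $\mC_\Lambda=\sum_i\lambda_i\vv_i\vv_i^\top$ with $\lambda_i:=\lambda_i(\mC_\Lambda)$. Set $a_i:=\vv_i^\top\hmP_r\vv_i\in[0,1]$, so that $\sum_i a_i=r$. Then
\[
\tr((\mP_{r,\Lambda}-\hmP_r)\mC_\Lambda)=\sum_{i\le r}\lambda_i(1-a_i)-\sum_{i>r}\lambda_i a_i .
\]
The standard identity $\|\sin\Theta\|_F^2=r-\tr(\mP_{r,\Lambda}\hmP_r)=\sum_{i\le r}(1-a_i)=\sum_{i>r}a_i=:s$ connects this to the angle; the equality $\sum_{i\le r}(1-a_i)=\sum_{i>r}a_i$ uses $\sum_i a_i=r$. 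For the lower bound, replace each $\lambda_i$ with $i\le r$ by $\lambda_r$ and each $\lambda_i$ with $i>r$ by $\lambda_{r+1}$. This gives excess $\ge(\lambda_r-\lambda_{r+1})s$. For the upper bound, replace the first group by $\lambda_1$ and the second by $\lambda_d$, which gives excess $\le(\lambda_1-\lambda_d)s$. This step is elementary once the two expressions for $s$ are recorded. It is the main obstacle only in that one must check the weights $1-a_i$ and $a_i$ are nonnegative and have equal total $s$.

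\emph{Perturbation bound.} For \eqref{eq:Ehat-weyl}, both $\hmC$ and $\mC_\Lambda$ are symmetric positive semidefinite, and the tail sum of the eigenvalues beyond index $r$ is the bottom sum $\sum_{i>r}\lambda_i=\min_{\mP\text{ rank }d-r}\tr(\mP\,\cdot)$. By Ky Fan's principle this is a concave function of the matrix. Choose $\mP$ to be the optimal rank-$(d-r)$ projector for one of the two matrices and use it as a test projector for the other. This gives
\[
|\widehat E_r-E_r(\mC_\Lambda)|\le\max_{\mP}|\tr(\mP(\hmC-\mC_\Lambda))|\le\sum_{j=1}^{d-r}\sigma_j(\hmC-\mC_\Lambda),
\]
where the last step is the Ky Fan $(d-r)$-norm bound $|\tr(\mP\bm X)|\le\sum_{j\le d-r}\sigma_j(\bm X)$ for a rank-$(d-r)$ projector $\mP$. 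The final inequality follows from $\sigma_j\le\sigma_1=\|\cdot\|_{\rm op}$.

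Alternatively, Lidskii's inequality applied to the sum of the bottom $d-r$ eigenvalues gives the same bound.
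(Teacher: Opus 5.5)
Your proof is correct. Your minimization step and your upper bound in \eqref{eq:misalign-quadratic} coincide with the paper's: the same weights $a_i=\vv_i^\top\hmP_r\vv_i$ and the same identification $s=\|\sin\Theta\|_F^2$.

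You depart from the paper in two places.

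\emph{Lower bound.} The paper cites the curvature lemma of Vu and Lei with $A=\mC_\Lambda$, $E=\mP_{r,\Lambda}$, $F=\hmP_r$. Because that lemma assumes a positive gap, the paper treats the case $\lambda_r=\lambda_{r+1}$ separately. You read the lower bound off the same weighted sum, bounding the two groups of eigenvalues by $\lambda_r$ and $\lambda_{r+1}$. Both sides of the sandwich then come from one computation, the argument is self-contained, and the zero-gap case needs no comment. The citation buys a generality not used here: Vu--Lei's lemma holds for every $F$ in the Fantope $\{F: 0\preceq F\preceq\mI,\ \tr F=r\}$, not only for rank-$r$ projectors.

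\emph{Perturbation bound.} For \eqref{eq:Ehat-weyl} the paper uses Lidskii's theorem, the route you mention as an alternative. The vector of eigenvalue differences lies in the convex hull of permutations of the eigenvalues of $\hmC-\mC_\Lambda$, and this bounds the sum over \emph{any} $d-r$ indices. Your main argument uses only three ingredients:
\begin{itemize}
\item the Ky Fan minimum principle for the sum of the $d-r$ smallest eigenvalues;
\item the two optimal projectors, swapped as test projectors;
\item the duality $|\tr(\mP\bm X)|\le\sum_{j\le d-r}\sigma_j(\bm X)$.
\end{itemize}
This is more elementary, and it suffices precisely because the tail is an extremal index set.

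Two of your remarks in that step are superfluous. Concavity of the tail sum plays no role, and neither does positive semidefiniteness of the two matrices. Symmetry and the variational characterization with swapped test projectors are all you actually use.
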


\emph{Proof.} See \Cref{app:thm-trunc}.

The excess is quadratic in the recovery angle $\|\sin\Theta\|_F$, so an angle of order $\epsilon$ changes the trace by order $\epsilon^2$. The rank-$r$ truncation error of $\mC_\Lambda$ is not a sampling error. It does not depend on $N$ and decreases as $r$ grows. The tail sum $\widehat E_r$ estimates it from the samples alone (\Cref{alg:cas}, step~6), and \eqref{eq:Ehat-weyl} bounds the error of that estimate by the leading singular values of $\hmC - \mC_\Lambda$. The stability result of score ratio matching \cite[Thm.~3.3]{BaptistaBrennanMarzouk2025}, applied with $\scS_\Lambda$ as the approximate score ratio of $\pi_{\vZ}$ to $\gauss_d$, gives $\DKL(\pi_{\vn}\,\|\,\pi_{\vn}(\,\cdot\,;\mV_{r,\Lambda})) \le \|\scS - \scS_\Lambda\|_{L^2(\pi_{\vZ})}^2 + E_r(\mC_\Lambda)$. Part~II \cite{CAS_partII} gives the corresponding bound at the estimated subspace $\hmV_r^{\mC}$ and estimates the Hermite truncation term $\|\scS - \scS_\Lambda\|_{L^2(\pi_{\vZ})}$. That truncation error need not be small for the subspace to be identified (\Cref{lem:pilot-subspace}).

\begin{lemma}[Identification under exact rank-$r$ dependence]
\label{lem:pilot-subspace}
Let $\pi_{\vZ}$ satisfy \Cref{ass:tempered} and suppose the dependence is exactly rank $r$: $c^{Z}(\vz) = c_r^{U}(\mV_r^\top\vz)$ for some $\mV_r \in \RR^{d\times r}$ with $\mV_r^\top\mV_r = \mI_r$, equivalently $\pi_{\vZ} = \pi_{\vU}\otimes\gauss_{d-r}$ in the coordinates $\vu = \mV_r^\top\vz$, $\vw = \mV_\perp^\top\vz$ of \Cref{sec:reduced-form}. Fix $K \geq 2$ and let $\Lambda_{K,d}$ be the index set of \eqref{eq:dict} with no maximum interaction order imposed (any $q \geq K$, since $|\alpha|_0 \le |\alpha|_1$), so that $\Lambda_{K,d} = \{\alpha \in \NN_0^d : 2 \le |\alpha|_1 \le K\}$, and let $\scS_\Lambda$ be the infinite-sample score-matching solution over $\Lambda_{K,d}$, the minimizer of $\EE_{\pi_{\vZ}}[\|\sum_{\alpha}\theta_\alpha\nabla H_\alpha - \scS\|^2]$. Then $\scS_\Lambda(\vz) \in \mathrm{span}(\mV_r)$ for every $\vz$, and $\scS_\Lambda$ depends on $\vz$ only through $\vu$. Consequently $\mC_\Lambda = \mV_r\,\mathbf M\,\mV_r^\top$ for a positive semidefinite $\mathbf M \in \RR^{r\times r}$, and whenever $\mC_\Lambda$ attains rank $r$ its top-$r$ eigenspace equals $\mathrm{span}(\mV_r)$, for any value of $\|\scS - \scS_\Lambda\|_{L^2(\pi_{\vZ})}$.
\end{lemma}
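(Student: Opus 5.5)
The plan is to reduce to adapted coordinates and then show that Gaussian averaging over the complement coordinates can only decrease the score-matching objective. Uniqueness of the $L^2$ projection then forces $\scS_\Lambda$ to equal its averaged version, which depends on $\vu$ alone.

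\emph{Step 1: rotation invariance of the model class.} Let $\mathcal P_K := \mathrm{span}\{H_\alpha : 2\le|\alpha|_1\le K\}$. By orthonormality of the tensor Hermite basis in $L^2(\gauss_d)$, $\mathcal P_K$ is exactly the $L^2(\gauss_d)$-orthogonal complement of the polynomials of degree $\le 1$ inside the polynomials of degree $\le K$. Both of these spaces are invariant under $p\mapsto p\circ\mQ$ for $\mQ\in\mathrm O(d)$, and so is the inner product of $L^2(\gauss_d)$, because $\gauss_d$ is rotation invariant. Hence $\mathcal P_K$ is rotation invariant. We may therefore work in the coordinates $(\vu,\vw) = ([\mV_r\ \mV_\perp]^\top\vz)$. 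The class $\scF_\Lambda = \{\nabla p : p\in\mathcal P_K\}$ is the same set of vector fields, expressed in the rotated frame. This is the step that uses the absence of an interaction-order limit: with $q<K$ the class would not be rotation invariant.

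\emph{Step 2: averaging over $\vw$.} The map $p\mapsto\nabla p$ is injective on $\mathcal P_K$, since $\mathcal P_K$ contains no constants. The Hermite truncation error is finite, and $\scF_\Lambda$ is a finite-dimensional, hence closed, subspace of $L^2(\pi_{\vZ};\RR^d)$. So the projection $\scS_\Lambda = \nabla p^\ast$ exists and is unique as an element of $L^2(\pi_{\vZ})$. Under the hypothesis, $\scS(\vz) = \mV_r\nabla\scL_r(\vu)$ and $\pi_{\vZ} = \pi_{\vU}\otimes\gauss_{d-r}$. The objective therefore splits as
\[
\EE\|\nabla p-\scS\|^2 = \EE_{\pi_{\vU}\otimes\gauss_{d-r}}\|\nabla_{\vu}p-\nabla\scL_r(\vU)\|^2 + \EE\|\nabla_{\vw}p\|^2 .
\]
Given any $p\in\mathcal P_K$, set $\bar p(\vu) := \EE_{\vW\sim\gauss_{d-r}}[p(\vu,\vW)]$. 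This is a polynomial of degree $\le K$ in $\vu$. Its projections onto the degree-$\le1$ Hermite polynomials in $\vu$ vanish, because $\langle\bar p,H_\beta\rangle_{\gauss_r} = \langle p, H_\beta(\vu)\rangle_{\gauss_d} = 0$ for $|\beta|_1\le1$. Hence $\bar p\in\mathcal P_K$.

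Differentiating under the expectation gives $\nabla_{\vu}\bar p(\vu) = \EE_{\vW}[\nabla_{\vu}p(\vu,\vW)]$. Conditional Jensen, with $\vW$ independent of $\vU$ under $\pi_{\vZ}$, bounds the first term of the objective for $\bar p$ by that for $p$. The second term is zero for $\bar p$. So $\bar p^\ast$ is also a minimizer. Uniqueness gives $\nabla p^\ast = \nabla\bar p^\ast$ $\pi_{\vZ}$-a.e. Since $\pi_{\vZ}>0$ and both sides are polynomials, the equality holds everywhere. Thus $\scS_\Lambda(\vz) = \mV_r\nabla\bar p^\ast(\mV_r^\top\vz)\in\mathrm{span}(\mV_r)$, and it depends only on $\vu$.

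\emph{Step 3: the spectral conclusion.} From Step 2, $\mC_\Lambda = \mV_r\mathbf M\mV_r^\top$ with $\mathbf M := \EE_{\pi_{\vU}}[\nabla\bar p^\ast\nabla\bar p^{\ast\top}]\succeq0$. If $\mC_\Lambda$ has rank $r$, then $\mathbf M$ is positive definite. The $r$ positive eigenvalues of $\mC_\Lambda$ then have eigenvectors spanning $\mathrm{range}(\mC_\Lambda) = \mathrm{span}(\mV_r)$, and the remaining eigenvalues are zero. So the top-$r$ eigenspace is $\mathrm{span}(\mV_r)$. None of this involves the size of $\|\scS-\scS_\Lambda\|_{L^2(\pi_{\vZ})}$.

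I expect the main obstacle to be Steps 1 and 2 together: checking that both the rotation and the Gaussian averaging keep us inside $\mathcal P_K$. This is where the exclusion of the constant and linear modes, and the absence of an interaction limit, must be handled. The finiteness and differentiation-under-the-integral points follow from (T1)–(T2) via \Cref{app:regularity}.
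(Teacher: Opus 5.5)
Your proof is correct. Steps 1 and 3 match the paper: rotation invariance of the span of $\{H_\alpha : 2\le|\alpha|_1\le K\}$, and then $\mC_\Lambda = \mV_r\mathbf M\mV_r^\top$.

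Step 2 takes a different route. The paper rewrites the model class in the product basis $H_\beta(\vu)H_{\beta''}(\vw)$. It then checks directly in $L^2(\pi_{\vZ};\RR^d)$ that the gradients of the mixed products ($\beta''\neq 0$) are orthogonal to $\scS$ and also to the gradients of polynomials in $\vu$ alone. So the projection onto the full gradient family equals the projection onto its active-only part. You instead average over $\vW\sim\gauss_{d-r}$. You show that this averaging maps $\mathcal P_K$ into itself and does not increase the objective, by Jensen together with discarding $\EE\|\nabla_{\vw}p\|^2$. Uniqueness of the $L^2$ projection then forces the minimizer to be a fixed point of the averaging.

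On $\mathcal P_K$, your averaging map is exactly the operator that deletes the mixed products. So both proofs rest on the same two facts:
\begin{itemize}
\item independence of $\vU$ and $\vW$ under $\pi_{\vZ}$;
\item $\EE_{\gauss_{d-r}}[H_{\beta''}] = 0$ for $\beta''\neq 0$.
\end{itemize}
The paper packages these as an explicit orthogonal splitting of the model space. You package them as a Rao--Blackwell-type improvement plus uniqueness.

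Each packaging has an advantage. Yours never needs the product basis or any cross inner products. It applies verbatim to any closed model class that averaging over the complement coordinates maps into itself. The paper's version exhibits the decomposition explicitly: it identifies $\scS_\Lambda$ as the projection onto an $r$-variable gradient family, and it isolates the two pairings where exact rank-$r$ structure is used.

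One small fill: when you conclude $\bar p\in\mathcal P_K$, you checked orthogonality only to degree-$\le 1$ polynomials in $\vu$. You also need orthogonality to linear functions of $\vw$. This is immediate, because $\vW$ is centered and independent of $\vU$ under $\gauss_d$.
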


\emph{Proof.} See \Cref{app:pilot-subspace}.

\begin{remark}[Identifying the active subspace with a coarse Stage-1 Hermite expansion]
\label{rem:pilot-subspace}
\Cref{lem:pilot-subspace} separates two requirements. Representing the copula score accurately requires a $\Lambda$ rich enough to make $\|\scS - \scS_\Lambda\|_{L^2(\pi_{\vZ})}$ small. Identifying the active subspace under exact rank-$r$ dependence requires only that $\mC_\Lambda$ have rank $r$.

Under exact rank-$r$ dependence, every eigenvector of $\mC_\Lambda$ with a nonzero eigenvalue lies in the active subspace at every $K \geq 2$, so a coarse $\Lambda$ can miss active directions but cannot introduce complement ones. The top-$r$ eigenspace equals the active subspace as soon as $\scS_\Lambda$ has a nonzero component along every active direction, which is the rank-$r$ condition of the lemma. Only the Stage-2 expansion, in $r$ variables, must represent a score accurately.

The lemma concerns the span at a known $r$. Truncation reweights the nonzero eigenvalues, so the lemma does not justify choosing $r$ by an eigenvalue threshold.

The estimator departs from the lemma in two ways. The maximum interaction order $q_1 < K_1$ can introduce components outside the active subspace, and the dependence need not be exactly rank $r$. \Cref{sec:exp-banana} measures the resulting deviation against the reference $\mV_r^{\mC}$ (\Cref{fig:recovery-row}).
\end{remark}

\section{Experiments}
\label{sec:experiments}

\paragraph{Methods} Four noise estimators are compared: (i) the product of marginals (PoM), the independence assumption of univariate noise modeling; (ii) the Gaussian copula, which admits arbitrary marginals but represents dependence only through a correlation matrix; (iii) PCA-HCSM, which shares the full \textsc{Cas} procedure but chooses its rank-$r$ subspace by variance; and (iv) \textsc{Cas}, the rank-$r$ score-covariance subspace with Hermite copula score matching on the projection. PoM and the Gaussian copula are simplifications named in \Cref{sec:intro}, and PCA-HCSM isolates the effect of the subspace choice.

On each of the example problems defined next, the experiments measure subspace recovery, noise KL divergence, and posterior KL divergence in a Bayesian inference problem. Only (iii) and (iv) choose a rank-$r$ subspace, so the subspace-recovery comparison uses those two, and the density and posterior comparisons use all four. \Cref{thm:kl-trunc} bounds the divergence of the reduction at a given subspace, the part of the noise KL divergence controlled by subspace recovery, and \eqref{eq:posterior-noise-kl} bounds the posterior KL divergence by the noise KL divergence. We report results against the number of samples $N$, over $N \in \{500, 1000, 2000, 5000, 10000, 20000, 50000\}$, with the numbers quoted in the text taken at the largest, $N = 50{,}000$. Per-example diagnostics and the sweep over reduced rank $r$ are reported in Supplement~\ref{sm:rate-real}.

\subsection{Example problems}
\label{sec:exp-benchmark}

The three examples share one construction, with a different map creating the dependence in each. Let $\vW \in \RR^{20}$ have independent standard Gaussian coordinates. In each example two of the first four coordinates are replaced by bounded nonlinear functions of the other two, each perturbed by independent Gaussian noise: in Examples~1 and~3 the pair $(W_0, W_1)$ determines the pair $(W_2, W_3)$, and in Example~2 there are two separate input--output pairs, $W_0$ determining $W_1$ and $W_2$ determining $W_3$. Each replaced coordinate is rescaled so that $\EE[W_j] = 0$ and $\Var(W_j) = 1$ for every $j$, and the remaining $16$ coordinates are left as independent standard Gaussians. Four latent coordinates therefore carry dependence, $r^\star = 4$. The latent vector is then mixed by a uniformly random rotation $\mQ \sim \mathrm{Unif}(\mathrm{O}(8))$, drawn from the Haar probability measure on the orthogonal group, acting on the first $8$ coordinates and as the identity on the remaining $12$, so that the four dependent directions are not axis-aligned; the componentwise map $n_j = \mathrm{logit}(\Phi(z_j))$ is then applied to $\vz = \mQ\vW$. Because $\mQ$ is orthogonal and $\EE[\vW\vW^\top] = \mI_{20}$, the latent coordinates satisfy $\EE[\vz\vz^\top] = \mI_{20}$, so their covariance does not distinguish the four dependent directions.

In Example~1, the primary example problem, the two outputs are the real and imaginary parts of $(W_0 + iW_1)^2$, saturated by $\tanh$:
\begin{equation}
\begin{gathered}
  W_0, W_1 \sim \gauss(0, 1), \\
  W_2 = c\,\tanh\!\bigl((W_0^2 - W_1^2)/M\bigr) + \sqrt{1 - a^2}\,\varepsilon_2, \\
  W_3 = c\,\tanh\!\bigl(2 W_0 W_1/M\bigr) + \sqrt{1 - a^2}\,\varepsilon_3,
\end{gathered}
  \label{eq:ex1-block}
\end{equation}
with $\varepsilon_2, \varepsilon_3 \sim \gauss(0,1)$, saturation scale $M = 2.5$, conditional-mean variance fraction $a = 0.7$, and amplitude $c$ fixed by $\Var(c\tanh(\cdot)) = a^2$.

In this law the latent pairwise correlations vanish, $\EE[W_iW_j] = 0$ for $i \ne j$, so the dependence is not visible in second moments. Its two-to-one map $z \mapsto z^2$ ($z = W_0 + iW_1$) makes the conditional law of the two input coordinates given the two output coordinates bimodal and symmetric under $z \mapsto -z$, and the posterior of \Cref{sec:exp-bip} has two branches.

The reference subspace $\mV_r^{\mC}$ of \Cref{cor:optimal-subspace} is the top-$r$ eigenspace of the true score covariance $\mC := \EE_{\vz \sim \pi_{\vZ}}[\nabla \log c^{Z}(\vz)\, \nabla \log c^{Z}(\vz)^\top]$, which has no closed form. Each mixed coordinate has a marginal density given by a Gaussian mixture over the input coordinates, so the copula score $\nabla\log c^{Z}$ can be evaluated directly; we evaluate $\mC$ by Monte Carlo at $N_{\mathrm{ref}} = 10^6$ from that score and take $\mV_r^{\mC}$ to be its top-$r$ eigenspace (Supplement~\ref{sm:rate-real}). The reference spectrum has top-$4$ eigenvalues $\approx 2.29, 2.28, 0.95, 0.92$, with the remaining $16$ eigenvalues summing to $\approx 0.04$. The cutoff $r^\star = 4$ equals the number of coupled latent coordinates.

\paragraph{Two further examples} Examples~2 and~3 change the map from inputs to outputs. The copula score of Example~2 is harder for a Hermite expansion to represent than that of Example~1, and the copula score of Example~3 is easier.

\emph{Example~2.} Each output is an even function of its input: for $j \in \{0, 2\}$, $W_{j+1} = c\,(\tanh(\gamma(W_j^2 - 1)) - m_0) + \sqrt{1 - a^2}\,\varepsilon$ with $\gamma = 0.5$, $a = 0.7$, centering $m_0 = \EE[\tanh(\gamma(W_j^2 - 1))]$, and variance-normalizing $c$. An even map is non-injective. The score covariance still identifies the active subspace. Its top-$4$ score-covariance eigenvalues are $\approx 2.28, 2.16, 0.95, 0.80$, with the remaining $16$ summing to $\approx 0.09$.

\emph{Example~3.} The two outputs are the real and imaginary parts of $(W_0 + iW_1)^3$: $W_2 = c\,\tanh((W_0^3 - 3 W_0 W_1^2)/M) + \sqrt{1 - a^2}\,\varepsilon_2$ and $W_3 = c\,\tanh((3 W_0^2 W_1 - W_1^3)/M) + \sqrt{1 - a^2}\,\varepsilon_3$, with $M = 4$, $a = 0.7$, $c \approx 1.48$; both output means vanish by symmetry. The score is smooth and the easiest of the three to represent, and the three-to-one map $z \mapsto z^3$ yields a three-modal posterior. Its top-$4$ score-covariance eigenvalues are $\approx 4.42, 4.39, 0.97, 0.94$, with the remaining $16$ summing to $\approx 0.04$.

All three examples appear in every stage of \Cref{sec:exp-banana}--\Cref{sec:exp-bip}, with Example~1 supplying the main reported numbers and the posterior illustration of \Cref{fig:bip-posteriors}.

\subsection{Subspace recovery: score covariance versus PCA}
\label{sec:exp-banana}

The latent covariance does not distinguish the four dependent directions, and this subsection checks that the score covariance finds them and that PCA does not. For $r = r^\star = 4$ we measure each estimate against $\mV_r^{\mC}$ over $20$ independent realizations at each $N$, with the multi-index set of each estimate chosen by cross-validation on its own sample (\Cref{sec:method-core}). The \textsc{Cas} angle decreases with $N$, while the PCA and uniformly random angles are $N$-independent and equal within the across-realization spread (\Cref{fig:recovery-row}, all three examples). The \textsc{Cas} angle also depends on the chosen multi-index set. The cross-validation criterion chooses interaction order $q_1 = 2$ in every realization for $N \le 10{,}000$ and a higher interaction order ($q_1 = 3$ on Examples~1--2, $q_1 = 4$ with even degrees on Example~3) in every realization at $N = 50{,}000$, where the mean angle is smaller than at $N = 10{,}000$ by factors of $16$, $6.0$ and $11$ on Examples~1--3. At $N = 20{,}000$ the higher order is chosen in $11$ and $3$ of the $20$ realizations on Examples~1 and~3, and the band is wider there.

A second comparison removes Stage-2 estimation. It evaluates the divergence of the reduction with the exact reduced density on the subspace, $D_{\mathrm{KL}}(\pi_{\vZ}\,\|\,\pi_{\vZ}(\,\cdot\,;\mV_r))$. The \textsc{Cas} divergence decreases with $N$ toward its value at the reference $\mV_r^{\mC}$ on all three examples, while the PCA divergence does not decrease and lies within the spread of uniformly random subspaces (Supplement~\ref{sec:subspace-only-kl}).

\begin{figure}[t]
\centering
\includegraphics[width=\textwidth]{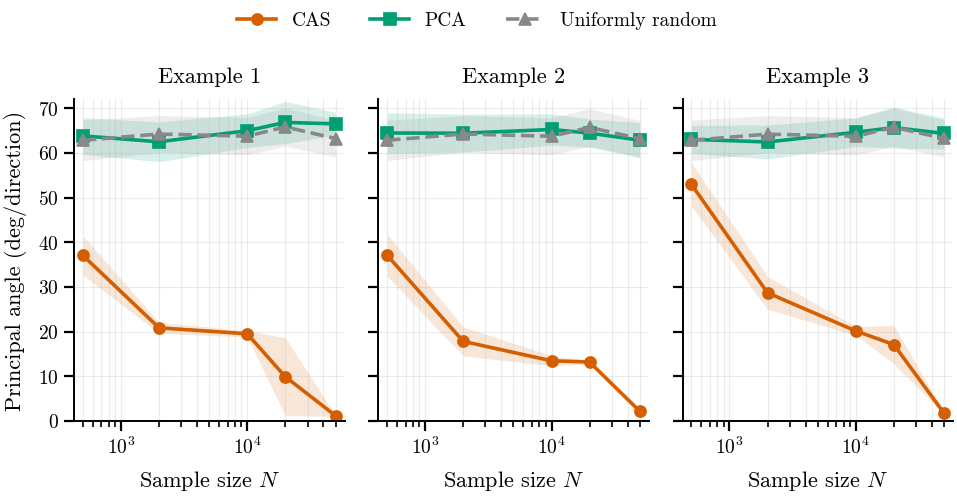}
\caption{Subspace recovery on Examples~1--3 is measured as the per-direction principal angle \cite{BjorckGolub1973} to the bound-optimal subspace $\mV_r^{\mC}$ versus sample size $N$, over $20$ independent realizations with bands of $\pm1$ standard deviation. The \textsc{Cas} angle decreases with $N$ and depends on the chosen multi-index set, while the angles of PCA and of a uniformly random subspace are $N$-independent.}
\label{fig:recovery-row}
\end{figure}

\subsection{Noise KL divergence of the reduced density}
\label{sec:exp-testll}

The rank is $r = r^\star = 4$, and both rank-$r$ methods use the regularizer of \Cref{sec:ridge}. Each method is estimated on a sample of $N$ residuals and evaluated on an independent sample of the same size, over $20$ independent realizations; within a realization all four estimators use the same estimation sample and the same evaluation sample, so the comparison is paired. \textsc{Cas} attains the lowest noise KL divergence (mean $0.128$; \Cref{fig:testll}), versus $0.70$ for PCA-HCSM, $0.70$ for PoM, and $0.70$ for the Gaussian copula. PCA-HCSM and \textsc{Cas} use the same Hermite polynomial basis, regularizer and quasi--Monte Carlo normalizer, so the gap between them is due to the choice of subspace. At $N = 50{,}000$ the across-realization standard deviation is $0.011$ for \textsc{Cas} and at most $0.009$ for the baselines. The same ordering holds on Examples~2 and~3 (\Cref{fig:testll}). As $N$ grows, the three baselines converge to a common noise KL divergence, set by dependence outside all three model classes, and the \textsc{Cas} divergence continues to decrease and is smaller at the largest $N$. Supplement~\ref{sm:rate-real} reports the full sweep over reduced rank $r$.

\begin{figure}[ht]
\centering
\includegraphics[width=\textwidth]{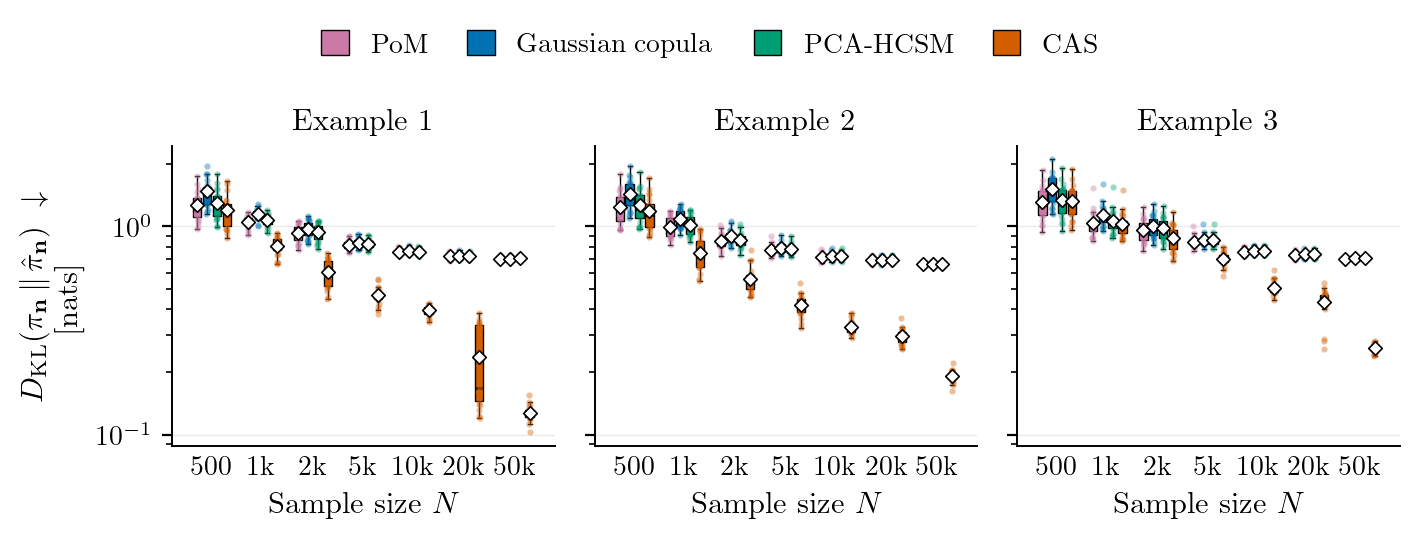}
\caption{Noise KL divergence on Examples~1--3 is shown for $r=4$ across sample sizes, with $20$ independent realizations at each $N$.}
\label{fig:testll}
\end{figure}

\subsection{Bayesian inference problem with non-Gaussian observation noise}
\label{sec:exp-bip}

This subsection measures the effect of the noise estimator on the posterior. The estimated noise density defines the likelihood $\widehat\pi_{\vn}(\vy - \mA\vx)$ of an observation $\vy$, and each candidate posterior is compared with the true-noise posterior.

\paragraph{Problem} The parameter dimension is $d_{\mathrm{par}} = 2$, the observation dimension is $d_{\mathrm{obs}} = 20$, and the prior is $\pi_X(\vx) = \gauss(\vx; 0, \sigma_0^2 I_2)$ with $\sigma_0 = 2$. The forward map observes the two input directions of Example~1, the first two columns of $\mQ$, which are orthonormal:
\begin{equation}
  \vy \;=\; \mA \vx + \vn, \qquad \mA \;=\; \alpha\, \mQ_{:,\,0{:}2}, \qquad \alpha = 1.5.
  \label{eq:bip-forward}
\end{equation}
Varying $\vx$ therefore moves the residual along those two directions alone, and the bimodal conditional law of \Cref{sec:exp-benchmark} gives the posterior two branches. We compare the four noise estimators at rank $r = 4$; for each of the $20$ independent realizations at each $N$, every method is estimated on that realization's residual sample and evaluated on one observation $\vy = \mA\vx^{\star} + \vn$. The KL divergence is computed on an $80\times 80$ grid over $[-6, 6]^2$, with each candidate posterior normalized to unit mass on the grid (\Cref{app:protocol}).

\Cref{fig:bip-kl} aggregates posterior KL divergence across all sample sizes. For $N = 50{,}000$, \textsc{Cas} attains mean posterior KL divergence $0.051$, versus $0.40$ for PCA-HCSM, $0.40$ for PoM, and $0.40$ for the Gaussian-copula baseline. At $N = 500$ the four means lie between $0.40$ and $0.49$ nats. \Cref{fig:bip-posteriors} shows the posteriors for one observation at $N=50{,}000$, chosen because it places mass on both preimage branches. The three baselines each produce a single broad mode between the branches, and only \textsc{Cas} places mass on both. The ranking matches the noise KL divergence of \Cref{fig:testll} and the subspace recovery of \Cref{fig:recovery-row}.

\begin{figure}[t]
\centering
\includegraphics[width=\textwidth]{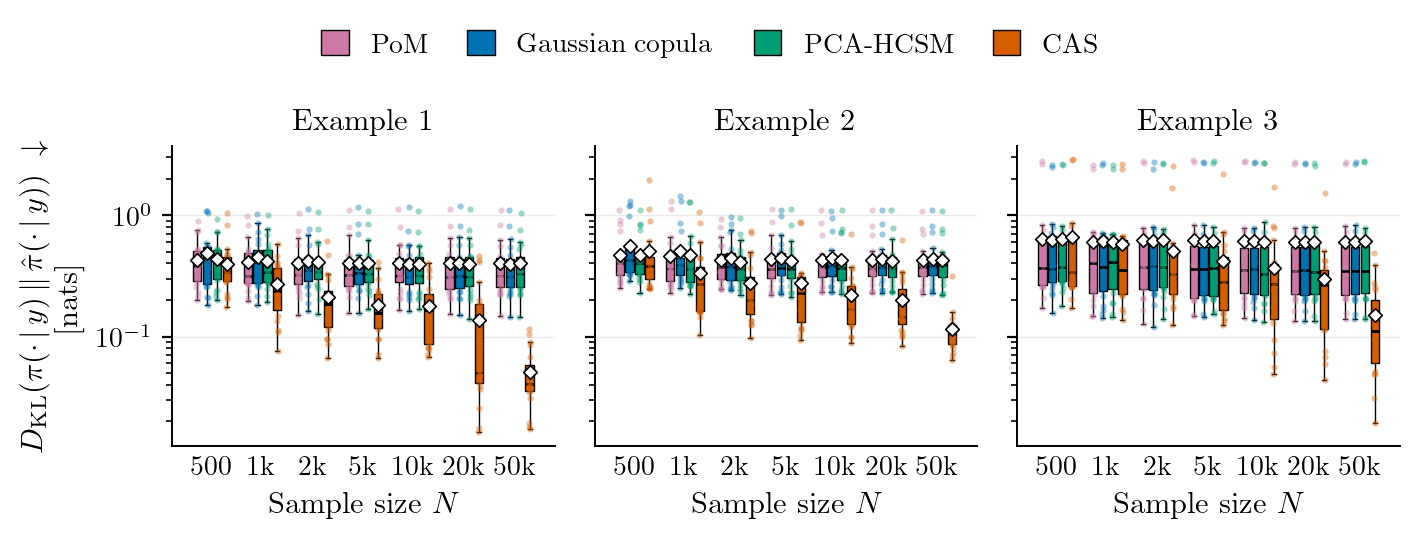}
\caption{Posterior KL divergence from the true-noise posterior is shown on Examples~1--3 across $N$ for $r=4$, with $20$ independent realizations at each $N$.}
\label{fig:bip-kl}
\end{figure}

\begin{figure}[t]
\centering
\includegraphics[width=0.85\textwidth]{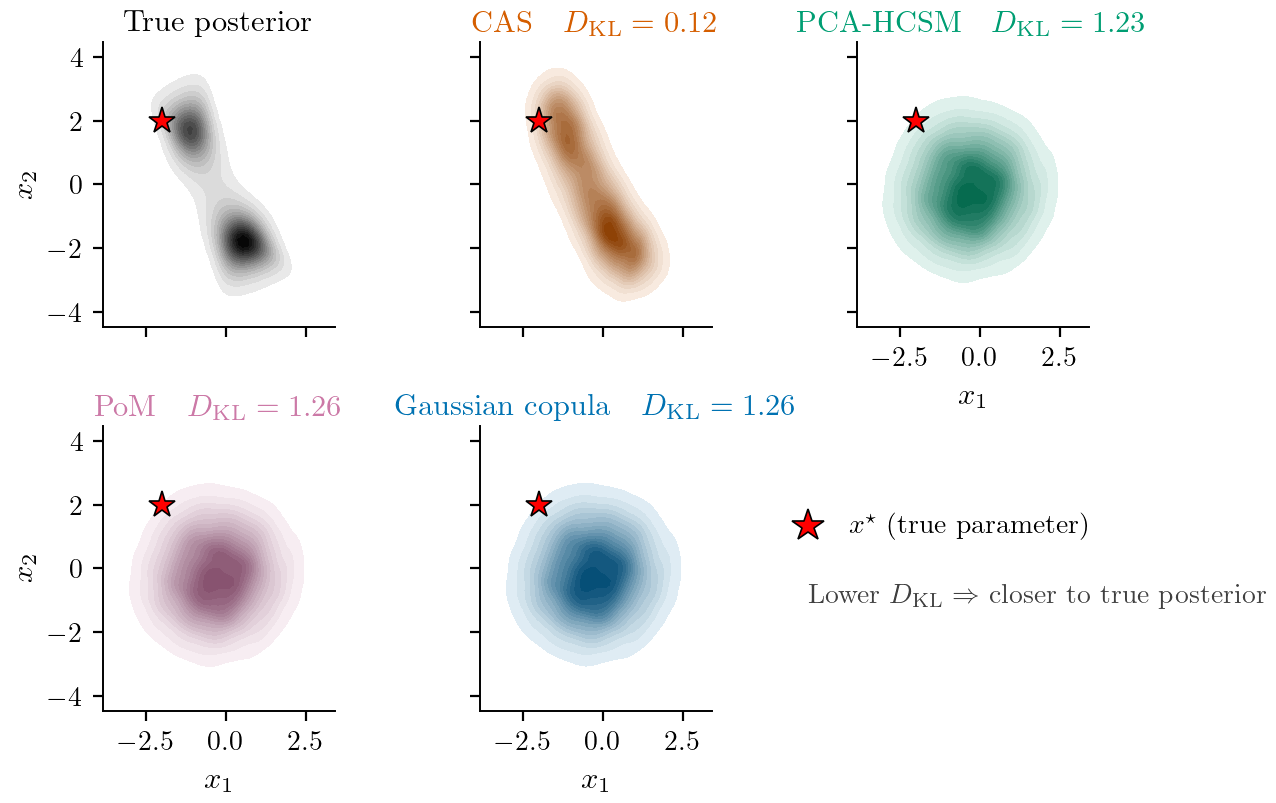}
\caption{Posterior shapes are compared for one observation of the inference problem of \Cref{sec:exp-bip} with $N=50{,}000$, for $r=4$, a $20$-dimensional observation, and a $2$-dimensional parameter; the observation is chosen to place mass on both preimage branches, and the divergences shown are for this observation, and \Cref{fig:bip-kl} gives the means. The true posterior places mass on both branches, and the \textsc{Cas} noise model reproduces both modes, while PCA-HCSM, the product-of-marginals model, and the Gaussian-copula baseline each yield a single broad mode.}
\label{fig:bip-posteriors}
\end{figure}

\section{Limitations and scaling}
\label{sec:limitations}

The reduction is linear in $\vZ$-space: dependence that varies along a curved set of directions is retained only through its variation along the chosen subspace, and transport-based reductions \cite{MarzoukMoselhyParnoSpantini2016, BaptistaMarzoukZahm2023representation, BrennanBigoniZahmSpantiniMarzouk2020, ChenArnaudBaptistaZahm2024} are the nonlinear alternatives. Assumption~\ref{ass:tempered} is a condition on the transformed law. The rank transform absorbs arbitrary continuous marginals, so (T1)--(T2) restrict the dependence structure and impose nothing on the marginal tails, and laws whose density or copula score grows faster than the stated bounds fall outside the analysis. The accuracy of the one-dimensional marginal estimates themselves does depend on the marginal tails, through the kernel choice (\Cref{sec:estimator}). The number of Hermite polynomials $|\Lambda_{K,q}|$ grows as $\binom{d}{q}$ in the interaction order; Supplement~\ref{sm:adaptivity} describes a budget-limited greedy extension for choosing the Stage-1 polynomial space. When the Hermite expansion of the reduced factor converges slowly, Stage~2 can use neural-network score ratio matching or a normalizing flow on the same subspace (\Cref{sec:reduced-model}).

Two further error sources fall outside the accounting above. The randomized eigendecomposition in \Cref{alg:cas} contributes a term to the Stage-1 subspace error, bounded by standard analysis \cite[Thm.~10.6 and \S 10.4]{HalkoMartinssonTropp2011}, and we do not measure it separately. We also do not separate the error of estimating the Stage-2 reduced density on $\hmV_r^{\mC}$ in place of the exact $\mV_r^{\mC}$. The rank transform separates marginal from dependence estimation, and the reduced model does not preserve that separation. Its $i$-th marginal is $\pi_i$ times a factor, and this factor equals one for every reduced factor exactly when row $i$ of $\mV_r$ vanishes. Supplement~\ref{sm:marginals} gives that factor and a bound on its deviation.

Within each estimate the Hermite truncation is fixed, with the truncation and the two constants $(C_{\rm samp}, C_{\rm curv})$ chosen by cross-validation of the score-matching objective (\Cref{sec:ridge}). The triple $(\Lambda, \mR, N)$ is not jointly optimized, and a jointly rate-optimal schedule of truncation and regularization in $N$ is beyond the scope of this work.

Finally, the guarantees of this paper concern population quantities. \Cref{thm:kl-trunc} uses the exact score covariance, and \Cref{prop:diagnostic} does not bound the distance from $\mC_\Lambda$ to $\mC$. The error of the marginal estimates is not bounded here. Part~II \cite{CAS_partII} studies these errors.

All of the above assumes a sample of the noise. When the residuals $\vn^{(i)} = \vy^{(i)} - \mathcal G(\vx^\star)$ stand in for it, they carry the error of the nominal parameter $\vx^\star$ as well as the noise, and the two are not separable from the residuals alone. Obtaining usable samples is a separate problem. The consistency diagnostics of \cite{DesroziersBerreChapnikPoli2005} recover observation-error covariances from innovation and analysis residuals in the Gaussian setting, and expectation--maximization estimates observation error jointly with the state or the model error \cite{DreanoTandeoPulido2017}. The review \cite{TandeoAilliotBocquet2020} covers the innovation-based family. Those methods estimate second moments, and pairing them with an estimator that recovers a full non-Gaussian density is a natural extension of this work.


\section{Conclusion}
\label{sec:conclusion}

\textsc{Cas} estimates non-Gaussian observation noise by componentwise rank transformation followed by a rank-$r$ score-covariance dimension reduction of the resulting copula. The copula score covariance $\mC$ equals zero under independence, so its leading eigenspace consists of dependence directions, and the ordinary covariance need not identify them. The same eigenspace minimizes an explicit KL divergence bound over all rank-$r$ dimension reductions. In the experiments the score covariance recovers the subspace and PCA does not. The recovered subspace lowers the noise KL divergence, and on a Bayesian inference problem the posterior KL divergence falls more than sevenfold at the largest sample size. The truncation orders and the two Stage-2 regularization constants are chosen from data by cross-validation, and the estimator provides $\log\widehat\pi_{\vn}$ and $\nabla\log\widehat\pi_{\vn}$ in closed form wherever the estimated rank transform is differentiable. The companion Part~II \cite{CAS_partII} develops the finite-sample analysis. Two natural extensions are a rotation of the residuals before the componentwise rank transform, and composition with a likelihood-informed subspace \cite{CuiMartinMarzoukSolonenSpantini2014, ZahmCuiLawSpantiniMarzouk2022}.

\section*{Reproducibility}
\label{sec:reproducibility}

Code, saved numerical results, and reproduction instructions are available at \url{https://github.com/joshuawchen/copula_active_subspaces}; the version used for this article is the release tagged \texttt{v1.0-part1}. The script \texttt{reproduce.py} reproduces the figures, using the saved results of the numerical experiments, and recomputes those results on request; the repository README identifies the script behind each figure and table. The \textsc{Cas} estimator is plain Python/NumPy, with no external learned components, and all settings, including the random states of the reported runs, are recorded in \Cref{app:protocol}.

\appendix
\crefalias{section}{appendix}
\crefalias{subsection}{appendix}

\section{Consequences of the regularity assumption}
\label{app:regularity}

This appendix derives the consequences of \Cref{ass:tempered} (stated in \Cref{sec:method}) used in the later proofs. Let $\pi_{\vn}$ be a probability density on $\RR^d$ with continuous marginals $F_1, \ldots, F_d$, and let $c$ be its copula density so that $\pi_{\vZ}(\vz) = c(\Phi(z_1), \ldots, \Phi(z_d))\,\gauss_d(\vz)$ (Sklar's theorem \cite{Sklar1959, Nelsen2006book}), with $\Phi$ the standard Gaussian CDF; thus $c^{Z} = c\circ\Phi$. Writing $\bm\omega := (\Phi(Z_1), \ldots, \Phi(Z_d))$ for $\vZ \sim \gauss_d$ gives $\bm\omega \sim \mathrm{Unif}([0,1]^d)$, so that
\begin{equation}
  \|\scL\|_{L^2(\gauss_d)}^2 \;=\; \EE_{\bm\omega \sim \mathrm{Unif}([0,1]^d)}\bigl[(\log c(\bm\omega))^2\bigr],
  \label{eq:L2-g-copula}
\end{equation}
and finiteness reduces to a property of $c$.

\emph{Consequences.} Five facts follow, each used in the proofs below. First, the marginal $\pi_{\vU} = (\mV_r^\top)_\#\pi_{\vZ}$ on $\RR^r$ inherits a pointwise sub-Gaussian bound from (T1): for any $\mV_r \in \St(r,\RR^d)$,
\begin{equation}
  \pi_{\vU}(\vu) \;\le\; K_1\bigl(1+\|\vu\|^{2m}\bigr)\,\gauss_r(\vu/\sigma_0)
  \qquad\text{for a.e.\ } \vu \in \RR^r,
  \label{eq:piU-envelope}
\end{equation}
with $K_1 = K_1(K_0, \sigma_0, m, r, d) < \infty$. To see it, rotate $\mV_r$ to align with the first $r$ axes, change variables, and integrate the bound in $\vw$. Second, $\nabla\scL \in L^2(\gauss_d;\RR^d)$, by (T2) and Gaussian moments, which is the integrability hypothesis of \cite[Cor.~2.10]{ZahmCuiLawSpantiniMarzouk2022} used in \Cref{thm:kl-trunc}. Third, $\scL \in L^2(\gauss_d)$: the pointwise bound of (T2) makes $\scL$ locally Lipschitz, and integrating along the ray from $\bm 0$ gives $|\scL(\vz)| \le |\scL(\bm 0)| + \alpha'\|\vz\| + \beta'\|\vz\|^{q+1}/(q+1)$, a polynomial in $\|\vz\|$, which lies in every $L^p(\gauss_d)$, $p<\infty$. Fourth, $\tr(\mC)<\infty$, needed to make the right-hand side of \Cref{thm:kl-trunc} finite, equivalently $\scS \in L^2(\pi_{\vZ};\RR^d)$. Under (T1) with finite $\sigma_0$, $\tr(\mC)=\EE_{\pi_{\vZ}}[\|\nabla\scL\|^2] \le \int(\alpha'+\beta'\|\vz\|^q)^2 K_0(1+\|\vz\|^{2m})\gauss_d(\vz/\sigma_0)\,d\vz<\infty$, a polynomial integrated against $\gauss_d(\cdot/\sigma_0)$, so this follows from (T1)--(T2) and need not be assumed separately.

Fifth, the score-matching identity of \Cref{lem:hyv-sm} holds. Its hypotheses \cite[Assumption~3.1]{BaptistaBrennanMarzouk2025} ask for a differentiable model score ratio and density, square-integrable model and true score ratios, and decay of the product of the density and the model score ratio at infinity. For the Hermite parametrization $\scL_r(\vu;\vbeta)=\sum_{\alpha \in \Lambda_r}\theta^{r}_\alpha H_\alpha(\vu)$ with $0\notin \Lambda_r$, every $\partial_j\scL_r(\cdot;\vbeta)$, $\partial_{jj}\scL_r(\cdot;\vbeta)$ and $U_j\partial_j\scL_r(\cdot;\vbeta)$ is a polynomial, so the marginal bound \eqref{eq:piU-envelope} makes each of them integrable and square-integrable under $\pi_{\vU}$ and gives $\partial_j\scL_r(\vu;\vbeta)\,\pi_{\vU}(\vu) \to 0$ as $\|\vu\|\to\infty$. For the true score ratio, differentiation under the integral (dominated by the same bound) gives $\nabla\log\pi_{\vU}(\vu) = \EE[\mV_r^\top\nabla\log\pi_{\vZ}(\vZ) \mid \vU = \vu]$, so conditional Jensen yields $\EE_{\pi_{\vU}}[\|\nabla\log\pi_{\vU}\|^2] \le \EE_{\pi_{\vZ}}[\|\nabla\log\pi_{\vZ}\|^2]$, finite since $\nabla\log\pi_{\vZ} = \scS - \vz$ with $\scS \in L^2(\pi_{\vZ};\RR^d)$ and $\EE[\|\vZ\|^2] = d$; with $\EE[\|\vU\|^2] \le d$, $\partial_j\scL_r = \partial_j\log\pi_{\vU} + U_j \in L^2(\pi_{\vU})$. The same bounds center both scores: $\nabla\pi_{\vZ} = \pi_{\vZ}\,(\scS - \vz)$ lies in $L^1(\RR^d;\RR^d)$, so $\pi_{\vZ}\in W^{1,1}(\RR^d)$ and $\EE_{\pi_{\vZ}}[\nabla\log\pi_{\vZ}] = \int\nabla\pi_{\vZ} = 0$. The same argument on $\RR^r$ gives $\EE_{\pi_{\vU}}[\nabla\log c_r^{U}] = \EE_{\pi_{\vU}}[\nabla\log\pi_{\vU}] + \EE[\vU] = 0$, since the marginals of $\vZ$ are centered. These are the identities invoked at \eqref{eq:L-T-C-def} and \eqref{eq:centering-mat}.

Strict positivity of $c$ on $(0,1)^d$ also rules out perfect-dependence singularities, and $\scZ$ is invertible almost surely through the generalized inverses of the continuous marginals, so the KL divergence is invariant under it.


\section{Score matching: the Hermite linear system}
\label{app:stage2-foundations}

This appendix gives the score-matching identity and the explicit Hermite linear system used in both stages: Stage 1 on $\RR^d$ (\Cref{sec:method-core}) and Stage 2 on the projected coordinates $\RR^r$ (\Cref{sec:reduced-model}).

\paragraph{Stage-1 linear system on $\RR^d$} Score matching \cite{Hyvarinen2005} estimates $\vtheta$ by minimizing the Fisher divergence between the model $\vz$-density score $\nabla\log\pi_{\vZ}(\cdot;\vtheta)$ and the true one, after which the copula score $\scS(\cdot;\widehat\theta)$ follows from \eqref{eq:hermite-expansion}. An integration by parts (the score-matching identity \cite[Thm.~1]{Hyvarinen2005}, whose hypotheses \Cref{lem:hyv-sm} below verifies under Assumption~\ref{ass:tempered}) removes the unknown true score from the objective, leaving a quadratic in $\vtheta$,
\begin{equation}
  \begin{aligned}
  J(\vtheta) &= \EE_{\pi_{\vZ}}\!\bigl[\tfrac12 \|\nabla\log\pi_{\vZ}(\vZ;\vtheta)\|^2 + \divop \nabla\log\pi_{\vZ}(\vZ;\vtheta)\bigr]\\
  &= \tfrac12 {\vtheta}^{\top} \mA \, \vtheta - \mb^\top \vtheta + \mathrm{const},
  \end{aligned}
  \label{eq:hyv-quadratic}
\end{equation}
with $\mA_{\alpha\beta} = \EE_{\pi_{\vZ}}[\nabla H_\alpha(\vZ) \cdot \nabla H_\beta(\vZ)]$ and $\mb_\alpha = \EE_{\pi_{\vZ}}[\vZ \cdot \nabla H_\alpha(\vZ) - \Delta H_\alpha(\vZ)]$. The Hermite polynomial derivative recurrence
\begin{equation}
  \partial_i H_\alpha(\vz) \,=\, \sqrt{\alpha_i}\, H_{\alpha - e_i}(\vz),
  \qquad
  \partial_i^2 H_\alpha(\vz) \,=\, \sqrt{\alpha_i(\alpha_i-1)}\, H_{\alpha - 2 e_i}(\vz)
  \label{eq:deriv-rec}
\end{equation}
expresses $\nabla H_\alpha$ and $\Delta H_\alpha$ in the Hermite basis. Replacing $\pi_{\vZ}$-expectations by sample averages over $\{\vZ^{(n)}\}_{n=1}^N$ gives
\begin{equation}
\begin{aligned}
  \widehat\mA &:= \frac1N\sum_{n=1}^N A(\vZ^{(n)}), & A(\vz)_{\alpha\beta} &:= \nabla H_\alpha(\vz) \cdot \nabla H_\beta(\vz), \\
  \widehat\mb &:= \frac1N\sum_{n=1}^N b(\vZ^{(n)}), & b(\vz)_\alpha &:= \textstyle\sum_{i=1}^d \bigl[z_i\sqrt{\alpha_i}H_{\alpha-e_i}(\vz) - \sqrt{\alpha_i(\alpha_i-1)}H_{\alpha-2e_i}(\vz)\bigr].
\end{aligned}
  \label{eq:A-b-formulas}
\end{equation}
We call this \emph{Hermite copula score matching} (HCSM, \Cref{alg:hsm}): a map from $\vz$-coordinate samples in $\RR^D$, a multi-index set $\Lambda \subset \NN_0^D$, and a regularizer $\mR$ to the coefficient estimate $\widehat\theta = (\widehat\mA+\mR)^{-1}\widehat\mb$. Equation~\eqref{eq:A-b-formulas} is HCSM in $D=d$ dimensions on $\{\vZ^{(n)}\}$ with $\Lambda = \Lambda_{K_1, q_1}$ and $\mR = \mR_{H^2}$ of \eqref{eq:tikhonov}, and Stage 2 invokes it in $D=r$ dimensions on the projected samples with the regularizer of \Cref{sec:ridge}.

\begin{algorithm}[h]
\caption{Hermite copula score matching (HCSM).}
\label{alg:hsm}
\small
\begin{algorithmic}[1]
\Require samples $\{\vw^{(k)}\}_{k=1}^N \subset \RR^D$; multi-index set $\Lambda \subset \NN_0^D$ with $0 \notin \Lambda$; regularizer $\mR \succeq 0$ with $\widehat\mA + \mR \succ 0$, given as a fixed matrix or as a function of the Gram matrix $\widehat\mA$; optional linear constraint matrix $\mathbf B \in \RR^{c\times|\Lambda|}$ of full row rank (default: none).
\Ensure coefficient vector $\widehat\theta \in \RR^{|\Lambda|}$ of the parametric log-density $\scL(\vw;\widehat\theta) = \sum_{\alpha\in\Lambda}\widehat\theta_\alpha H_\alpha(\vw)$.

\State \textbf{Build derivative-feature matrices} $\mPhi_j \in \RR^{N\times|\Lambda|}$, $(\mPhi_j)_{n\alpha} = \partial_j H_\alpha(\vw^{(n)})$, via the Hermite polynomial derivative recurrence \eqref{eq:deriv-rec}.

\State \textbf{Assemble empirical system} $(\widehat\mA, \widehat\mb) = \frac{1}{N}\sum_n (A(\vw^{(n)}), b(\vw^{(n)}))$ from \eqref{eq:A-b-formulas}.

\State If $\mR$ is data-dependent, evaluate $\mR \gets \mR(\widehat\mA)$. Set $\widehat\mM := \widehat\mA + \mR$.

\State \textbf{Solve} $\widehat\mM\widehat\theta = \widehat\mb$ by Cholesky if no $\mathbf B$; else solve the constrained KKT system, equivalently $\widehat\theta = \widehat\mM^{-1}\widehat\mb - \widehat\mM^{-1}\mathbf B^\top(\mathbf B\widehat\mM^{-1}\mathbf B^\top)^{-1}\mathbf B\widehat\mM^{-1}\widehat\mb$. \Return $\widehat\theta$.
\end{algorithmic}
\end{algorithm}

Both stages use the following identity.

\begin{lemma}[Score-matching identity for the reduced score]
\label{lem:hyv-sm}
Under Assumption~\ref{ass:tempered}, the unweighted Fisher divergence \eqref{eq:JW-pop} satisfies
\begin{equation}
  J(\vbeta)
  \;=\;\EE_{\pi_{\vU}}\!\Big[\sum_{j=1}^r\Big(\tfrac12\bigl(\partial_j\scL_r(\vU;\vbeta)\bigr)^2-U_j\,\partial_j\scL_r(\vU;\vbeta)+\partial_{jj}\scL_r(\vU;\vbeta)\Big)\Big]+C,
  \label{eq:sm-hyv}
\end{equation}
with $C=\tfrac12\sum_j\EE[(\partial_j\scL_r(\vU))^2]$, the true-score energy, independent of $\vbeta$.
\end{lemma}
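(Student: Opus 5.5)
The plan is to expand the square in \eqref{eq:JW-pop} and remove the single term that involves the unknown true score by an integration by parts against $\pi_{\vU}$. Writing $g_j := \partial_j\scL_r(\cdot;\vbeta)$ and $s_j := \partial_j\scL_r = \partial_j\log\pi_{\vU} + u_j$, we have
\[
J(\vbeta) = \sum_{j=1}^r\EE_{\pi_{\vU}}\bigl[\tfrac12 g_j^2 - g_j s_j + \tfrac12 s_j^2\bigr].
\]
The last term is the constant $C$. Each term here is finite by the fifth consequence in \Cref{app:regularity}, which says that $g_j$ is a polynomial and hence lies in $L^2(\pi_{\vU})$ by \eqref{eq:piU-envelope}, and that $s_j\in L^2(\pi_{\vU})$ by conditional Jensen. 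So the expansion is legitimate. The cross term splits as
\[
\EE[g_j s_j] = \EE[g_j\,\partial_j\log\pi_{\vU}] + \EE[U_j g_j].
\]
The second piece, taken with the minus sign, already gives the $-U_j\partial_j\scL_r$ term of \eqref{eq:sm-hyv}.

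For the remaining piece, note that $\EE[g_j\partial_j\log\pi_{\vU}] = \int_{\RR^r} g_j\,\partial_j\pi_{\vU}\,d\vu$. I will show that this equals $-\int \partial_j g_j\,\pi_{\vU}\,d\vu = -\EE[\partial_{jj}\scL_r(\vU;\vbeta)]$. Substituting this back gives exactly \eqref{eq:sm-hyv}. Equivalently, one can invoke \cite[Thm.~1]{Hyvarinen2005}, or its score-ratio version in \cite{BaptistaBrennanMarzouk2025}, once its hypotheses are checked, and those hypotheses are the ones collected in \Cref{app:regularity}.

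The main obstacle is justifying this integration by parts on all of $\RR^r$, because $\pi_{\vU}$ is only known to be weakly differentiable with $\nabla\pi_{\vU} = \pi_{\vU}\nabla\log\pi_{\vU}\in L^1$. I would handle it with a cutoff. Take $\chi_R(\vu)=\chi(\vu/R)$ with $\chi$ smooth, compactly supported, and equal to $1$ on the unit ball. Integrate by parts against $\chi_R g_j$, which is a compactly supported $C^1$ test function, so the identity holds in the weak sense for $\pi_{\vU}\in W^{1,1}_{\rm loc}$. This produces three integrals: $\int \chi_R g_j\partial_j\pi_{\vU}$, $-\int\chi_R\,\partial_jg_j\,\pi_{\vU}$, and the error term $-\int g_j\,\partial_j\chi_R\,\pi_{\vU}$.

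As $R\to\infty$, the first two converge by dominated convergence. For the first, $|g_j\partial_j\pi_{\vU}| = |g_j|\,|\partial_j\log\pi_{\vU}|\,\pi_{\vU}$ is integrable by Cauchy--Schwarz, since both factors lie in $L^2(\pi_{\vU})$. For the second, $\partial_j g_j$ is a polynomial and so is integrable under the envelope \eqref{eq:piU-envelope}. The error term is bounded by $\|\nabla\chi\|_\infty R^{-1}\int_{\|\vu\|\ge R}|g_j|\pi_{\vU}$, which tends to $0$ because polynomial moments are finite under \eqref{eq:piU-envelope}.

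The weak differentiability of $\pi_{\vU}$ follows by differentiating under the integral over $\vw$. This is dominated by (T1) and (T2) through the bound $|\nabla\pi_{\vZ}|\le(\|\scS\|+\|\vz\|)\pi_{\vZ}$, and it yields the conditional-expectation formula for $\nabla\log\pi_{\vU}$ already stated in \Cref{app:regularity}. The proof is then complete.
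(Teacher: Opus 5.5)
Your proof is correct and follows essentially the paper's argument. The paper rewrites the integrand as $\tfrac12\|w\|^2+\tr(\nabla w)-\vu^\top w$ with $w=\nabla\scL_r(\cdot;\vbeta)$, recognizes \eqref{eq:sm-hyv} as the score-ratio-matching (Hyv\"arinen) integration-by-parts identity relative to $\gauss_r$, and cites it with its hypotheses checked in \Cref{app:regularity}; you instead expand the square and carry out that integration by parts yourself. Your cutoff argument makes this step self-contained and needs only the tail bound $R^{-1}\int_{\|\vu\|\ge R}|g_j|\,\pi_{\vU}\to 0$ rather than the pointwise decay $\partial_j\scL_r(\vu;\vbeta)\,\pi_{\vU}(\vu)\to 0$ that the cited theorem assumes, and it draws on the same ingredients (polynomial model score, the envelope \eqref{eq:piU-envelope}, and square-integrability of the true reduced score).
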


\begin{proof}
With $w := \nabla\scL_r(\cdot;\vbeta)$ and reference $\rho := \gauss_r$, for which $\nabla\log\rho(\vu) = -\vu$, the integrand on the right of \eqref{eq:sm-hyv} is $\tfrac12 w^\top w + \tr(\nabla w) + \nabla\log\rho^\top w$, and $J(\vbeta) = \tfrac12\EE_{\pi_{\vU}}\|w - \nabla\log(\pi_{\vU}/\rho)\|^2$. The identity is therefore the score-ratio-matching identity \cite[Thm.~3.2]{BaptistaBrennanMarzouk2025} without the conditioning variable, which is the integration by parts of \cite[Thm.~1]{Hyvarinen2005} written relative to the Gaussian reference. Its hypotheses hold under Assumption~\ref{ass:tempered} by \Cref{app:regularity}. The Stage-1 identity used in \eqref{eq:hyv-quadratic} is the same statement on $\RR^d$ with $\pi_{\vZ}$ in place of $\pi_{\vU}$, with (T1) in place of \eqref{eq:piU-envelope}.
\end{proof}

\section{Proof of \Cref{prop:diagnostic}}
\label{app:thm-trunc}

\begin{proof}[Proof of \Cref{prop:diagnostic}]
Over rank-$r$ orthogonal projectors $\mP$, the trace $\tr((\mI-\mP)\mC_\Lambda)$ is smallest at $\mP_{r,\Lambda}$, with minimum $E_r(\mC_\Lambda)$ \cite[Prob.~III.6.11]{Bhatia1997}, so the excess at $\hmP_r$ is $\tr((\mP_{r,\Lambda}-\hmP_r)\mC_\Lambda)$. The lower bound in \eqref{eq:misalign-quadratic} is the curvature lemma \cite[Lem.~4.2]{VuLei2013} with $A = \mC_\Lambda$, $E = \mP_{r,\Lambda}$ and $F = \hmP_r$, and it holds trivially when $\lambda_r = \lambda_{r+1}$. For the upper bound, let $\vv_1,\dots,\vv_d$ be an orthonormal eigenbasis of $\mC_\Lambda$ and $a_i := \vv_i^\top\hmP_r\vv_i \in [0,1]$, so that $\sum_i a_i = \tr(\hmP_r) = r$ and
\[
  \tr\bigl((\mP_{r,\Lambda}-\hmP_r)\mC_\Lambda\bigr) \;=\; \sum_{i\le r}\lambda_i(1-a_i) - \sum_{j>r}\lambda_j a_j \;\le\; (\lambda_1-\lambda_d)\,s,
\]
with $s := \sum_{i\le r}(1-a_i) = \sum_{j>r}a_j = \|(\mI-\hmP_r)\mV_{r,\Lambda}\|_F^2 = \|\sin\Theta\|_F^2$. For \eqref{eq:Ehat-weyl}, $\widehat E_r - E_r(\mC_\Lambda)$ is a sum of $d-r$ entries of the vector $(\widehat\lambda_i - \lambda_i(\mC_\Lambda))_{i=1}^d$, which by Lidskii's theorem \cite[Ex.~III.4.3]{Bhatia1997} lies in the convex hull of the permutations of the eigenvalue vector of $\hmC - \mC_\Lambda$. The sum is therefore at most, in modulus, the sum of the $d-r$ largest moduli of the eigenvalues of the symmetric matrix $\hmC - \mC_\Lambda$, which are its singular values $\sigma_1,\dots,\sigma_{d-r}$, each at most $\|\hmC-\mC_\Lambda\|_{\rm op}$.
\end{proof}

\section{Centering constraint: proof of \Cref{prop:centering}}
\label{app:centering}

\begin{proof}[Proof of \Cref{prop:centering}]
\emph{Projector form.} The centered estimator \eqref{eq:stage2-closedform} minimizes the strictly convex quadratic $\tfrac12{\vbeta}^\top\widehat\mM\vbeta - {\vbeta}^\top\widehat\mb_r$ subject to $\widehat\mB\vbeta = 0$, an equality-constrained quadratic program whose first-order conditions are the KKT system \cite[eq.~(16.4)]{NocedalWright2006}. Eliminating the multiplier by the Schur complement of $\widehat\mM$ in that system \cite[\S16.2, eq.~(16.16)]{NocedalWright2006} yields the closed form \eqref{eq:stage2-closedform}, which rearranges to $\hvbeta = (\mI - \Pi_{\widehat\mB})\widehat\mM^{-1}\widehat\mb_r = (\mI - \Pi_{\widehat\mB})\hvbeta_{\rm unc}$. The matrix $\Pi_{\widehat\mB}$ is idempotent with $\widehat\mM\Pi_{\widehat\mB}$ symmetric, hence the $\widehat\mM$-orthogonal projector onto $\mathrm{range}(\widehat\mM^{-1}\widehat\mB^\top)$. Both the elimination and the uniqueness of $(\hvbeta, \lambda)$ require the Schur complement $\widehat\mB\widehat\mM^{-1}\widehat\mB^\top$ to be nonsingular, which holds when $\widehat\mB$ has full row rank, since $\widehat\mM \succ 0$ \cite[Lem.~16.1]{NocedalWright2006}.

\emph{Bias decomposition.} These are deterministic limiting matrices, so \eqref{eq:centering-bias} is an exact identity. From $\bm b^\star = \mA^\star\bm\theta^{r,\star} = (\bm M - \mR)\bm\theta^{r,\star}$ the unconstrained limiting solve is $\bm M^{-1}\bm b^\star = \bm\theta^{r,\star} - \bm M^{-1}\mR\bm\theta^{r,\star}$. Applying $\mI-\Pi^\star$ and subtracting $\bm\theta^{r,\star}$,
\[
  \bar{\bm\theta} - \bm\theta^{r,\star} = (\mI-\Pi^\star)(\bm\theta^{r,\star} - \bm M^{-1}\mR\bm\theta^{r,\star}) - \bm\theta^{r,\star} = -(\mI-\Pi^\star)\bm M^{-1}\mR\bm\theta^{r,\star} - \Pi^\star\bm\theta^{r,\star},
\]
which is \eqref{eq:centering-bias}. Since $\Pi^\star$ is $\bm M$-orthogonal, $\|\mI-\Pi^\star\|_{\bm M} \le 1$, so the first term does not exceed $\|\bm M^{-1}\mR\bm\theta^{r,\star}\|_{\bm M}$ in $\bm M$-norm.

\emph{Basis-truncation identity.} The limiting centering matrix acts as $(\mB^\star\vbeta)_j = \EE_{\pi_{\vU}}[\partial_j\scL_r(\vU;\vbeta)]$, the mean reduced score of the model. The truncated minimizer $\bm\theta^{r,\star}$ has score $\Pi_{\Lambda_r}\nabla\scL_r$, the $L^2(\pi_{\vU})$ projection of the true reduced score onto the Stage-2 multi-index set, so $\mB^\star\bm\theta^{r,\star} = \EE_{\pi_{\vU}}[\Pi_{\Lambda_r}\nabla\scL_r] = \EE_{\pi_{\vU}}[\nabla\scL_r] - \EE_{\pi_{\vU}}[\bm\rho] = -\EE_{\pi_{\vU}}[\bm\rho]$, using $\EE_{\pi_{\vU}}[\nabla\scL_r] = 0$ (\Cref{app:regularity}) and $\bm\rho := \nabla\scL_r - \Pi_{\Lambda_r}\nabla\scL_r$. Substituting into $\Pi^\star\bm\theta^{r,\star} = \bm M^{-1}\mB^{\star\top}(\mB^\star\bm M^{-1}\mB^{\star\top})^{-1}\mB^\star\bm\theta^{r,\star}$ gives
\[
\begin{gathered}
  \Pi^\star\bm\theta^{r,\star} = -\bm M^{-1}\mB^{\star\top}\bigl(\mB^\star\bm M^{-1}\mB^{\star\top}\bigr)^{-1}\EE_{\pi_{\vU}}[\bm\rho],\\
  \|\Pi^\star\bm\theta^{r,\star}\|_{\mA^\star} \le \bigl\|\bm M^{-1}\mB^{\star\top}(\mB^\star\bm M^{-1}\mB^{\star\top})^{-1}\bigr\|_{\mA^\star}\,\|\EE_{\pi_{\vU}}[\bm\rho]\|,
\end{gathered}
\]
the bound finite by nonsingularity of the Schur complement and zero iff $\EE_{\pi_{\vU}}[\bm\rho] = 0$.

\emph{Asymptotic law.} The map $\varphi(\mB, \mM, \mb) := (\mI - \mM^{-1}\mB^\top(\mB\mM^{-1}\mB^\top)^{-1}\mB)\,\mM^{-1}\mb$ gives $\hvbeta = \varphi(\widehat\mB, \widehat\mM, \widehat\mb_r)$ and $\bar{\bm\theta} = \varphi(\mB^\star, \bm M, \bm b^\star)$, and it is continuously differentiable near $(\mB^\star, \bm M, \bm b^\star)$, where $\mB^\star\bm M^{-1}\mB^{\star\top}$ is invertible. The entries of $(\widehat\mB, \widehat\mM, \widehat\mb_r)$ are sample means of Hermite features with finite second moments by \eqref{eq:piU-envelope}. The multivariate central limit theorem and the delta method \cite[Thm.~3.1]{VanDerVaart1998} give asymptotic normality of $\sqrt N(\hvbeta - \bar{\bm\theta})$, and in particular $\|\hvbeta - \bar{\bm\theta}\| = O_p(N^{-1/2})$.

\emph{Variance non-increase.} For any random $\bm\xi$ with finite second moment, write $\widetilde\Pi := \bm M^{1/2}\Pi^\star\bm M^{-1/2}$, an orthogonal projection by $\bm M$-self-adjointness of $\Pi^\star$. Then
\begin{multline*}
  \tr\bigl(\mA^\star\Cov((\mI-\Pi^\star)\bm\xi)\bigr) \le \tr\bigl(\bm M\Cov((\mI-\Pi^\star)\bm\xi)\bigr) \\
  = \tr\bigl((\mI-\widetilde\Pi)\,\bm M^{1/2}\Cov(\bm\xi)\bm M^{1/2}\,(\mI-\widetilde\Pi)\bigr) \le \tr\bigl(\bm M\Cov(\bm\xi)\bigr),
\end{multline*}
using $\mA^\star \preceq \bm M$ (as $\mR \succeq 0$) and that an orthogonal projection does not increase the trace. This is \eqref{eq:centering-var}.
\end{proof}


\section{The constrained Stage-2 solve}
\label{app:solver}

The estimator \eqref{eq:stage2-est} minimizes the strictly convex quadratic of \eqref{eq:stage2-closedform} over the convex set $\mathcal C$, assumed nonempty. A polynomial is bounded above on all of $\RR^r$ only if its degree is even and its leading form is nonpositive on the unit sphere. Let $K_{\rm eff}$ be the largest even integer at most $K_2$ and $F[\vbeta](\vv) := \sum_{|\alpha|_1 = K_{\rm eff}}\theta^r_\alpha\,\vv^\alpha/\sqrt{\alpha!}$ the corresponding leading form, the coefficients of degree $K_2$ being set to zero when $K_2$ is odd. We impose the stronger condition $F[\vbeta](\vv) \le -\varepsilon$ for $\|\vv\| = 1$, together with the level bound $\scL_r(\vu;\vbeta) \le \tau$ on the ball $\|\vu\| \le r_{\max}$ containing the data (radius and tolerances in \Cref{app:protocol}). The margin gives $\scL_r(\vu;\vbeta) \le -\varepsilon\|\vu\|^{K_{\rm eff}} + O(\|\vu\|^{K_{\rm eff}-1})$ as $\|\vu\|\to\infty$, so the two families together bound $\scL_r$ above on $\RR^r$; the level $\tau$ is imposed on the ball only. The two families can conflict when $\tau$ is small (for $r = 1$ and $K_2 = 4$, any leading form with margin $\varepsilon$ forces $\scL_r \ge 2\varepsilon$ somewhere on every ball of radius at least one), which is why nonemptiness is a hypothesis of the ideal problem; what the solver does when its search does not meet the level bound is described below.

One alternative would ask that $-F[\vbeta] - \varepsilon\|\vv\|^{K_{\rm eff}}$ be a nonnegative form and verify this by a semidefinite program. A semidefinite program, however, verifies only sums of squares, in general a strict subset of the nonnegative forms \cite{Hilbert1888, Blekherman2006}, and that route would exclude admissible coefficient vectors. We instead impose the two constraint families directly. Each is linear in the coefficients at a fixed point $\vv$ or $\vu$, and the finite point set is enlarged iteratively. The quadratic program is solved under the current set, $F[\vbeta]$ is maximized over the sphere and $\scL_r(\cdot;\vbeta)$ over the ball, the maximizing points are added, and the iteration repeats until both maxima are within tolerance of their bounds. The iteration is the general algorithm of \cite[\S 2.1]{BlankenshipFalk1976}, and Theorem~2.1 there states that every accumulation point of the iterates solves the program, under continuity of the constraint functions and compactness of the index sets: each constraint here is a polynomial in the coefficients and the index point jointly, and the index sets are the sphere and the ball; \cite{HettichKortanek1993} is a survey. The theorem's remaining hypothesis, compactness of the decision set, is used for the existence of the iterates and of an accumulation point, and both are supplied by the objective: each approximating problem minimizes the strictly convex coercive quadratic over a closed set that contains the feasible set, so its solution exists, is unique, and has objective value at most the constrained minimum, and the iterates therefore lie in one compact sublevel set. The feasible set is convex and the objective strictly convex, so the minimizer is unique.

The theorem assumes exact inner maximizations. The implementation maximizes numerically, $F[\vbeta]$ over $2\times10^5$ random directions on the sphere and $\scL_r(\cdot;\vbeta)$ over $8\times10^4$ random points in the ball, each followed by local refinement from the best candidates; a violating point is added as a cut, together with the points at $0.9$ and $1.1$ times its norm for the level family, the outer of which can lie just outside the ball; and the iteration stops when the largest value found is within $\varepsilon/2$ of the leading-form bound and within $10^{-3}$ of the level bound, or after $30$ rounds, in which case the last iterate is kept and the fit records which family is unmet. Feasibility of the returned $\hvbeta$ is checked by that search to those tolerances; a finite set of pointwise inequalities does not certify a polynomial inequality on the sphere. In the experiments of \Cref{sec:experiments}, at rank $4$, the search met both tolerances in every fit; in the rank scan of Supplement~\ref{sec:pca-ablation} some fits at ranks $6$ and $8$ reach the round cap with the level bound unmet, and the composed estimate \eqref{eq:cas-estimator} is normalizable regardless. The map $\widehat\mb_r \mapsto \hvbeta$ is piecewise affine for a fixed finite cut set.


\section{Identification under exact rank-$r$ dependence: proof of \Cref{lem:pilot-subspace}}
\label{app:pilot-subspace}
\begin{proof}[Proof of \Cref{lem:pilot-subspace}]
The objective is quadratic in $\bm\theta$, so $\scS_\Lambda$ is the $L^2(\pi_{\vZ};\RR^d)$-orthogonal projection of $\scS$ onto $\mathcal D := \mathrm{span}\{\nabla H_\alpha : \alpha \in \Lambda_{K,d}\}$, and the projection does not depend on the basis chosen for $\mathcal D$. The polynomials $\{H_\alpha : \alpha \in \Lambda_{K,d}\}$ span every polynomial of total degree at most $K$ that is $\gauss_d$-orthogonal to constants and to linear functions; both conditions are invariant under rotations of $\RR^d$, so in the coordinates $(\vu, \vw)$ the same span is generated by the products $H_\beta(\vu)H_{\beta''}(\vw)$ with $2 \le |\beta|_1 + |\beta''|_1 \le K$. Exact rank-$r$ dependence gives $\scS(\vz) = \mV_r\,\nabla_{\vu}\log c_r^{U}(\vu)$ and $\pi_{\vZ} = \pi_{\vU}\otimes\gauss_{d-r}$. For a product $p = H_\beta(\vu)H_{\beta''}(\vw)$ the gradient is $\nabla p = \mV_r(\nabla_{\vu}H_\beta)H_{\beta''} + \mV_\perp H_\beta(\nabla_{\vw}H_{\beta''})$. Suppose $\beta'' \neq 0$ and pair $\nabla p$ with $\scS$ in $L^2(\pi_{\vZ};\RR^d)$: the second summand vanishes pointwise because $\mV_\perp^\top\mV_r = 0$, and the first has expectation $\EE_{\pi_{\vU}}[(\nabla_{\vu}H_\beta)\cdot\nabla_{\vu}\log c_r^{U}]\;\EE_{\gauss_{d-r}}[H_{\beta''}] = 0$, since $H_{\beta''}(\vW)$ is independent of $\vU$ and $\EE_{\gauss_{d-r}}[H_{\beta''}] = 0$. Pairing $\nabla p$ with $\nabla g$ for any polynomial $g$ of the active coordinates alone gives zero by the same two observations. The gradients of the products with $\beta'' \neq 0$ therefore span a subspace of $\mathcal D$ orthogonal both to $\scS$ and to the gradients of the polynomials of the active coordinates alone, so the projection of $\scS$ onto $\mathcal D$ equals its projection onto the latter family, and every member of that family has the form $\mV_r\nabla_{\vu}g(\vu)$. This proves the first claim. The remaining claims follow from $\mC_\Lambda = \EE[\scS_\Lambda\scS_\Lambda^\top] = \mV_r\,\EE[\bm m\bm m^\top]\,\mV_r^\top$ with $\bm m := \mV_r^\top\scS_\Lambda$, whose range lies in $\mathrm{span}(\mV_r)$ and equals it exactly when the rank is $r$; thus $\mathbf M = \EE[\bm m\bm m^\top]$.
\end{proof}


\section{Details of the numerical experiments}
\label{app:protocol}

All experiments are reproducible from the released code (Python, NumPy/SciPy) under the fixed random states recorded there. Each realization of \Cref{sec:experiments} at sample size $N$ and seed $s$ draws its estimation and evaluation samples in that order from one generator seeded at $7000 + s$, its observation from a generator seeded at $20000 + s$, its Stage-1 and Stage-2 dictionary selections from split generators seeded at $s$ and $s + 1$, and the split of \Cref{sec:ridge} from a generator seeded at $20260715 + s$; the posterior illustration of \Cref{fig:bip-posteriors} uses seeds $59000$ and $20025$. Rank-Gaussianization resolves ties by average ranks; the transform $\widehat F_i$ takes the values $R_i^{(k)}/(N+1)$ at the sample points, is interpolated between them by monotone piecewise cubic interpolation \cite{FritschCarlson1980}, and takes the values $1/[2(N+1)]$ and $1 - 1/[2(N+1)]$ beyond the sample range, so the composed density jumps at the sample extremes; the density factors $\widehat\pi_i$ are the kernel estimates. The normalizer $\widehat Z_{\vn}$ of \eqref{eq:cas-estimator} is estimated by scrambled Sobol' quasi--Monte Carlo \cite{Sobol1967, Owen1995} against the product of the kernel estimates with $2^{17}$ points, and the Gaussian-copula baseline is normalized the same way. On one fit per example at $N = 50{,}000$, three independent scrambles give $\log\widehat Z_{\vn}$ within a range of $0.0081$ nats at $2^{17}$ points, with mean within $0.002$ nats of that at $2^{19}$ points. The randomized range finder \cite{HalkoMartinssonTropp2011} of \Cref{alg:cas} uses oversampling $p = 10$ with $q_{\text{pwr}} = 2$ power-iteration steps. The multi-index sets of \Cref{sec:method-core} are chosen from Stage-1 candidates with total degree $K_1 = 4$ and interaction orders $q_1 \in \{2, 3, 4\}$ plus even-degree variants, and Stage-2 candidates $(K_2, q_2) \in \{(4,3), (5,3), (6,3), (5,4), (6,4)\}$, taking the smallest multi-index set among candidates within one standard error of the minimum cross-validation criterion \cite[\S 7.10]{HastieTibshiraniFriedman2009}. The constants are chosen per estimate on the grid $C_{\rm samp} \in \{0.3, 0.95, 3, 9.5, 30\}$, $C_{\rm curv} \in \{10^{-6}, 3.2\times10^{-6}, 10^{-5}, 3.2\times10^{-5}, 10^{-4}\}$, scored on half the Stage-2 sample, and the estimate is refitted on the whole of it at the selected pair; the released code names them \texttt{c\_cov} and \texttt{c\_sob}. PCA-HCSM uses the Stage-2 multi-index set selected for \textsc{Cas} on the same sample, the same regularizer with its constants chosen by the same rule on the same grid, the same direct projected coordinates, and the same normalization. The Stage-2 solve of \Cref{app:solver} uses the leading-form margin $\varepsilon = 10^{-3}$ on the unit sphere and enforces the level bound on the ball of radius $r_{\max} = \max(18,\ 1.5\max_n\|\vu_n\|)$. The posterior KL divergence of \Cref{sec:exp-bip} normalizes each candidate on the grid with equal cell weights, drops grid points at which the true posterior is below $10^{-15}$, and floors the candidate at $10^{-15}$.
The amplitude and centering constants of Examples~2 and~3 are computed once by $129$-point Gauss--Hermite quadrature \cite{GolubWelsch1969}, and both laws have closed-form $\log\pi_{\vn}$ with direct forward simulation. The parameter dimension is kept at $d_{\mathrm{par}} = 2$ so that every posterior is tractable on a uniform grid, isolating the noise-estimator effect from high-dimensional posterior sampling. All remaining settings are stated where the corresponding experiments are reported in \Cref{sec:experiments}.

\clearpage
\def\siamprelabel{SM}
\def\siampretitle{Supplementary Materials: }
\let\section\CASsection
\renewcommand\thesection{\siamprelabel\arabic{section}}
\crefalias{section}{section}
\crefalias{subsection}{subsection}
\setcounter{section}{0}\setcounter{equation}{0}\setcounter{figure}{0}
\setcounter{table}{0}\setcounter{theorem}{0}\setcounter{algorithm}{0}
\setcounter{page}{1}
\headers{Copula Active Subspaces}{J.~Chen and P.~J.~van Leeuwen}
\begin{center}
{\bfseries\MakeUppercase{Supplementary Materials: Copula Active Subspaces I: A Score-Covariance Method for Reduced-Order Non-Gaussian Density Estimation}\par}
\vskip .075in
{\footnotesize\scshape Joshua Chen \scriptsize AND \footnotesize Peter Jan van Leeuwen}
\end{center}
\vskip .11in

These supplementary materials carry the empirical support and the optional methodological extensions of Part~I.
\begin{itemize}
\item \Cref{sm:rate-real} measures the dependence of Stage-1 subspace recovery on $(K_1, q_1)$, reports the PCA subspace baseline on Example~1 and the Bayesian inference problem, compares subspaces with Stage-2 estimation removed, and measures the accuracy of the tail-sum diagnostic $\widehat E_r$ against a numerical reference.
\item \Cref{sm:s2reg-empirical} compares the Stage-2 regularizer $\mR$ across the coefficient $C$, specifies the validation choice of the two constants and the centering constraint, and sets $\mR$ against $\mR_{H^2}$ at the best $\kappa$ chosen separately for each $N$, on all three examples.
\item \Cref{sm:adaptivity} describes an optional greedy choice of the Stage-1 degree and interaction order using reusable quadratic solves, with its costs and the meaning of its stopping rule.
\item \Cref{sm:marginals} gives the marginals of the reduced model and bounds their deviation from the true marginals.
\end{itemize}
Throughout, $\vn$ denotes the observation noise in its original (pre-rank-transform) coordinates, with law $\pi_{\vn}$, and all other notation follows the main manuscript. Numbers carrying the SM prefix (\S\ref{sm:rate-real}, \Cref{tab:pca-banana}, \eqref{eq:subspace-only-kl}) refer to these supplementary materials, and numbers without it refer to the main manuscript.

\section{Stage-1 subspace convergence and PCA subspace baseline}
\label{sm:rate-real}

This section reports the dependence of Stage-1 subspace recovery on $(K_1, q_1)$ and the per-$(r, N)$ numbers behind the PCA baseline of \S\ref{sec:exp-testll}.

On the examples of \S\ref{sec:exp-benchmark} the copula score need not be represented exactly by a fixed Hermite multi-index set, which does not by itself imply a subspace error (\Cref{lem:pilot-subspace}). Main-paper \Cref{fig:recovery-row} measures the combined effect of the multi-index set, the Stage-1 penalty and sampling against the reference $\mV_r^{\mC}$ at $N_{\rm ref} = 10^6$, at the multi-index sets the cross-validation criterion chooses; the angle is smaller where a richer set is chosen. The score-covariance spectra of the three example problems appear side by side in \Cref{fig:eigvals-all}, with Example~1 in the left panel.

\begin{figure}[h]
\centering
\includegraphics[width=\textwidth]{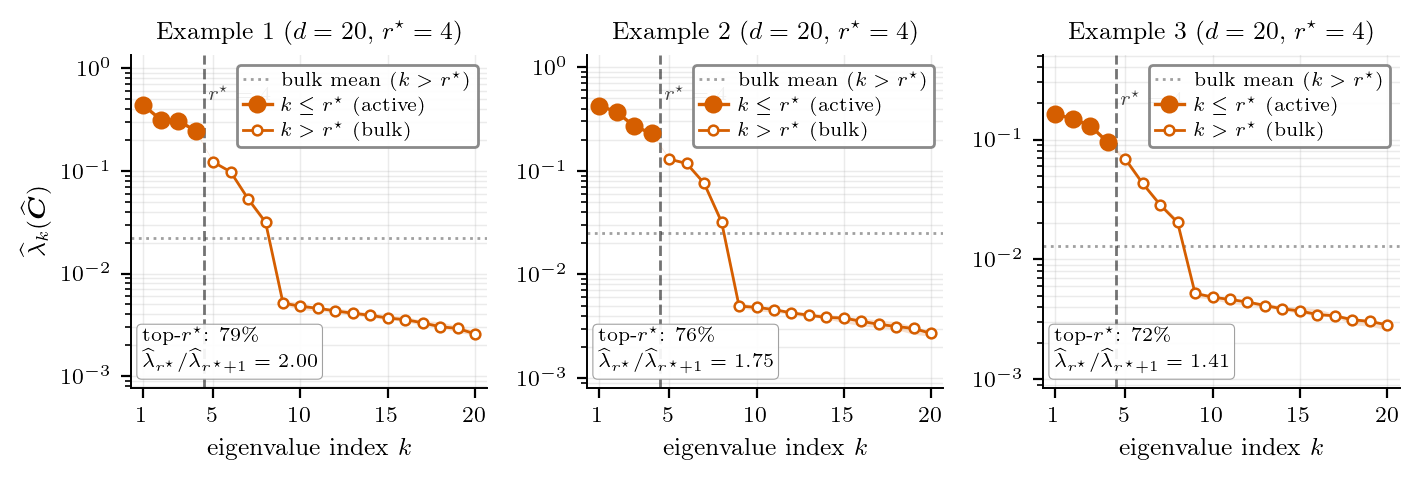}
\caption{Eigenvalue decay of the \emph{estimated} copula-score covariance $\hmC$ is shown on the three example problems at $N=50{,}000$ and the Stage-1 multi-index set $(K_1, q_1) = (4,2)$, as the median across five realizations with the band spanning their range, and with Example~1 in the left panel, Example~2 in the middle, and Example~3 on the right. Filled markers are the top-$r^\star$ eigenvalues, hollow markers the remaining $k > r^\star$, the dashed vertical line $r^\star = 4$, and the dotted horizontal line the mean of the eigenvalues above $r^\star$. These are eigenvalues of $\hmC$, which is assembled from the parameterized score at a coarse multi-index set and is accordingly smaller in scale than the reference $\mC$ whose spectrum \S\ref{sec:exp-benchmark} quotes. The panels show the shape of the decay and the gap at $r^\star$.}
\label{fig:eigvals-all}
\end{figure}

\subsection*{PCA subspace baseline: per-$(r, N)$ numbers}\label{sec:pca-ablation}

We compare \textsc{Cas} with the PCA baseline of \S\ref{sec:exp-testll}. The two use the same rank transform, Stage-2 estimate on the projected coordinates and evaluation. \textsc{Cas} uses $\hmV_r^{\mC}$, the top-$r$ eigenvectors of $\hmC$; PCA uses $\hmV_r^{\widehat\Sigma}$, those of the ordinary covariance $\widehat\Sigma$ on the rank-Gaussianized data. This comparison sweeps $r \in \{2, 3, 4, 6, 8\}$ at $N \in \{500, 1000, 2000\}$ over $20$ independent realizations, and evaluates each estimate on a fifth of its own sample, withheld from the estimate. The comparisons of \S\ref{sec:exp-testll} use an independent evaluation sample.

\paragraph{Example~1 noise law (\S\ref{sec:exp-banana})} \Cref{tab:pca-banana} summarizes the out-of-sample log-likelihood at each $(r, N)$ pair. PCA is flat across $r \in \{2, \ldots, 8\}$ for each $N$ (mean out-of-sample log-likelihood $\approx -40.41$ at $N = 1000$, $\approx -40.29$ at $N = 2000$). Adding rank does not help, because PCA's variance directions miss the non-Gaussian dependence. \textsc{Cas} is better in $12$ of the $15$ pairs and within the $0.05$-nat margin in the other three, all at $N = 500$; the gap reaches $\approx 0.35$ nats for $r = 6$ at $N = 2000$ and increases with $N$ in every row. At ranks $6$ and $8$ the Stage-2 solve of \Cref{app:solver} reaches its round cap with the level bound unmet in $40$ and $70$ of the $120$ fits, respectively, and none at ranks $2$--$4$; those fits keep the solver's last iterate.

\begin{table}[h]
\centering\footnotesize
\setlength{\tabcolsep}{4pt}
\caption{Mean out-of-sample log-likelihood on Example~1 ($d = 20$) is reported over $20$ independent realizations. Bold marks the better estimator at each $(r, N)$ pair ($\geq 0.05$-nat margin). \textsc{Cas} is better in $12$ of the $15$ pairs; the other three, at $N = 500$ for $r \in \{2, 3, 8\}$, are within the margin.}
\label{tab:pca-banana}
\begin{tabular}{r|rr|rr|rr}
\toprule
& \multicolumn{2}{c|}{$N{=}500$} & \multicolumn{2}{c|}{$N{=}1000$} & \multicolumn{2}{c}{$N{=}2000$} \\
$r$ & \textsc{Cas} & PCA & \textsc{Cas} & PCA & \textsc{Cas} & PCA \\
\midrule
2  & $-40.71$ & $-40.75$ & $\mathbf{-40.34}$ & $-40.41$ & $\mathbf{-40.22}$ & $-40.29$ \\
3  & $-40.70$ & $-40.75$ & $\mathbf{-40.26}$ & $-40.41$ & $\mathbf{-40.11}$ & $-40.30$ \\
4  & $\mathbf{-40.68}$ & $-40.74$ & $\mathbf{-40.19}$ & $-40.41$ & $\mathbf{-39.99}$ & $-40.29$ \\
6  & $\mathbf{-40.67}$ & $-40.73$ & $\mathbf{-40.17}$ & $-40.40$ & $\mathbf{-39.94}$ & $-40.29$ \\
8  & $-40.68$ & $-40.72$ & $\mathbf{-40.20}$ & $-40.39$ & $\mathbf{-39.95}$ & $-40.28$ \\
\bottomrule
\end{tabular}
\end{table}

\paragraph{Inference-problem example (\S\ref{sec:exp-bip})} On the Bayesian inference problem at $N = 50{,}000$, PCA does not reduce the posterior KL divergence by more than $3\%$ at any rank. Its mean KL divergence ($0.45$--$0.46$ across $r \in \{2, \ldots, 8\}$) differs by at most $0.013$ from the Gaussian-copula baseline ($0.46$), because the variance-ranked subspace does not contain the score-covariance directions. \textsc{Cas} for $r = 4$ reduces the KL divergence by $50\%$ (mean $0.23$ versus $0.46$). The two estimators share every other component, so the gap is due to the subspace.

\begin{table}[h]
\centering\footnotesize
\caption{The Bayesian inference problem under the Example~1 noise law is evaluated over $30$ observations at $N = 50{,}000$, a separate run from the comparison of \S\ref{sec:exp-bip}: each estimator is computed once and held fixed, so the variation across trials comes from the observation alone. PCA improves on the Gaussian baseline by at most $3\%$ at any $r$, while \textsc{Cas} at $r = 4$ reduces the KL divergence by $50\%$.}
\label{tab:pca-bip}
\begin{tabular}{lcccc}
\toprule
Method & mean $\DKL$ & median $\DKL$ & worst $\DKL$ & reduction vs.\ Gauss \\
\midrule
Gaussian            & $0.46$ & $0.34$ & $1.38$ & n/a \\
Product of marginals & $0.46$ & $0.34$ & $1.38$ & $0\%$ \\
PCA, $r=2$  & $0.46$ & $0.34$ & $1.37$ & $0\%$ \\
PCA, $r=3$  & $0.46$ & $0.35$ & $1.37$ & $-1\%$ \\
PCA, $r=4$  & $0.46$ & $0.35$ & $1.37$ & $-1\%$ \\
PCA, $r=6$  & $0.46$ & $0.33$ & $1.37$ & $0\%$ \\
PCA, $r=8$  & $0.45$ & $0.33$ & $1.35$ & $3\%$ \\
\textsc{Cas}, $r=4$ & $\mathbf{0.23}$ & $\mathbf{0.17}$ & $\mathbf{0.96}$ & $\mathbf{50\%}$ \\
\bottomrule
\end{tabular}
\end{table}

\paragraph{Interpretation} The comparison isolates the choice of subspace: \textsc{Cas} and PCA-HCSM share the rank transform, the Stage-2 estimate, and the evaluation, so the difference between them is what identifying $\mV_r$ from the score covariance adds over identifying it from the ordinary covariance. That difference reaches $\approx 0.35$ nats of out-of-sample log-likelihood on Example~1's noise law ($r = 6$, $N = 2000$), and a $50\%$ KL divergence reduction on the inference problem, against at most $3\%$ for PCA at any rank.

\subsection*{Subspace-only KL divergence: the reduction with Stage 2 removed}
\label{sec:subspace-only-kl}

The comparisons above and in \S\ref{sec:exp-testll} evaluate the full estimator, so the subspace and the Stage-2 estimate contribute together. This comparison removes Stage-2 estimation entirely and compares subspaces alone. For a candidate $\mV_r$ we evaluate
\begin{equation}
  D_{\mathrm{KL}}\bigl(\pi_{\vZ}\,\|\,\pi_{\vZ}(\,\cdot\,;\mV_r)\bigr) \;=\; \EE_{\pi_{\vU}}\bigl[D_{\mathrm{KL}}(\pi_{\vW\mid\vU}\,\|\,\gauss_{d-r})\bigr],
  \label{eq:subspace-only-kl}
\end{equation}
the divergence of the reduction with the exact reduced density on that subspace, as in \S\ref{sec:reduced-form} of the main manuscript. \Cref{fig:random-subspace} reports it on the three examples for $\mV_r$ estimated by \textsc{Cas}, by PCA, and drawn uniformly at random. The \textsc{Cas} divergence decreases with $N$ toward its value at the reference $\mV_r^{\mC}$ on all three. The PCA divergence does not decrease and lies within the spread of the uniformly random draws. The latent construction of \S\ref{sec:exp-banana} leaves no covariance structure that distinguishes the active directions.

\begin{figure}[h]
\centering
\includegraphics[width=\textwidth]{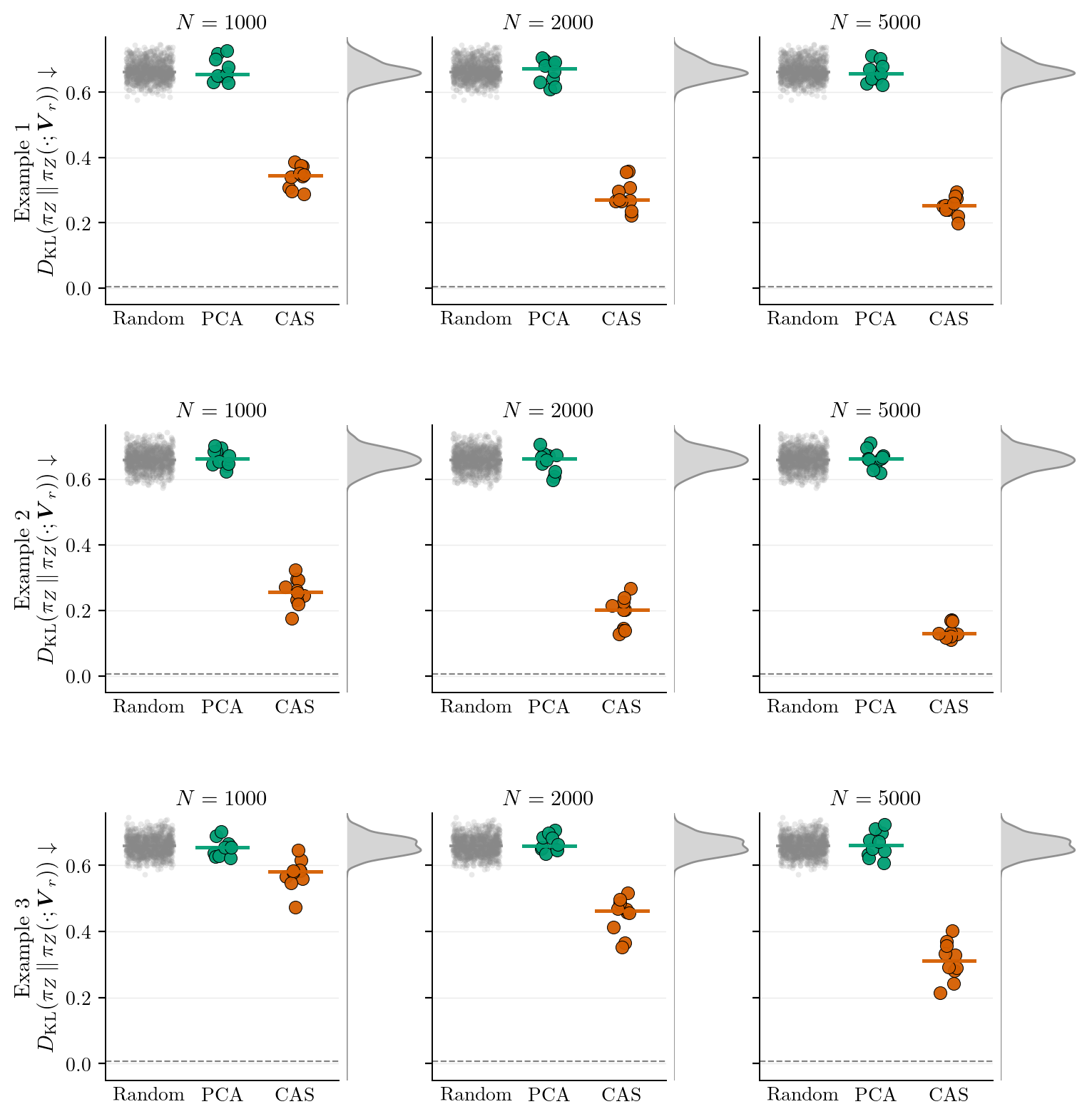}
\caption{The subspace-only KL divergence \eqref{eq:subspace-only-kl} across the three examples compares the choice of subspace with Stage-2 estimation removed, and lower is better. The gray cloud shows $J=100$ uniformly random subspaces in each of the $10$ realizations, the green markers show PCA, the orange markers show \textsc{Cas}, and the dashed line marks the value at the reference $\mV_r^{\mC}$. The \textsc{Cas} divergence decreases with $N$ toward it on all three examples, while the PCA divergence does not decrease and lies within the random-subspace cloud.}
\label{fig:random-subspace}
\end{figure}

The estimate of \eqref{eq:subspace-only-kl} is nested Monte Carlo with outer size $N_{\mathrm{outer}} = 1500$ and inner size $M_{\mathrm{inner}} = 4096$, with the copula density evaluated through marginal estimates fitted on $2\times10^6$ samples, for $N \in \{1000, 2000, 5000\}$ with $10$ independent realizations at each $N$. The uniformly random subspaces are spanned by $\mV_r^{\mathrm{rand}} = \mathrm{QR}(\mG)$ with $\mG \sim \gauss^{d\times r}$, whose column space is uniform on $\Gr(r,\RR^d)$ \cite{Mezzadri2007}, drawn independently for each realization.

\subsection*{Accuracy of the tail-sum diagnostic}
\label{sec:ehat-tightness}

\Cref{prop:diagnostic} bounds the error of the tail sum $\widehat E_r$ by the singular values $\sigma_j$ of the error of $\hmC$, $|\widehat E_r - E_r(\mC_\Lambda)| \le \sum_{j\le d-r}\sigma_j(\hmC - \mC_\Lambda) \le (d-r)\,\|\hmC - \mC_\Lambda\|_{\rm op}$ (\Cref{eq:Ehat-weyl}), and the same inequalities hold with any symmetric matrix in place of $\mC_\Lambda$. \Cref{tab:ehat-tightness} evaluates them against a numerical reference $\mC_{\rm ref}$, the score second-moment matrix at the Stage-1 coefficients fitted at $N_{\rm ref} = 10^6$, which carry the penalty of \eqref{eq:tikhonov}, evaluated by Monte Carlo on $2\times10^5$ fresh samples; its error column measures disagreement with that reference, not with $\mC_\Lambda$. It reports, per example at $N = 50{,}000$ over $20$ independent realizations, at the Stage-1 multi-index set $(K_1, q_1) = (4, 2)$ of the main experiments and rank $r = 4$: the sample tail sum $\widehat E_r$, the reference value $E_r(\mC_{\rm ref})$, the largest error $\max_s |\widehat E_r - E_r(\mC_{\rm ref})|$ across realizations, the across-realization means of $\sum_{j\le d-r}\sigma_j$ and of $(d-r)\,\sigma_1$, and the across-realization mean of the per-realization ratio of the first bound to the error.

\begin{table}[h]
\centering\small
\setlength{\tabcolsep}{3pt}
\caption{The accuracy of the tail-sum diagnostic and the tightness of \Cref{eq:Ehat-weyl} are reported per example at $N = 50{,}000$ with $20$ independent realizations, at the Stage-1 multi-index set $(K_1,q_1) = (4,2)$ of the main experiments and rank $r = 4$. Columns give the tail sum $\widehat E_r$ (mean $\pm$ one standard deviation across realizations), the reference $E_r(\mC_{\rm ref})$, the largest error across realizations, the across-realization means of the two bounds of \Cref{eq:Ehat-weyl}, $\sum_{j\le d-r}\sigma_j(\hmC-\mC_{\rm ref})$ and $(d-r)\,\sigma_1(\hmC-\mC_{\rm ref})$, and the mean per-realization ratio of the first bound to the error.}
\label{tab:ehat-tightness}
\begin{tabular}{l|cccccc}
\toprule
 & $\widehat E_r$ & $E_r(\mC_{\rm ref})$ & $\max_s|\widehat E_r - E_r(\mC_{\rm ref})|$ & $\sum_{j\le d-r}\sigma_j$ & $(d{-}r)\,\sigma_1$ & ratio \\
\midrule
Example~1 & $0.3490 \pm 0.0061$ & $0.2884$ & $0.0729$ & $0.1118$ & $0.3498$ & $1.9$ \\
Example~2 & $0.3994 \pm 0.0063$ & $0.3388$ & $0.0691$ & $0.1168$ & $0.3470$ & $1.9$ \\
Example~3 & $0.2058 \pm 0.0045$ & $0.1483$ & $0.0664$ & $0.0823$ & $0.2083$ & $1.4$ \\
\bottomrule
\end{tabular}
\end{table}

The ratio column is the factor by which the first bound of \Cref{eq:Ehat-weyl} exceeds the actual error. The error $\widehat E_r - E_r(\mC_{\rm ref}) = \sum_{i>r}(\widehat\lambda_i - \lambda_i(\mC_{\rm ref}))$ is a signed sum of $d-r$ eigenvalue differences, while the first bound sums the $d-r$ largest singular values of $\hmC-\mC_{\rm ref}$ without signs, and the second replaces every singular value by the largest. The two bound columns thus separate the slack from sign cancellation and the slack from the uniform norm.

\Cref{tab:ehat-tightness} shows three things. First, the error is one-signed: $\widehat E_r > E_r(\mC_{\rm ref})$ on every realization of every example. Second, the Ky Fan bound \cite{Bhatia1997} exceeds the realized error by mean factors of $1.9$, $1.9$ and $1.4$ on Examples~1--3 (the ratio column). Third, the operator-norm bound $(d-r)\,\sigma_1$ exceeds $E_r(\mC_{\rm ref})$ itself on all three examples: at $d - r = 16$ that form of \Cref{eq:Ehat-weyl} is uninformative here, and only the Ky Fan form is informative. The measured error is the error column and is smaller than either bound.

\section{\texorpdfstring{Stage-2 regularizer: empirical comparisons across $C$}{Stage-2 regularizer: empirical comparisons across C}}
\label{sm:s2reg-empirical}

\S\ref{sec:ridge} of the main paper establishes the basis invariance and the conditioning bound of the Stage-2 regularizer $\mR$, and the companion analysis \cite{CAS_partII} bounds its Fisher divergence error. This section compares $\mR$ across the coefficient $C$ and against $\mR_{H^2}$ with the best $\kappa$ chosen separately for each $N$. The comparisons in \S\ref{sec:sm2-tr-ablations}--\ref{sec:sm2-dw-vs-h} are against $\mR_{H^2}$ on Example~1, and \S\ref{sec:sm-cross-benchmark} compares $\mR$ with alternative designs on all three examples. The Stage-2 solves in this section are the centered closed form \eqref{eq:stage2-closedform}, without the inequality constraints of \Cref{app:solver} and with the fitted potential capped above at evaluation. For noise estimates the compared quantities are the held-out log-likelihood $\EE_{\pi_{\vn}}[\log q]$ and log-score $\EE_{\pi_{\vn}}[\log\pi_{\vn} - \log q]$, where $q$ is the estimate normalized by a $2^{13}$-point reduced normalizer alone in place of $\widehat Z_{\vn}$ of \eqref{eq:cas-estimator}; each differs from its normalized counterpart by the estimate's log-mass, and a paired difference by the difference of two log-masses. Posterior divergences are normalized on the grid.

\subsection{Choice of the two constants}
\label{sec:sm2-selection}
The two constants $(C_{\rm samp}, C_{\rm curv})$ of \S\ref{sec:ridge} are chosen from the sample by cross-validation, with no hand-tuning and no reference density. The Stage-2 sample is partitioned into an estimation part and a validation part, and the score-matching quadratic of \S\ref{sec:ridge} is assembled on each part. For each candidate pair on the $5\times5$ grid of \Cref{app:protocol}, the estimator is solved on the estimation part and scored by the validation quadratic. The minimizing pair is chosen, and the estimator is re-estimated on the full sample at the chosen values. For fixed coordinates and a candidate fitted without the validation observations, the validation quadratic estimates the reduced Fisher divergence up to a candidate-independent constant; here the transform and the subspace are estimated before the split, so we use it as an empirical tuning criterion, which requires no normalizer and no noise model. The pair $(C_{\rm samp}, C_{\rm curv}) = (3, 10^{-5})$ is the center of that grid and the fixed pair at which the comparisons of \Cref{sec:sm2-tr-ablations,sec:sm2-dw-vs-h} hold the proposal.

\subsection{\texorpdfstring{Stage-2 coefficient $C$: stability and margin over $\mR_{H^2}$}{Stage-2 coefficient C: stability}}
\label{sec:sm2-tr-ablations}

We assess the stability of the Stage-2 coefficient $C := C_{\rm samp}$ and the margin of $\mR$ over $\mR_{H^2}$ on Example~1 of \S\ref{sec:exp-banana}. Holding the subspace, rank, and marginals fixed, we compare $\mR$ for $C \in \{1, 3, 10\}$ against $\mR_{H^2}$ (with $\kappa = 10^{-3}$) on paired out-of-sample log-likelihood across $r \in \{2,3,4,6,8\}$, $N \in \{500,1000,2000\}$, $20$ independent realizations ($15$ $(r, N)$ pairs). At each $(r, N)$ and each realization all four schemes share the \textsc{Cas} subspace, so the paired gap isolates the Stage-2 regularizer. \Cref{tab:s2reg-ablations} reports the statistics over the $15$ pairs. The broader family of alternative regularizer designs (identity- and Sobolev-scaled, in fixed and $c/N$ variants, and an unregularized solve) is compared separately in the cross-example comparison of \S\ref{sec:sm-cross-benchmark}, which makes the cross-family robustness point across all three examples.

\begin{table}[h]
\centering\small
\caption{Stage-2 $C$-stability on Example~1 is measured as the paired out-of-sample-LL gap ($\mR$ with coefficient $C$ minus $\mR_{H^2}$ with $\kappa = 10^{-3}$) across $r \in \{2,3,4,6,8\}$ and $N \in \{500,1000,2000\}$ with $20$ independent realizations; statistics are over the $15$ $(r, N)$ pairs. A pair is ``lost'' when its mean gap is below $-0.05$ nats and ``won'' when above $+0.05$. All three coefficient values improve on $\mR_{H^2}$ on average and none loses a pair; the proposed $C=3$ has the largest mean gain, and the largest-magnitude worst gap, $-0.036$ nats at $C=10$, is within the $0.05$-nat margin. Bold marks the proposed row's worst gap and lost count.}
\label{tab:s2reg-ablations}
\begin{tabular}{l|rrr|rr}
\toprule
scheme & mean gap & min gap & max gap & lost & won \\
\midrule
$\mR$, $C=1$  & $+0.076$ & $-0.015$ & $+0.431$ & 0 & 7 \\
$\mR$, $C=3$ (proposed) & $+0.101$ & $\mathbf{-0.015}$ & $+0.583$ & $\mathbf{0}$ & 7 \\
$\mR$, $C=10$ & $+0.095$ & $-0.036$ & $+0.614$ & 0 & 6 \\
\bottomrule
\end{tabular}
\end{table}

All three coefficient values improve on $\mR_{H^2}$ on average across the grid (mean gaps $+0.076$, $+0.101$, and $+0.095$ nats for $C = 1, 3, 10$), winning $7$, $7$, and $6$ of the $15$ pairs, with gains of up to $0.43$, $0.58$, and $0.61$ nats. No coefficient value loses a pair by more than the $0.05$-nat margin. The worst gaps are $-0.015$ nats at $C = 1$ and $C = 3$ and $-0.036$ nats at $C = 10$, the over-regularized end. Of the three, the center value $C=3$ attains the largest mean gain.

\subsection{\texorpdfstring{Full $\mR_{H^2}$ $\kappa$-grid and the effect of a fixed $\kappa$}{Full RH2 kappa-grid and the effect of a fixed kappa}}
\label{sec:sm2-dw-vs-h}

This subsection presents the direct comparison of the proposed regularizer against $\mR_{H^2}$ on the inference problem of \S\ref{sec:exp-bip}, plus the full $\kappa$-grid for $\mR_{H^2}$. The main numbers are summarized in \Cref{tab:stage2-headline}: the proposed estimator with fixed $(C_{\rm samp}, C_{\rm curv}) = (3, 10^{-5})$ is within $0.04$ nats of $\mR_{H^2}$ at its per-$N$ best $\kappa$ at every $N$, below it at $N = 500$, while that best $\kappa$ shifts from $3\cdot10^{-2}$ at $N = 100$ into the band $\kappa \le 10^{-3}$ for $N \ge 2{,}500$.

\begin{table}[h]
\centering\small
\caption{Posterior KL divergence ($\downarrow$) is reported across $N$ for $r=4$ on the inference problem of \S\ref{sec:exp-bip}, with $50$ paired trials at each $N$. ``$\mR_{H^2}$ best'' uses the per-$N$ best $\kappa$ from a validation grid (unavailable in practice), and ``proposed'' uses fixed $(C_{\rm samp}, C_{\rm curv}) = (3, 10^{-5})$. Bold marks the best per row; ties at the displayed precision are both bolded.}
\label{tab:stage2-headline}
\begin{tabular}{r|rrcr}
\toprule
$N$ & no reg. & $\mR_{H^2}$ best & best $\kappa$ & proposed (fixed) \\
\midrule
$100$      & $3.02$ & $\mathbf{0.89}$ & $3\!\cdot\!10^{-2}$ & $0.92$\\
$500$      & $0.85$ & $0.64$ & $10^{-2}$ & $\mathbf{0.62}$\\
$2{,}500$  & $\mathbf{0.28}$ & $\mathbf{0.28}$ & $3\!\cdot\!10^{-4}$ & $0.31$\\
$12{,}500$ & $\mathbf{0.22}$ & $\mathbf{0.22}$ & $10^{-5}$ & $0.26$\\
$50{,}000$ & $\mathbf{0.21}$ & $\mathbf{0.21}$ & $10^{-5}$ & $0.25$\\
\bottomrule
\end{tabular}
\end{table}

Two further questions are how $\mR_{H^2}$ behaves across the full $\kappa$ grid, and how far a single fixed $\kappa$ falls from the per-$N$ optimum. \Cref{tab:dw-kappa-grid} reports the full grid: mean \textsc{Cas} posterior KL divergence across the Example~1 inference problem for $r=4$, eight $\kappa$ values $\times$ five $N$ values. The minimum across each row is the best entry of \Cref{tab:stage2-headline}. The grid shows two patterns. The optimum drifts with $N$: for $N=100$ the largest value on the grid, $\kappa = 3\cdot10^{-2}$, is best, and from $N \geq 2{,}500$ the row is flat within $0.01$ nats across $\kappa \in [10^{-5}, 10^{-3}]$, with larger $\kappa$ worse. The row optimum and the fixed column $\kappa = 10^{-3}$ separate by $0.31$ nats at $N = 100$, by $0.06$ at $N = 500$, and by less than $0.01$ for $N \ge 2{,}500$.

\begin{table}[h]
\centering\small
\caption{Mean \textsc{Cas} posterior KL divergence under $\mR_{H^2}$ is reported across $\kappa$ and $N$, with $50$ trials at each $N$. Bold marks the minimum across each row (the best $\kappa$ for that $N$, used in \Cref{tab:stage2-headline}); ties at the displayed precision are both bolded. The column $\kappa = 10^{-3}$ is the midpoint of the grid, used below as the single fixed choice.}
\label{tab:dw-kappa-grid}
\begin{tabular}{r|rrrrrrrr}
\toprule
$N \backslash \kappa$ & $10^{-5}$ & $3\cdot 10^{-5}$ & $10^{-4}$ & $3\cdot 10^{-4}$ & $10^{-3}$ & $3\cdot 10^{-3}$ & $10^{-2}$ & $3\cdot 10^{-2}$ \\
\midrule
$100$       & $2.92$ & $2.74$ & $2.28$ & $1.71$ & $1.21$ & $0.98$ & $0.91$ & $\mathbf{0.89}$ \\
$500$       & $0.85$ & $0.84$ & $0.81$ & $0.77$ & $0.69$ & $\mathbf{0.64}$ & $\mathbf{0.64}$ & $0.66$ \\
$2{,}500$   & $\mathbf{0.28}$ & $\mathbf{0.28}$ & $\mathbf{0.28}$ & $\mathbf{0.28}$ & $\mathbf{0.28}$ & $0.29$ & $0.35$ & $0.44$ \\
$12{,}500$  & $\mathbf{0.22}$ & $\mathbf{0.22}$ & $\mathbf{0.22}$ & $\mathbf{0.22}$ & $0.23$ & $0.25$ & $0.32$ & $0.42$ \\
$50{,}000$  & $\mathbf{0.21}$ & $\mathbf{0.21}$ & $\mathbf{0.21}$ & $\mathbf{0.21}$ & $0.22$ & $0.24$ & $0.31$ & $0.41$ \\
\bottomrule
\end{tabular}
\end{table}

The effect of choosing a single $\kappa$ for $\mR_{H^2}$ across all $N$ is illustrated by the mid-grid value $\kappa = 10^{-3}$. That column equals $1.21$, $0.69$, $0.28$, $0.23$, and $0.22$ at the five sample sizes. That fixed choice lies $0.31$ nats above the row optimum at $N=100$, where the optimum is the largest value on the grid, $\kappa = 3\cdot10^{-2}$, $0.06$ above it at $N=500$, and within $0.01$ nats of it for $N \geq 2{,}500$, and the best single $\kappa$, $3\cdot10^{-3}$, is within $0.09$ nats of the row optimum over the range. The regularizer $\mR$ at the fixed pair $(C_{\rm samp}, C_{\rm curv}) = (3, 10^{-5})$ gives $0.92, 0.62, 0.31, 0.26, 0.25$ across the same $N$ values (\Cref{tab:stage2-headline}, last column). These values are within $0.04$ nats of the $\mR_{H^2}$ row optimum at every $N$ and below it at $N = 500$, with no per-$N$ grid search.

The two regularizers differ in their prefactors. $\mR_{H^2}$ has the single global prefactor $\kappa\|\widehat\mA_r\|_{\rm op}$, and a single $\kappa$ concedes up to $0.09$ nats to the per-$N$ optimum over the $N$ range of \Cref{tab:dw-kappa-grid}. The Stage-2 regularizer has a term $C_{\rm samp}\widehat\sigma_k^2$ that scales as $1/N$ and an $N$-independent curvature term $C_{\rm curv}\|\widehat\mA_r\|_{\rm op}\, k^2$. The comparison above holds its two constants fixed at $(C_{\rm samp}, C_{\rm curv}) = (3, 10^{-5})$ at every $N$ while $\mR_{H^2}$ receives its per-$N$ best $\kappa$ from the grid, and the proposal still lands within $0.04$ nats of that oracle at every $N$ and below it at $N = 500$. The estimator of the main text selects the pair for each estimate by the validation procedure of \Cref{sec:sm2-selection}, on a grid centered at those two values (\Cref{app:protocol}).

\subsection{Cross-example regularizer comparison}
\label{sec:sm-cross-benchmark}

This subsection presents the full cross-example regularizer comparison underpinning the empirical claim of \Cref{sec:ridge} that the two constants need no per-problem hand-tuning: one fixed pair $(C_{\rm samp}, C_{\rm curv}) = (3, 10^{-5})$ stays within $0.18$ nats of the best alternative on all three examples, and a per-realization held-out choice improves on that fixed pair by more than $0.01$ nats only for $N \le 1000$, by the amounts in the $\mR_{\rm th,CV}$ row of \Cref{tab:cross-bench-results}. We compare that fixed pair against three alternative regularizer designs, each with a fixed and a $c/N$-scaled coefficient, against a per-realization held-out choice of $C_{\rm samp}$ with $C_{\rm curv}$ fixed, and against an unregularized solve, across three structurally distinct example problems.

\paragraph{Example problems} The three examples differ in how well a Hermite expansion represents the copula score. All three share the $d=20$ construction of \S\ref{sec:exp-benchmark}: a latent vector of independent standard Gaussian coordinates in which two of the first four coordinates are replaced by bounded nonlinear functions of the other two, leaving four coupled coordinates and $16$ independent standard Gaussian coordinates; a uniformly random rotation $\mQ \sim \mathrm{Unif}(\mathrm{O}(8))$ mixing the first $8$ coordinates and acting as the identity on the remaining $12$ (orthogonal, so the latent coordinates remain uncorrelated with unit variance, while the four dependent directions are not axis-aligned; the reference $\mV_r^{\mC}$ throughout is the top-$r$ eigenspace of the score covariance $\mC$, evaluated by Monte Carlo at $N_{\rm ref} = 10^6$ as in \S\ref{sec:exp-benchmark}); and the componentwise map $n_j = \mathrm{logit}(\Phi(z_j))$ applied to $\vz = \mQ\vW$.
\begin{itemize}\itemsep0pt
\item \textbf{Example~1} (main paper \S\ref{sec:exp-banana}): the map \eqref{eq:ex1-block} taking $(W_0,W_1)$ to $(W_2,W_3)$, the real and imaginary parts of $(W_0+iW_1)^2$ with bounded $\tanh$-saturated conditional means. Its score is smooth. Reference score-covariance spectrum: top-$4$ eigenvalues $\approx 2.29, 2.28, 0.95, 0.92$, with the remaining $16$ summing to $\approx 0.04$; \Cref{fig:eigvals-all} (left) shows the estimated spectrum at the Stage-1 multi-index set, whose values are smaller because the multi-index set represents only part of the copula score: $\tr(\mC_\Lambda) = \tr(\mC) - \|\scS - \scS_\Lambda\|^2_{L^2(\pi_{\vZ})}$ (\S\ref{sec:theory}).
\item \textbf{Example~2.} Two input--output pairs, $W_0$ determining $W_1$ and $W_2$ determining $W_3$; writing $(W_{\rm in}, W_{\rm out})$ for a generic pair:
\[
W_{\rm out} = c\bigl(\tanh(\gamma (W_{\rm in}^2 - 1)) - m_0\bigr) + \sqrt{1 - a^2}\,\varepsilon, \qquad W_{\rm in}, \varepsilon \sim \mathcal N(0,1),
\]
$\gamma = 0.5$, $a = 0.7$, $m_0 = \EE[\tanh(\gamma(W_{\rm in}^2-1))]$ (centering) and $c = a/\sqrt{\Var}$ (variance-normalizing). The map $W_{\rm in} \mapsto W_{\rm out}$ is even, hence non-injective, and its copula score is harder for a Hermite expansion to represent; the active subspace is still recovered, and it is the density representation that suffers, which is why this example is included. Reference spectrum: top-$4 \approx 2.28, 2.16, 0.95, 0.80$, with the remaining $16$ summing to $\approx 0.09$; the estimated spectrum is in \Cref{fig:eigvals-all} (middle).
\item \textbf{Example~3.} The map taking $(W_0,W_1)$ to $(W_2,W_3)$, the real and imaginary parts of $(W_0+iW_1)^3$:
\[
\begin{gathered}
W_2 = c\,\tanh\!\Bigl(\tfrac{W_0^3 - 3 W_0 W_1^2}{M}\Bigr) + \sqrt{1-a^2}\,\varepsilon_2, \\
W_3 = c\,\tanh\!\Bigl(\tfrac{3 W_0^2 W_1 - W_1^3}{M}\Bigr) + \sqrt{1-a^2}\,\varepsilon_3,
\end{gathered}
\]
with $M = 4$, $a = 0.7$, and $c \approx 1.48$ (variance-normalizing; both output means are $0$ by symmetry). The score is smooth and the easiest of the three to represent, and the three-to-one map $z \mapsto z^3$ gives a posterior with mass on three preimage branches. Reference spectrum: top-$4 \approx 4.42, 4.39, 0.97, 0.94$, with the remaining $16$ summing to $\approx 0.04$; the estimated spectrum is in \Cref{fig:eigvals-all} (right).
\end{itemize}

\paragraph{Comparison regularizers} For each example problem and each $N$ in $\{100, 500,\allowbreak\,1000, 2500,\allowbreak\,5000, 10000,\allowbreak\,25000, 50000,\allowbreak\,100000\}$, with $20$ paired realizations, we measure the held-out log-score of the Stage-2 estimator for each of nine regularizer choices. The reference $\mR_{\rm th,nominal}$ fixes the representative pair $(C_{\rm samp}, C_{\rm curv}) = (3, 10^{-5})$ \eqref{eq:R-combined}. Six alternatives are $\mR_{\rm samp}$- and $\mR_{\rm curv}$-replacements with identity scaling and Sobolev exponents $|\alpha|_1$ and $|\alpha|_1^2$ respectively, each in fixed and $c/N$-scaling variants. $\mR_{\rm th,CV}$ chooses $C_{\rm samp}$ for each realization by held-out log-likelihood on a single split, with $C_{\rm curv}$ fixed; $\mR_{\rm no\,reg}$ is the unregularized solve.

\paragraph{Aggregate results} \Cref{tab:cross-bench-results} reports the mean held-out log-score gap (regularizer minus $\mR_{\rm th,nominal}$, in nats) for each $N$ on each example problem, over $20$ paired realizations.

\begin{table}[h]
\centering\footnotesize
\setlength{\tabcolsep}{4pt}
\caption{The cross-example comparison reports the mean held-out log-score gap versus $\mR_{\rm th,nominal}$, where negative is better than the selected pair, over $20$ paired realizations at each $N$; the reference log-score is that of $\mR_{\rm th,nominal}$.}
\label{tab:cross-bench-results}
\begin{tabular}{l|rrrrrrrrr}
\toprule
$N \to$ & $100$ & $500$ & $1{,}000$ & $2{,}500$ & $5{,}000$ & $10^4$ & $2.5 \cdot 10^4$ & $5 \cdot 10^4$ & $10^5$ \\
\midrule
\multicolumn{10}{l}{\textbf{Example~1} \quad (ref log-score: $2.35$, $1.21$, $0.82$, $0.55$, $0.43$, $0.37$, $0.32$, $0.30$, $0.29$)} \\
$\mR_{\rm th,CV}$         & $-0.14$ & $-0.04$ & $-0.00$ & $-0.00$ & $+0.00$ & $+0.00$ & $-0.00$ & $-0.00$ & $-0.00$ \\
$\mR_{\rm cov,id\,fix}$    & $-0.07$ & $-0.05$ & $+0.02$ & $+0.00$ & $+0.00$ & $+0.00$ & $-0.00$ & $-0.00$ & $+0.00$ \\
$\mR_{\rm sob,k\,fix}$     & $-0.15$ & $-0.04$ & $+0.01$ & $+0.00$ & $+0.00$ & $+0.00$ & $+0.00$ & $-0.00$ & $-0.00$ \\
$\mR_{\rm sob,k^2\,fix}$   & $-0.08$ & $-0.06$ & $+0.00$ & $+0.00$ & $+0.00$ & $+0.00$ & $+0.00$ & $-0.00$ & $-0.00$ \\
$\mR_{\rm no\,reg}$       & $+2.88$ & $+0.18$ & $+0.04$ & $+0.01$ & $+0.00$ & $+0.00$ & $-0.00$ & $-0.00$ & $-0.00$ \\
\midrule
\multicolumn{10}{l}{\textbf{Example~2} \quad (ref log-score: $2.27$, $1.13$, $0.76$, $0.49$, $0.38$, $0.31$, $0.27$, $0.25$, $0.23$)} \\
$\mR_{\rm th,CV}$         & $-0.14$ & $-0.04$ & $-0.01$ & $-0.00$ & $+0.00$ & $-0.00$ & $-0.00$ & $-0.00$ & $-0.00$ \\
$\mR_{\rm cov,id\,fix}$    & $-0.07$ & $-0.05$ & $+0.02$ & $+0.00$ & $-0.00$ & $-0.00$ & $-0.00$ & $-0.00$ & $-0.00$ \\
$\mR_{\rm sob,k\,fix}$     & $-0.14$ & $-0.04$ & $+0.00$ & $+0.00$ & $+0.00$ & $-0.00$ & $-0.00$ & $-0.00$ & $-0.00$ \\
$\mR_{\rm sob,k^2\,fix}$   & $-0.08$ & $-0.06$ & $-0.00$ & $+0.00$ & $+0.00$ & $-0.00$ & $-0.00$ & $-0.00$ & $-0.00$ \\
$\mR_{\rm no\,reg}$       & $+2.76$ & $+0.17$ & $+0.04$ & $+0.01$ & $+0.00$ & $-0.00$ & $-0.00$ & $-0.00$ & $-0.00$ \\
\midrule
\multicolumn{10}{l}{\textbf{Example~3} \quad (ref log-score: $2.39$, $1.37$, $1.11$, $0.77$, $0.61$, $0.54$, $0.50$, $0.46$, $0.45$)} \\
$\mR_{\rm th,CV}$         & $-0.18$ & $-0.12$ & $-0.04$ & $+0.01$ & $-0.00$ & $-0.00$ & $-0.00$ & $-0.00$ & $-0.00$ \\
$\mR_{\rm cov,id\,fix}$    & $-0.08$ & $-0.11$ & $-0.05$ & $-0.00$ & $-0.00$ & $-0.00$ & $-0.00$ & $-0.00$ & $-0.00$ \\
$\mR_{\rm sob,k\,fix}$     & $-0.16$ & $-0.13$ & $-0.04$ & $+0.00$ & $-0.00$ & $-0.00$ & $-0.00$ & $-0.00$ & $-0.00$ \\
$\mR_{\rm sob,k^2\,fix}$   & $-0.09$ & $-0.11$ & $-0.04$ & $+0.00$ & $-0.00$ & $-0.00$ & $-0.00$ & $-0.00$ & $-0.00$ \\
$\mR_{\rm no\,reg}$       & $+3.63$ & $+0.17$ & $+0.04$ & $+0.00$ & $-0.00$ & $-0.00$ & $-0.00$ & $-0.00$ & $-0.00$ \\
\bottomrule
\end{tabular}
\end{table}

\paragraph{Three findings} The comparisons support the cross-family robustness claim of \S\ref{sec:ridge}.

\textit{(i) The chosen pair transfers across all three problems.} One pair of constants $(C_{\rm samp}, C_{\rm curv}) = (3, 10^{-5})$ stays within $0.18$ nats of the best alternative regularizer at every $N$ on all three structurally distinct examples, with no per-problem hand-tuning. Its gap to the per-realization held-out choice is at most $0.14$--$0.18$ nats at $N=100$ (Example~1: $0.14$; Example~2: $0.14$; Example~3: $0.18$) and closes to within $\pm 0.01$ nats for $N \geq 2500$, where the fixed and held-out choices are indistinguishable.

\textit{(ii) Omitting the regularizer costs $2.8$--$3.6$ nats at $N = 100$ and at most $0.01$ nats for $N \ge 2500$.} The gap of $\mR_{\rm no\,reg}$ ranges from $+2.88$ at $N=100$ to $\approx 0$ at $N=10^5$ on Example~1, $+2.76$ to $\approx 0$ on Example~2, and $+3.63$ to $\approx 0$ on Example~3: an inflation that falls to at most $+0.042$ nats by $N = 1000$ and at most $+0.01$ nats for $N \geq 2500$ on all three examples. The tables do not separate the variance and regularization-bias contributions to these gaps.

\textit{(iii) Scaling the coefficient as $c/N$ changes the mean gaps by at most $0.10$ nats.} The $c/N$ variants of the three fixed-coefficient regularizers, omitted from \Cref{tab:cross-bench-results}, give mean gaps within $0.10$ nats of their fixed-coefficient counterparts at every $N$ on all three examples, and within $0.002$ nats for $N \ge 2500$.

\paragraph{Conclusion} The fixed pair $(C_{\rm samp}, C_{\rm curv}) = (3, 10^{-5})$ stays within $0.18$ nats of the best alternative regularizer at every $N$ tested on all three examples, without per-problem tuning. Its distance from the per-realization held-out choice is the $\mR_{\rm th,CV}$ row of \Cref{tab:cross-bench-results}: at most $0.18$ nats at $N = 100$ and at most $0.01$ nats for $N \ge 2500$. Omitting the regularizer costs $2.8$--$3.6$ nats at $N = 100$ and at most $0.01$ nats for $N \ge 2500$.

\subsection{Centering constraint: the in-sample identity and its basis-truncation bias}
\label{sec:sm2-centering}

The Stage-2 estimator \eqref{eq:stage2-closedform} enforces the empirical centering identity $\widehat{\bm B}\bm\theta = 0$ of \eqref{eq:centering-mat}, the empirical form of $\EE_{\pi_{\vU}}[\nabla\log c_r^{U}] = 0$. By \Cref{prop:centering}, imposing it contracts the regularization bias and adds an $N$-independent basis-truncation bias determined by the mean of the Fisher truncation residual, which vanishes when the reduced score is representable in $\Lambda_r$. This subsection measures the resulting net change in held-out log-score on the three example problems by comparing the centered closed-form solve \eqref{eq:stage2-closedform} against the uncentered solve $\widehat{\bm\theta}_{\rm unc} = \widehat{\bm M}^{-1}\widehat{\bm b}_r$ on the same $(\widehat{\bm A}_r + \mR, \widehat{\bm b}_r)$, without the inequality constraints of \Cref{app:solver}, holding subspace, rank, marginals, and regularizer fixed so the paired gap isolates the centering constraint.

\paragraph{In-sample identity} The centered solve holds the in-sample residual $\|\widehat{\bm B}\bm\theta\|$ at machine precision, with a largest value of $1.1\times10^{-16}$ over every realization at every $N$ on all three examples. The uncentered residual does not decrease to zero. Its across-realization mean is $0.074$ at $N = 500$ and $0.028$ at $N = 50{,}000$ on Example~1, $0.041$ and $0.022$ at those sample sizes on Example~3, and lies between $0.058$ and $0.073$ at every $N$ on Example~2. Its size at $N = 50{,}000$ orders the three examples by the difficulty of representing each copula score in a Hermite expansion: $0.058$ on Example~2, whose even map is non-injective, against $0.022$ on Example~3, the easiest of the three to represent.

\paragraph{Net effect} \Cref{tab:sm2-centering} reports the signed centering gap $\Delta_{\rm ctr}$, the held-out log-score of the centered solve minus that of the uncentered one, across $N$ at $(K_2, q_2) = (5, 3)$ over $20$ independent realizations. The centered solve attains the lower mean log-score at every sample size on all three examples. The gap equals $-0.0046$ nats at $N = 500$ on Examples~1 and~3 and lies between $-0.0031$ and $-0.0001$ nats at $N = 50{,}000$.

\begin{table}[h]
\centering\small
\setlength{\tabcolsep}{5pt}
\caption{The centering comparison reports the signed log-score gap $\Delta_{\rm ctr}$ (centered minus uncentered, in nats) for $(K_2, q_2) = (5, 3)$, as the mean $\pm$ one standard deviation over $20$ independent realizations at each $N$; negative favors the centered solve. The final column gives the across-realization mean of the uncentered residual on Example~1, which does not decrease to zero.}
\label{tab:sm2-centering}
\begin{tabular}{r|ccc|c}
\toprule
 & \multicolumn{3}{c|}{$\Delta_{\rm ctr}$ (nats)} & Ex.~1 residual \\
$N$ & Example~1 & Example~2 & Example~3 & uncentered \\
\midrule
$500$      & $-0.0046{\pm}0.0054$ & $-0.0037{\pm}0.0042$ & $-0.0046{\pm}0.0040$ & $0.074$ \\
$2{,}500$  & $-0.0024{\pm}0.0024$ & $-0.0036{\pm}0.0021$ & $-0.0010{\pm}0.0012$ & $0.062$ \\
$12{,}500$ & $-0.0008{\pm}0.0007$ & $-0.0031{\pm}0.0012$ & $-0.0003{\pm}0.0003$ & $0.036$ \\
$50{,}000$ & $-0.0005{\pm}0.0006$ & $-0.0031{\pm}0.0008$ & $-0.0001{\pm}0.0001$ & $0.028$ \\
\bottomrule
\end{tabular}
\end{table}

\paragraph{Dependence on the Stage-2 degree} \Cref{tab:sm2-centering-k2} sweeps the inner degree $K_2 \in \{3, 4, 5, 6\}$ at $N = 50{,}000$ over $15$ independent realizations. At $K_2 = 3$, where $|\Lambda_r| = 30$, the constraint costs $+0.049$ nats on Example~1 and $+0.051$ nats on Example~2, against gaps of magnitude at most $0.0046$ nats in \Cref{tab:sm2-centering}, and $|\Delta_{\rm ctr}| \le 0.006$ nats at every $K_2 \ge 4$ on both. The $K_2 = 5$ column agrees with the $N = 50{,}000$ row of \Cref{tab:sm2-centering} to within $5\times10^{-5}$ nats on all three examples, the two tables reporting the same $(K_2, N) = (5, 50{,}000)$ comparison. Example~3 departs from the pattern at $K_2 = 4$, where $\Delta_{\rm ctr} = +0.0250 \pm 0.0171$ nats against $-0.0000$ at $K_2 = 3$ and $-0.0001$ at $K_2 = 5$; its standard deviation is the largest in the table, and $(K_2, q_2) = (5,3)$ is the setting of \Cref{tab:sm2-centering}.

\begin{table}[h]
\centering\small
\setlength{\tabcolsep}{5pt}
\caption{The centering comparison across the Stage-2 inner degree $K_2$ at $N = 50{,}000$ reports the signed log-score gap $\Delta_{\rm ctr}$ (nats) as the mean $\pm$ one standard deviation over $15$ independent realizations, together with the basis size $|\Lambda_r|$. The cost is largest at the coarsest degree $K_2 = 3$ on Examples~1 and~2, and at $K_2 = 4$ on Example~3.}
\label{tab:sm2-centering-k2}
\begin{tabular}{rr|ccc}
\toprule
$K_2$ & $|\Lambda_r|$ & Example~1 & Example~2 & Example~3 \\
\midrule
$3$ & $30$  & $+0.0490{\pm}0.0165$ & $+0.0505{\pm}0.0126$ & $-0.0000{\pm}0.0001$ \\
$4$ & $64$  & $-0.0005{\pm}0.0007$ & $-0.0056{\pm}0.0013$ & $+0.0250{\pm}0.0171$ \\
$5$ & $116$ & $-0.0004{\pm}0.0006$ & $-0.0031{\pm}0.0007$ & $-0.0001{\pm}0.0001$ \\
$6$ & $190$ & $-0.0001{\pm}0.0004$ & $-0.0005{\pm}0.0005$ & $-0.0002{\pm}0.0005$ \\
\bottomrule
\end{tabular}
\end{table}

\paragraph{Interpretation} The two tables show two regimes. At $K_2 = 3$ on Examples~1 and~2, the coarsest multi-index set, centering raises the log-score by $0.05$ nats. At $(K_2, q_2) = (5, 3)$, the setting of \Cref{tab:sm2-centering}, the absolute paired log-score difference is at most $0.005$ nats and the centered solve has the lower mean log-score at every sample size on all three examples; Example~3 at $K_2 = 4$ is the exception noted above. The tables compare held-out log-scores; they do not measure the two terms of \eqref{eq:centering-bias} separately.

\section{Greedy choice of the Stage-1 polynomial space}
\label{sm:adaptivity}

The experiments select the Stage-1 polynomial space from a prescribed list by cross-validation. This section describes an alternative for exploring the degree and interaction order when that list is not known in advance, following the general idea of adaptive polynomial enrichment \cite{BlatmanSudret2011} and using the quadratic score-matching objective to reuse earlier computations. It was not used in the reported experiments, and its statistical performance and runtime have not been assessed.

\subsection{A fixed quadratic during the search}
\label{sec:adapt-fixed}

We adapt $(K_1, q_1)$ in Stage~1, holding the sample and the coordinate transform fixed throughout the search; the transform is constructed from the estimation sample alone and applied unchanged to the validation sample. Choose a scalar $\eta > 0$ once, for example from the initial estimation-sample Gram matrix, and use
\[
  R_\Lambda = \eta\,\diag(|\alpha|_1^2)_{\alpha\in\Lambda}, \qquad H_\Lambda = \widehat A_\Lambda + R_\Lambda,
\]
with $\widehat A_\Lambda$ the Stage-1 Gram matrix \eqref{eq:A-b-formulas} on $\Lambda$. This differs from recomputing the dictionary-dependent scale $\kappa\|\widehat A_\Lambda\|_{\rm op}$ of \eqref{eq:tikhonov} after every enlargement: with the scale fixed, an enlarged quadratic restricts to the same quadratic on the old coefficients. Since $\widehat A_\Lambda \succeq 0$ and every retained degree is at least two, $H_\Lambda \succ 0$. Write the estimation objective and its minimizer as
\[
  Q_\Lambda(\theta) = \tfrac12\theta^\top H_\Lambda\theta - \widehat b_\Lambda^\top\theta, \qquad \widehat\theta_\Lambda = H_\Lambda^{-1}\widehat b_\Lambda,
\]
and the unpenalized quadratic on the validation sample as $\widehat J^{\rm val}_\Lambda(\theta) = \tfrac12\theta^\top\widehat A^{\rm val}_\Lambda\theta - (\widehat b^{\rm val}_\Lambda)^\top\theta$. In fixed coordinates satisfying the score-matching integration hypotheses this criterion estimates the Fisher divergence up to a candidate-independent constant; with an estimated transform we use it as an empirical tuning criterion.

\subsection{Exact reuse when a block is added}

From $\Lambda_t = \Lambda_{K_t,q_t}$ the candidate blocks are the complete degree and interaction increments
\[
  G_K = \Lambda_{K_t+1,q_t}\setminus\Lambda_t, \qquad G_q = \Lambda_{K_t,q_t+1}\setminus\Lambda_t;
\]
empty blocks are omitted, and candidate sizes are checked against the budget before their features are assembled.

\begin{proposition}[Reusable block solve]
\label{prop:sm2-block}
Let $H_{tt} = L_tL_t^\top$ be the cached Cholesky factorization on $\Lambda_t$ \cite{GolubVanLoan2013}, with $\widehat\theta_t = H_{tt}^{-1}\widehat b_t$. For a candidate block $G$, define
\[
  B_G = L_t^{-1}\widehat A_{tG}, \qquad S_G = H_{GG} - B_G^\top B_G, \qquad r_G = \widehat b_G - \widehat A_{Gt}\widehat\theta_t.
\]
Then $S_G \succ 0$; with $S_G = L_GL_G^\top$ its Cholesky factorization, the enlarged optimum and Cholesky factor are
\begin{equation}
  \nu_G = S_G^{-1}r_G, \qquad
  \widehat\theta_{t\cup G} = \begin{pmatrix}\widehat\theta_t - L_t^{-\top}B_G\nu_G\\ \nu_G\end{pmatrix}, \qquad
  L_{t\cup G} = \begin{pmatrix}L_t & 0\\ B_G^\top & L_G\end{pmatrix},
  \label{eq:sm2-deltatheta}
\end{equation}
and the decrease in the penalized estimation objective is
\begin{equation}
  Q_t(\widehat\theta_t) - Q_{t\cup G}(\widehat\theta_{t\cup G}) = \tfrac12 r_G^\top S_G^{-1}r_G \ge 0.
  \label{eq:sm2-deltaj}
\end{equation}
\end{proposition}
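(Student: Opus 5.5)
The proof is block Gaussian elimination on the enlarged penalized system. The first step is to fix the block form. Order the coefficients as $(\Lambda_t, G)$. Because the penalty scale $\eta$ is fixed during the search, $R_{t\cup G}$ is block diagonal with $R_\Lambda$ on $\Lambda_t$ and $\eta\,\diag(|\alpha|_1^2)$ on $G$. Hence
\[
H_{t\cup G} = \begin{pmatrix} H_{tt} & \widehat A_{tG}\\ \widehat A_{Gt} & H_{GG}\end{pmatrix},\qquad \widehat b_{t\cup G} = \begin{pmatrix}\widehat b_t\\ \widehat b_G\end{pmatrix}.
\]
The off-diagonal blocks are those of $\widehat A$, since $R$ has no off-diagonal part. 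Also $H_{t\cup G}\succ 0$ by the argument in \Cref{sec:adapt-fixed}: $\widehat A \succeq 0$ and every retained degree is at least two, so the penalty is positive definite.

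Next, the factor and the Schur complement. Substituting $B_G = L_t^{-1}\widehat A_{tG}$ into
\[
\begin{pmatrix}L_t&0\\ B_G^\top & L_G\end{pmatrix}\begin{pmatrix}L_t^\top & B_G\\ 0 & L_G^\top\end{pmatrix}
\]
reproduces $H_{tt}$ and $\widehat A_{tG}$ in the first block row and $B_G^\top B_G + L_GL_G^\top$ in the corner. So this is a Cholesky factorization of $H_{t\cup G}$ exactly when $L_GL_G^\top = S_G$. Moreover $S_G = H_{GG} - \widehat A_{Gt}H_{tt}^{-1}\widehat A_{tG}$ is the Schur complement of $H_{tt}$ in a positive definite matrix, so $S_G \succ 0$ and $L_G$ exists. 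Uniqueness of the Cholesky factor with positive diagonal then identifies $L_{t\cup G}$.

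Then solve the block system $H_{t\cup G}(x;\nu) = (\widehat b_t; \widehat b_G)$. The first row gives $x = H_{tt}^{-1}(\widehat b_t - \widehat A_{tG}\nu) = \widehat\theta_t - L_t^{-\top}L_t^{-1}\widehat A_{tG}\nu = \widehat\theta_t - L_t^{-\top}B_G\nu$. Substituting into the second row gives $S_G\nu = \widehat b_G - \widehat A_{Gt}\widehat\theta_t = r_G$. This yields \eqref{eq:sm2-deltatheta}.

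For the objective decrease, use that the minimum value of a strictly convex quadratic is $Q(\widehat\theta) = -\tfrac12 \widehat b^\top\widehat\theta$. The enlarged minimum is
\[
-\tfrac12\bigl[\widehat b_t^\top\widehat\theta_t - \widehat b_t^\top L_t^{-\top}B_G\nu_G + \widehat b_G^\top\nu_G\bigr].
\]
Since $\widehat b_t^\top L_t^{-\top}B_G = \widehat\theta_t^\top \widehat A_{tG}$, the bracketed correction equals $r_G^\top\nu_G = r_G^\top S_G^{-1}r_G$. Subtracting the two minima gives \eqref{eq:sm2-deltaj}, and nonnegativity follows from $S_G\succ0$. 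Alternatively, one can argue that $Q_{t\cup G}$ restricted to $\nu=0$ equals $Q_t$, since the penalty scale is fixed. Minimizing over $x$ for each $\nu$ leaves the reduced quadratic $\tfrac12\nu^\top S_G\nu - r_G^\top\nu$, whose minimum is $-\tfrac12 r_G^\top S_G^{-1}r_G$.

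No step here is a real obstacle. The only point needing care is that the restriction of $H_{t\cup G}$ to $\Lambda_t$ is exactly $H_{tt}$. This holds because the penalty scale $\eta$ is frozen rather than recomputed from $\|\widehat A_\Lambda\|_{\rm op}$, and because the transform and sample are held fixed, so that $\widehat A_{tt}$ and $\widehat b_t$ are unchanged when the block is added.
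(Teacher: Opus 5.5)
Your proof is correct and follows the paper's argument. Positive definiteness of the enlarged Hessian makes $S_G \succ 0$ as a Schur complement, multiplying out $L_{t\cup G}L_{t\cup G}^\top$ verifies the factor, and block elimination gives $\nu_G$, $\widehat\theta_{t\cup G}$ and the decrease $\tfrac12 r_G^\top S_G^{-1}r_G$; your main route solves the block normal equations and uses the minimum value $-\tfrac12\widehat b^\top\widehat\theta$, while the paper expands $Q_{t\cup G}$ at $(\widehat\theta_t+u,\nu)$ and minimizes over $u$, which is your ``alternatively'' argument and is equivalent.
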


\begin{proof}
The enlarged Hessian is positive definite, so its Schur complement $S_G$ is positive definite. Expanding the enlarged objective at $(\widehat\theta_t + u, \nu)$ and using $H_{tt}\widehat\theta_t = \widehat b_t$ gives
\[
  Q_t(\widehat\theta_t) + \tfrac12u^\top H_{tt}u + u^\top\widehat A_{tG}\nu + \tfrac12\nu^\top H_{GG}\nu - r_G^\top\nu.
\]
Minimizing over $u$ leaves $Q_t(\widehat\theta_t) + \tfrac12\nu^\top S_G\nu - r_G^\top\nu$, which gives the coefficient and objective formulas. Multiplying $L_{t\cup G}$ by its transpose gives the enlarged Hessian.
\end{proof}

An accepted candidate keeps the factorization computed for its trial, so acceptance needs no second full solve. The gain \eqref{eq:sm2-deltaj} concerns the estimation objective; acceptance uses the validation criterion.

\begin{algorithm}[t]
\caption{Proposed greedy Stage-1 dictionary search.}
\label{alg:adaptive-cas}
\small
\begin{algorithmic}[1]
\Require fixed estimation and validation coordinates; initial $(K_0,q_0)$; target rank $r$; fixed $\eta>0$; tolerance $\tau_{\rm val}\ge0$; degree, interaction, dictionary-size, memory and work limits.
\State Fit on $\Lambda_0$; retain its Cholesky factor and estimation and validation quadratic blocks, and evaluate its validation criterion.
\While{at least one nonempty neighboring expansion is affordable}
  \State Form the nonempty neighboring blocks $G_K, G_q$ within the prescribed bounds, recording any move excluded by a bound or budget.
  \State For each affordable candidate, assemble only the missing quadratic blocks, extend the factorization by \Cref{prop:sm2-block}, and evaluate $\widehat J^{\rm val}_{t\cup G}(\widehat\theta_{t\cup G})$.
  \State Select the candidate with the smallest validation criterion, breaking ties in favor of fewer added terms.
  \If{its improvement over the current criterion exceeds $\tau_{\rm val}$}
    \State Accept its coefficients and cached factorization; update $(K_t,q_t)$.
  \Else
    \State Record a local stop if every nonempty neighbor was evaluated, and a budget-limited stop otherwise; \textbf{break}.
  \EndIf
\EndWhile
\State If the loop ended because no expansion was affordable, record the limiting budget or search bound.
\State Form the final score matrix and its leading eigenspace, and run Stage~2 at that subspace as in the main text.
\end{algorithmic}
\end{algorithm}

No Stage-2 fit, density normalizer, or eigenspace computation is needed to compare Stage-1 candidates. Changing $\eta$, the transform, or the estimation sample defines a new quadratic and requires a new factorization.

\subsection{Cost}
\label{app:adaptive-cost}

With $m = |\Lambda_t|$ and $g = |G|$, the dense factorization extension costs $O(m^2g + mg^2 + g^3)$, the $m^2g$ term including the triangular solve for $B_G$. Since $m^2g + mg^2 + g^3 \le (m+g)^3 - m^3$, these costs sum along the accepted path, including the initial factorization, to $O(p_T^3)$ for the final dictionary size $p_T$, the order of one dense factorization at the final size, with no assumption on the growth of the dictionary sizes. This bound covers the accepted factorizations only. Rejected candidates, feature assembly (of order $(N_{\rm est}+N_{\rm val})\,d\,(mg+g^2)$ for the new Gram blocks), and validation add work, and the cached matrices and factors need storage quadratic in the dictionary size. At $d = 20$, $|\Lambda_{4,2}| = 1{,}200$, $|\Lambda_{5,2}| = 1{,}980$ and $|\Lambda_{4,3}| = 5{,}760$, so from $(4, 2)$ the degree increment adds $780$ terms and the interaction increment $4{,}560$; the search therefore carries explicit limits on candidate size, memory and cumulative work, including rejected candidates.

\subsection{Meaning of the stopping decision}
\label{sm:adapt-variants}

If every nonempty neighboring expansion within the declared search domain was evaluated and none improves validation by more than $\tau_{\rm val}$, the current dictionary is locally adequate for that criterion within that domain; a stop caused by a computational limit is reported separately. The tolerance is a resolution parameter. It is neither a confidence level nor a bound in KL units, and the greedy path can miss a better dictionary that requires simultaneous increases. Repeated use of the validation sample is tuning, so predictive performance should be assessed on unused data.

The fitted tail $\widehat E_r$ and the Stein cross-moment residual \cite{Stein1981} may be inspected as diagnostics, but neither certifies accuracy. The latter checks the identity $\EE[\scS(\vZ)\vZ^\top] = \EE[\vZ\vZ^\top] - \mI$, which follows from integration by parts under the population regularity assumptions and tests only those cross-moments. At the population level, \Cref{thm:kl-trunc} at the leading projector of $\mC_\Lambda = \EE[\scS_\Lambda\scS_\Lambda^\top]$ and the triangle inequality give
\[
  \DKL\bigl(\pi_{\vn}\,\|\,\pi_{\vn}(\,\cdot\,;\mV_{r,\Lambda})\bigr) \le \tfrac12\Bigl[\sqrt{E_r(\mC_\Lambda)} + \|\scS - \scS_\Lambda\|_{L^2(\pi_{\vZ})}\Bigr]^2,
\]
in which the score error is essential and the exact reduced model on the left is not the fitted Stage-2 density. Stage~2 keeps its own dictionary selection, regularization and constraints; extending the search to them, or choosing the rank jointly, is not addressed here.

\section{Marginals of the reduced model}
\label{sm:marginals}

The rank transform separates marginal estimation from dependence estimation, and \S\ref{sec:reduced-form} composes the two by multiplying the marginals by a factor that depends on $\mV_r^\top\scZ(\vx)$. The composition does not leave the marginals unchanged. This section gives the marginals of a member of $\mathcal M(\mV_r, \scZ, \{\pi_i\})$, identifies the reductions under which they equal $\{\pi_i\}$ for every member, and bounds their deviation at the KL divergence projection.

Throughout, $\va_i^\top$ denotes the $i$-th row of $\mV_r$, so $\va_i = \mV_r^\top\ve_i \in \RR^r$ for $i = 1, \ldots, d$, and $\|\va_i\|^2 = (\mV_r\mV_r^\top)_{ii}$ measures how much observation coordinate $i$ participates in the active subspace.

\begin{proposition}[Marginals of the reduced model]
\label{prop:sm-marginals}
Let the member $\pi'$ of $\mathcal M(\mV_r, \scZ, \{\pi_i\})$ have reduced factor $c' \in L^2(\gauss_r)$, and let $\widehat c_\alpha := \EE_{c'\gauss_r}[H_\alpha]$ be its Hermite coefficients. Then:

\emph{(i)} the $i$-th marginal of $\pi'$ equals $\pi_i \cdot (g_i\circ\scZ_i)$, where, with the series converging in $L^2(\gauss_1)$,
\begin{equation}
  g_i(z) \;=\; 1 + \sum_{k\ge1} d_{i,k}\,h_k(z),
  \qquad
  d_{i,k} \;:=\; \sum_{|\alpha|_1 = k}\sqrt{k!/\alpha!}\;\va_i^{\alpha}\,\widehat c_\alpha,
  \label{eq:sm-marginal-drift}
\end{equation}
and $d_{i,k} = \|\va_i\|^k\,\EE_{c'\gauss_r}\bigl[h_k(\va_i^\top\vU/\|\va_i\|)\bigr]$ whenever $\va_i \neq 0$;

\emph{(ii)} $g_i \equiv 1$ for every reduced factor $c'$ if and only if $\va_i = 0$. Since $\sum_{i=1}^d\|\va_i\|^2 = r$, at most $d-r$ rows of $\mV_r$ vanish, with equality exactly when $\mathrm{span}(\mV_r)$ is spanned by $r$ canonical basis vectors. Under that choice, in which the reduction selects $r$ observation components, the $d-r$ unselected marginals equal $\pi_i$ for every member of the family, and the $r$ selected ones are not guaranteed to;

\emph{(iii)} at the KL divergence projection \eqref{eq:pi-star-form}, $\DKL(\pi_i\,\|\,\pi_i\,(g_i\circ\scZ_i)) \le \tfrac12\tr((\mI - \mV_r\mV_r^\top)\mC)$ for every $i$, which equals $\tfrac12\sum_{j>r}\lambda_j(\mC)$ when $\mV_r$ spans the leading eigenspace, and every marginal equals $\pi_i$ when that trace vanishes.
\end{proposition}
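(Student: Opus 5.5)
The plan is to work in $\vZ$-coordinates, where the member $\pi'$ has density $c'(\mV_r^\top\vz)\,\gauss_d(\vz)$, and to pull back to $\vx$-coordinates at the end through the componentwise map $\scZ$. Since $\scZ_i$ pushes $\pi_i$ to $\gauss_1$, the $i$-th marginal of $\pi'$ in $\vx$ equals $\pi_i(x_i)\,g_i(\scZ_i(x_i))$, where $g_i(z) := \EE_{\gauss_d}[c'(\mV_r^\top\vZ)\mid Z_i = z]$. This reduces (i) to computing a Gaussian conditional expectation.

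\textbf{Step 1: the series for $g_i$.} I expand $c' = \sum_\alpha \widehat c_\alpha H_\alpha$ in $L^2(\gauss_r)$. The coefficients are $\widehat c_\alpha = \EE_{\gauss_r}[c' H_\alpha] = \EE_{c'\gauss_r}[H_\alpha]$, with $\widehat c_0 = 1$. The key tool is the Mehler/Ornstein--Uhlenbeck identity. Under $\gauss_d$, the vector $\vU = \mV_r^\top\vZ$ is standard Gaussian on $\RR^r$ and $\mathrm{Cov}(\vU, Z_i) = \va_i$. For a unit vector $\ve$ with $\vU = \ve Z_i + (\text{independent part})$, the conditional expectation $\EE[H_\alpha(\vU)\mid Z_i = z]$ equals the projection of $H_\alpha$ onto degree-$k$ functions of $z$, with $k = |\alpha|_1$. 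Using the generating function $\exp(\vs^\top\vu - \|\vs\|^2/2)$, I will show
\[
\EE[H_\alpha(\vU)\mid Z_i = z] = \sqrt{k!/\alpha!}\,\va_i^\alpha\,h_k(z),
\]
by conditioning $e^{\vs^\top\vU - \|\vs\|^2/2}$ on $Z_i$ to get $e^{(\vs^\top\va_i) z - (\vs^\top\va_i)^2/2}$ and matching coefficients of $\vs^\alpha$. Conditional expectation is an $L^2$ contraction, so the sum over $\alpha$ converges in $L^2(\gauss_1)$ and gives \eqref{eq:sm-marginal-drift}. For the alternative formula, set $\ve := \va_i/\|\va_i\|$. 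Then $d_{i,k}$ is the $\vs^k$-coefficient computation of $\EE_{c'\gauss_r}[\,\cdot\,]$ applied to $\sum_{|\alpha|=k}\sqrt{k!/\alpha!}\,\va_i^\alpha H_\alpha(\vu)$, and this sum equals $\|\va_i\|^k h_k(\ve^\top\vu)$ by the same generating-function identity (the rotation invariance of degree-$k$ Hermite spans).

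\textbf{Step 2: part (ii).} If $\va_i = 0$, every $d_{i,k}$ vanishes. If $\va_i \neq 0$, I take $c' \propto 1 + \delta\, h_2(\ve^\top\vu)$ with $\delta$ small, so that $c' > 0$ fails only mildly. A positive choice is safer: $c'(\vu) = e^{t\,\ve^\top\vu - t^2/2}$, a shifted Gaussian. Then $d_{i,1} = \|\va_i\|\,t \neq 0$, so $g_i \not\equiv 1$. The counting follows from $\sum_i\|\va_i\|^2 = \tr(\mV_r\mV_r^\top) = r$ together with $\|\va_i\|\le 1$. At least $r$ rows are nonzero, and exactly $r$ are nonzero iff each nonzero row has norm $1$. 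That happens iff the column space contains those canonical vectors, that is, iff it is spanned by $r$ canonical basis vectors.

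\textbf{Step 3: part (iii).} The $i$-th marginal is the pushforward under the coordinate map $\vx\mapsto x_i$. By the data-processing inequality for KL divergence, $\DKL(\pi_i\,\|\,\pi_i(g_i\circ\scZ_i)) \le \DKL(\pi_{\vn}\,\|\,\pi_{\vn}(\cdot;\mV_r))$, which \Cref{thm:kl-trunc} bounds by $\tfrac12\tr((\mI-\mV_r\mV_r^\top)\mC)$. \Cref{cor:optimal-subspace} then gives the eigenvalue form. When the trace vanishes, the divergence is zero, so the two densities agree a.e.

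The main obstacle is Step 1: establishing the conditional-expectation formula cleanly and justifying the $L^2$ convergence together with the interchange of sum and conditional expectation. I would handle this by working in $L^2(\gauss_d)$ and using that conditioning on $Z_i$ is an orthogonal projection, which commutes with the $L^2$ limit.
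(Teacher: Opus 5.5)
Your proposal is correct and follows the paper's proof step for step. For (i) you condition the Hermite generating function on $Z_i$ using $\vU\mid Z_i=z\sim\mathcal N(\va_i z,\mI_r-\va_i\va_i^\top)$, and you get $L^2$ convergence from the fact that conditional expectation is an $L^2$ contraction. For (ii) you use $\sum_i\|\va_i\|^2=r$ with $\|\va_i\|\le 1$, and for (iii) you apply data processing and then \Cref{thm:kl-trunc} and \Cref{cor:optimal-subspace}. The one difference is your witness in (ii). Your mean shift $c'(\vu)=e^{t\,\va_i^\top\vu/\|\va_i\|-t^2/2}$, which gives $d_{i,1}=\|\va_i\|t\neq0$, is a valid member of $\mathcal M$ and suffices for the statement as written. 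The paper instead uses a variance change along $\va_i/\|\va_i\|$, which gives $d_{i,2}\neq0$. That witness is centered and its log-density lies in the second-order modes retained in $\Lambda_r$, so it also shows the drift persists within the Stage-2 model class, which excludes linear modes.
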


\begin{proof}
\emph{(i)} Integrating $\pi'$ over $\vx_{-i}$ and changing variables by $\pi_j(x_j)\,dx_j = \gauss_1(z_j)\,dz_j$ in each $j \neq i$ gives $\pi_i(x_i)\,\EE_{\gauss_d}[c'(\vU)\mid Z_i = z_i]$ with $z_i = \scZ(\vx)_i$. Under $\gauss_d$ the pair $(\vU, Z_i)$ is jointly Gaussian with $\vU\sim\gauss_r$, $Z_i\sim\gauss_1$ and $\Cov(\vU, Z_i) = \mV_r^\top\ve_i = \va_i$, so $\vU \mid Z_i = z \sim \mathcal N(\va_i z,\ \mI_r - \va_i\va_i^\top)$. Applying that conditional law to the Hermite generating function $\sum_\alpha t^\alpha H_\alpha(\vu)/\sqrt{\alpha!} = \exp(t\cdot\vu - \|t\|^2/2)$ gives $\exp(sz - s^2/2)$ with $s := t\cdot\va_i$, whose expansion is $\sum_k s^k h_k(z)/\sqrt{k!}$. Substituting $s^k = \sum_{|\alpha|_1=k}(k!/\alpha!)\,t^\alpha\va_i^\alpha$ and matching coefficients of $t^\alpha$,
\begin{equation}
  \EE_{\gauss_d}\bigl[H_\alpha(\vU)\mid Z_i = z\bigr] \;=\; \sqrt{|\alpha|_1!/\alpha!}\;\va_i^{\alpha}\,h_{|\alpha|_1}(z).
  \label{eq:sm-cond-hermite}
\end{equation}
Multiplying by $\widehat c_\alpha$ and summing gives \eqref{eq:sm-marginal-drift}, with the constant term $\widehat c_0 = 1$ because $c'$ is a probability density with respect to $\gauss_r$; the Hermite series of $c'$ converges in $L^2(\gauss_r)$ and conditional expectation is an $L^2$ contraction, so the series in \eqref{eq:sm-marginal-drift} converges in $L^2(\gauss_1)$. The one-dimensional form follows from $\sum_{|\alpha|_1=k}\sqrt{k!/\alpha!}\,\va^\alpha H_\alpha(\vu) = \|\va\|^k h_k(\va^\top\vu/\|\va\|)$.

\emph{(ii)} If $\va_i = 0$ then $\va_i^\alpha = 0$ for every $|\alpha|_1 \ge 1$, so every $d_{i,k}$ vanishes. If $\va_i \neq 0$, set $\vv := \va_i/\|\va_i\|$ and take $c'$ to be the density of $\mathcal N(0, \mI_r + (\sigma^2-1)\vv\vv^\top)$ relative to $\gauss_r$ for some $\sigma^2 \in (0,2)$, $\sigma^2 \neq 1$, which keeps $c'$ in $L^2(\gauss_r)$ and its log-density in the second-order modes retained in $\Lambda_r$; then $\va_i^\top\vU/\|\va_i\| \sim \mathcal N(0,\sigma^2)$ under $c'\gauss_r$ and $d_{i,2} = \|\va_i\|^2(\sigma^2-1)/\sqrt2 \neq 0$. For the counting statement, $\|\va_i\|^2 \in [0,1]$ as a diagonal entry of an orthogonal projector and $\sum_i\|\va_i\|^2 = \tr(\mV_r\mV_r^\top) = r$, so if $m$ rows vanish the remaining $d-m$ rows have norms at most one summing in square to $r$, giving $m \le d-r$; equality forces $\|\va_i\| \in \{0,1\}$ for every $i$, so the $r$ nonzero rows are unit vectors and the submatrix of $\mV_r$ on those rows is orthogonal, whence $\mathrm{span}(\mV_r) = \mathrm{span}\{\ve_i : \|\va_i\| = 1\}$. At a selected coordinate $\va_i$ is a unit vector and $g_i$ is the density of $\va_i^\top\vU$ under $c'\gauss_r$ relative to $\gauss_1$, which equals one exactly when that marginal is standard Gaussian.

\emph{(iii)} The coordinate map $\vx \mapsto x_i$ is deterministic, so the data-processing inequality for the KL divergence bounds the divergence of the $i$-th marginals by the divergence of the joint laws, which \Cref{thm:kl-trunc} bounds by $\tfrac12\tr((\mI - \mV_r\mV_r^\top)\mC)$, equal to $\tfrac12\sum_{j>r}\lambda_j(\mC)$ at the leading eigenspace (\Cref{cor:optimal-subspace}). When that trace vanishes the bound is zero, so each marginal equals $\pi_i$.
\end{proof}

\paragraph{Consequences for the estimator} The reductions of \S\ref{sec:experiments} use a $\mV_r$ with no vanishing row, so no marginal of $\widehat\pi_{\vn}$ is guaranteed to equal $\widehat\pi_i$. Part~(iii) bounds the deviation at the KL divergence projection; the marginals of the estimate carry in addition the estimation errors of the reduced factor, the transform, and the marginals.

Restoring the marginals exactly means imposing $d_{i,k} = 0$ for every $i$ and $k$. Those conditions are linear in the Hermite coefficients $\widehat c_\alpha$ of the reduced factor, by \eqref{eq:sm-marginal-drift}, and Stage 2 parameterizes $\log c'$, in which the same conditions are nonlinear. They therefore cannot be appended to the linear system \eqref{eq:stage2-closedform} as the centering constraint \eqref{eq:centering-mat} is. A parameterization of $c'$ itself, for instance $c' = 1 + \sum_{\alpha\in\Lambda_r}\beta_\alpha H_\alpha$, makes those conditions linear and the normalization exact, since $\EE_{\gauss_r}[H_\alpha] = 0$ for every $\alpha \neq 0$. Under that parameterization the score-matching objective \eqref{eq:JW-pop} is no longer quadratic in the coefficients, being defined through the log-density, and positivity of $c'$, automatic for an exponential form, becomes a constraint of the kind the scheme of \Cref{app:solver} imposes. We leave that estimator to future work.

\clearpage
\def\siamprelabel{}

\begin{thebibliography}{10}

\bibitem{BakryGentilLedoux2014}
{\sc D.~Bakry, I.~Gentil, and M.~Ledoux}, {\em Analysis and Geometry of Markov Diffusion Operators}, Springer, Cham, 2014.

\bibitem{BaptistaBrennanMarzouk2025}
{\sc R.~Baptista, M.~C. Brennan, and Y.~Marzouk}, {\em Dimension reduction via score ratio matching}, Trans. Mach. Learn. Res., (2025).

\bibitem{BaptistaMarzoukZahm2023representation}
{\sc R.~Baptista, Y.~Marzouk, and O.~Zahm}, {\em On the representation and learning of monotone triangular transport maps}, Found. Comput. Math., 24 (2024), pp.~2063--2108.

\bibitem{Bhatia1997}
{\sc R.~Bhatia}, {\em Matrix Analysis}, Grad. Texts in Math. 169, Springer, New York, 1997.

\bibitem{BigoniMarzoukPrieurZahm2022}
{\sc D.~Bigoni, Y.~Marzouk, C.~Prieur, and O.~Zahm}, {\em Nonlinear dimension reduction for surrogate modeling using gradient information}, Inf. Inference, 11 (2022), pp.~1597--1639.

\bibitem{BjorckGolub1973}
{\sc {\AA}.~Bj{\"o}rck and G.~H. Golub}, {\em Numerical methods for computing angles between linear subspaces}, Math. Comp., 27 (1973), pp.~579--594.

\bibitem{BlankenshipFalk1976}
{\sc J.~W. Blankenship and J.~E. Falk}, {\em Infinitely constrained optimization problems}, J. Optim. Theory Appl., 19 (1976), pp.~261--281.

\bibitem{BlatmanSudret2011}
{\sc G.~Blatman and B.~Sudret}, {\em Adaptive sparse polynomial chaos expansion based on least angle regression}, J. Comput. Phys., 230 (2011), pp.~2345--2367.

\bibitem{Blekherman2006}
{\sc G.~Blekherman}, {\em There are significantly more nonnegative polynomials than sums of squares}, Israel J. Math., 153 (2006), pp.~355--380.

\bibitem{BormannBauerGeer2011}
{\sc N.~Bormann, P.~Bauer, and A.~J. Geer}, {\em Estimates of observation-error characteristics in clear and cloudy regions for microwave imager radiances from numerical weather prediction}, Q. J. R. Meteorol. Soc., 137 (2011), pp.~1946--1955.

\bibitem{BrennanBigoniZahmSpantiniMarzouk2020}
{\sc M.~C. Brennan, D.~Bigoni, O.~Zahm, A.~Spantini, and Y.~Marzouk}, {\em Greedy inference with structure-exploiting lazy maps}, in Advances in Neural Information Processing Systems 33, 2020, pp.~8330--8342.

\bibitem{CAS_partII}
{\sc J.~Chen and P.~J. van Leeuwen}, {\em Copula Active Subspaces II: Error decomposition, a posteriori estimation, and sharpness of the bounds}, submitted to SIAM/ASA J. Uncertain. Quantif., 2026.

\bibitem{ChenArnaudBaptistaZahm2024}
{\sc Q.~Chen, \'E.~Arnaud, R.~Baptista, and O.~Zahm}, {\em Coupled input-output dimension reduction: Application to goal-oriented Bayesian experimental design and global sensitivity analysis}, arXiv:2406.13425, 2024.

\bibitem{Constantine2015book}
{\sc P.~G. Constantine}, {\em Active Subspaces: Emerging Ideas for Dimension Reduction in Parameter Studies}, SIAM, Philadelphia, 2015.

\bibitem{ConstantineDowWang2014}
{\sc P.~G. Constantine, E.~Dow, and Q.~Wang}, {\em Active subspace methods in theory and practice: Applications to kriging surfaces}, SIAM J. Sci. Comput., 36 (2014), pp.~A1500--A1524.

\bibitem{CotterRobertsStuartWhite2013}
{\sc S.~L. Cotter, G.~O. Roberts, A.~M. Stuart, and D.~White}, {\em MCMC methods for functions: Modifying old algorithms to make them faster}, Statist. Sci., 28 (2013), pp.~424--446.

\bibitem{CuiMartinMarzoukSolonenSpantini2014}
{\sc T.~Cui, J.~Martin, Y.~Marzouk, A.~Solonen, and A.~Spantini}, {\em Likelihood-informed dimension reduction for nonlinear inverse problems}, Inverse Problems, 30 (2014), 114015.

\bibitem{CuiTongZahm2022}
{\sc T.~Cui, X.~T. Tong, and O.~Zahm}, {\em Prior normalization for certified likelihood-informed subspace detection of Bayesian inverse problems}, Inverse Problems, 38 (2022), 124002.

\bibitem{DesroziersBerreChapnikPoli2005}
{\sc G.~Desroziers, L.~Berre, B.~Chapnik, and P.~Poli}, {\em Diagnosis of observation, background and analysis-error statistics in observation space}, Q. J. R. Meteorol. Soc., 131 (2005), pp.~3385--3396.

\bibitem{DreanoTandeoPulido2017}
{\sc D.~Dreano, P.~Tandeo, M.~Pulido, B.~Ait-El-Fquih, T.~Chonavel, and I.~Hoteit}, {\em Estimating model-error covariances in nonlinear state-space models using Kalman smoothing and the expectation--maximisation algorithm}, Q. J. R. Meteorol. Soc., 143 (2017), pp.~1877--1885.

\bibitem{EckartYoung1936}
{\sc C.~Eckart and G.~Young}, {\em The approximation of one matrix by another of lower rank}, Psychometrika, 1 (1936), pp.~211--218.

\bibitem{EnglHankeNeubauer1996}
{\sc H.~W. Engl, M.~Hanke, and A.~Neubauer}, {\em Regularization of Inverse Problems}, Math. Appl. 375, Kluwer Academic Publishers, Dordrecht, 1996.

\bibitem{FowlerVanLeeuwen2013}
{\sc A.~M. Fowler and P.~J. van Leeuwen}, {\em Observation impact in data assimilation: The effect of non-Gaussian observation error}, Tellus A, 65 (2013), 20035.

\bibitem{FritschCarlson1980}
{\sc F.~N. Fritsch and R.~E. Carlson}, {\em Monotone piecewise cubic interpolation}, SIAM J. Numer. Anal., 17 (1980), pp.~238--246.

\bibitem{GolubVanLoan2013}
{\sc G.~H. Golub and C.~F. Van Loan}, {\em Matrix Computations}, 4th ed., Johns Hopkins University Press, Baltimore, MD, 2013.

\bibitem{GolubWelsch1969}
{\sc G.~H. Golub and J.~H. Welsch}, {\em Calculation of {G}auss quadrature rules}, Math. Comp., 23 (1969), pp.~221--230.

\bibitem{Gross1975}
{\sc L.~Gross}, {\em Logarithmic {S}obolev inequalities}, Amer. J. Math., 97 (1975), pp.~1061--1083.

\bibitem{HalkoMartinssonTropp2011}
{\sc N.~Halko, P.-G. Martinsson, and J.~A. Tropp}, {\em Finding structure with randomness: Probabilistic algorithms for constructing approximate matrix decompositions}, SIAM Rev., 53 (2011), pp.~217--288.

\bibitem{Hall1987}
{\sc P.~Hall}, {\em On {K}ullback--{L}eibler loss and density estimation}, Ann. Statist., 15 (1987), pp.~1491--1519.

\bibitem{HastieTibshiraniFriedman2009}
{\sc T.~Hastie, R.~Tibshirani, and J.~Friedman}, {\em The Elements of Statistical Learning}, 2nd ed., Springer, New York, 2009.

\bibitem{HettichKortanek1993}
{\sc R.~Hettich and K.~O. Kortanek}, {\em Semi-infinite programming: Theory, methods, and applications}, SIAM Rev., 35 (1993), pp.~380--429.

\bibitem{Hilbert1888}
{\sc D.~Hilbert}, {\em \"{U}ber die {D}arstellung definiter {F}ormen als {S}umme von {F}ormenquadraten}, Math. Ann., 32 (1888), pp.~342--350.

\bibitem{HuVanLeeuwenGeer2024}
{\sc C.-C. Hu, P.~J. van Leeuwen, and A.~J. Geer}, {\em A non-parametric way to estimate observation errors based on ensemble innovations}, Q. J. R. Meteorol. Soc., 150 (2024), \url{https://doi.org/10.1002/qj.4710}.

\bibitem{Hyvarinen2005}
{\sc A.~Hyv\"arinen}, {\em Estimation of non-normalized statistical models by score matching}, J. Mach. Learn. Res., 6 (2005), pp.~695--709.

\bibitem{Hyvarinen2007}
{\sc A.~Hyv\"arinen}, {\em Some extensions of score matching}, Comput. Statist. Data Anal., 51 (2007), pp.~2499--2512.

\bibitem{JanjicBormannBocquet2018survey}
{\sc T.~Janji\'c, N.~Bormann, M.~Bocquet, J.~A. Carton, S.~E. Cohn, S.~L. Dance, S.~N. Losa, N.~K. Nichols, R.~Potthast, J.~A. Waller, and P.~Weston}, {\em On the representation error in data assimilation}, Q. J. R. Meteorol. Soc., 144 (2018), pp.~1257--1278.

\bibitem{Janson1997gaussian}
{\sc S.~Janson}, {\em Gaussian Hilbert Spaces}, Cambridge Tracts in Math. 129, Cambridge University Press, Cambridge, 1997.

\bibitem{KaipioSomersaloBook}
{\sc J.~Kaipio and E.~Somersalo}, {\em Statistical and Computational Inverse Problems}, Appl. Math. Sci. 160, Springer, New York, 2005.

\bibitem{LiuLaffertyWasserman2009nonparanormal}
{\sc H.~Liu, J.~Lafferty, and L.~Wasserman}, {\em The nonparanormal: Semiparametric estimation of high dimensional undirected graphs}, J. Mach. Learn. Res., 10 (2009), pp.~2295--2328.

\bibitem{MarzoukMoselhyParnoSpantini2016}
{\sc Y.~Marzouk, T.~Moselhy, M.~Parno, and A.~Spantini}, {\em Sampling via measure transport: An introduction}, in Handbook of Uncertainty Quantification, R.~Ghanem, D.~Higdon, and H.~Owhadi, eds., Springer, Cham, 2016, pp.~1--41.

\bibitem{Mezzadri2007}
{\sc F.~Mezzadri}, {\em How to generate random matrices from the classical compact groups}, Notices Amer. Math. Soc., 54 (2007), pp.~592--604.

\bibitem{MorrisonBaptistaMarzouk2017}
{\sc R.~Morrison, R.~Baptista, and Y.~Marzouk}, {\em Beyond normality: Learning sparse probabilistic graphical models in the non-{G}aussian setting}, in Advances in Neural Information Processing Systems 30, 2017, pp.~2359--2369.

\bibitem{Neal2011HMC}
{\sc R.~M. Neal}, {\em {MCMC} using {H}amiltonian dynamics}, in Handbook of Markov Chain Monte Carlo, S.~Brooks, A.~Gelman, G.~L. Jones, and X.-L. Meng, eds., Chapman \& Hall/CRC, Boca Raton, FL, 2011, pp.~113--162.

\bibitem{Nelsen2006book}
{\sc R.~B. Nelsen}, {\em An Introduction to Copulas}, 2nd ed., Springer, New York, 2006.

\bibitem{NocedalWright2006}
{\sc J.~Nocedal and S.~J. Wright}, {\em Numerical Optimization}, 2nd ed., Springer, New York, 2006.

\bibitem{Owen1995}
{\sc A.~B. Owen}, {\em Randomly permuted $(t,m,s)$-nets and $(t,s)$-sequences}, in Monte Carlo and Quasi-Monte Carlo Methods in Scientific Computing, H.~Niederreiter and P.~J.-S. Shiue, eds., Lecture Notes in Statist. 106, Springer, New York, 1995, pp.~299--317.

\bibitem{PapamakariosEtAl2021}
{\sc G.~Papamakarios, E.~Nalisnick, D.~J. Rezende, S.~Mohamed, and B.~Lakshminarayanan}, {\em Normalizing flows for probabilistic modeling and inference}, J. Mach. Learn. Res., 22 (2021), pp.~1--64.

\bibitem{RadhakrishnanBeagleholePanditBelkin2024}
{\sc A.~Radhakrishnan, D.~Beaglehole, P.~Pandit, and M.~Belkin}, {\em Mechanism for feature learning in neural networks and backpropagation-free machine learning models}, Science, 383 (2024), pp.~1461--1467.

\bibitem{RobertsTweedie1996}
{\sc G.~O. Roberts and R.~L. Tweedie}, {\em Exponential convergence of {L}angevin distributions and their discrete approximations}, Bernoulli, 2 (1996), pp.~341--363.

\bibitem{Silverman1986}
{\sc B.~W. Silverman}, {\em Density Estimation for Statistics and Data Analysis}, Chapman \& Hall, London, 1986.

\bibitem{Sklar1959}
{\sc A.~Sklar}, {\em Fonctions de r\'epartition \`a $n$ dimensions et leurs marges}, Publ. Inst. Statist. Univ. Paris, 8 (1959), pp.~229--231.

\bibitem{Sobol1967}
{\sc I.~M. Sobol'}, {\em On the distribution of points in a cube and the approximate evaluation of integrals}, USSR Comput. Math. Math. Phys., 7 (1967), pp.~86--112.

\bibitem{Sriperumbudur2017}
{\sc B.~Sriperumbudur, K.~Fukumizu, A.~Gretton, A.~Hyv\"arinen, and R.~Kumar}, {\em Density estimation in infinite dimensional exponential families}, J. Mach. Learn. Res., 18 (2017), pp.~1--59.

\bibitem{Stein1981}
{\sc C.~M. Stein}, {\em Estimation of the mean of a multivariate normal distribution}, Ann. Statist., 9 (1981), pp.~1135--1151.

\bibitem{Stone1990}
{\sc C.~J. Stone}, {\em Large-sample inference for log-spline models}, Ann. Statist., 18 (1990), pp.~717--741.

\bibitem{Szego1939}
{\sc G.~Szeg\H{o}}, {\em Orthogonal Polynomials}, Amer. Math. Soc. Colloq. Publ. 23, American Mathematical Society, Providence, RI, 1939.

\bibitem{TandeoAilliotBocquet2020}
{\sc P.~Tandeo, P.~Ailliot, M.~Bocquet, A.~Carrassi, T.~Miyoshi, M.~Pulido, and Y.~Zhen}, {\em A review of innovation-based methods to jointly estimate model and observation error covariance matrices in ensemble data assimilation}, Mon. Weather Rev., 148 (2020), pp.~3973--3994.

\bibitem{VanDerVaart1998}
{\sc A.~W. van der Vaart}, {\em Asymptotic Statistics}, Cambridge University Press, Cambridge, 1998.

\bibitem{VuLei2013}
{\sc V.~Q. Vu and J.~Lei}, {\em Minimax sparse principal subspace estimation in high dimensions}, Ann. Statist., 41 (2013), pp.~2905--2947.

\bibitem{ZahmConstantinePrieurMarzouk2020}
{\sc O.~Zahm, P.~G. Constantine, C.~Prieur, and Y.~Marzouk}, {\em Gradient-based dimension reduction of multivariate vector-valued functions}, SIAM J. Sci. Comput., 42 (2020), pp.~A534--A558.

\bibitem{ZahmCuiLawSpantiniMarzouk2022}
{\sc O.~Zahm, T.~Cui, K.~Law, A.~Spantini, and Y.~Marzouk}, {\em Certified dimension reduction in nonlinear {B}ayesian inverse problems}, Math. Comp., 91 (2022), pp.~1789--1835.

\end{thebibliography}
\end{document}